\newcommand{\nc}{\newcommand}
\newcommand{\ol}{\overline}
\newcommand{\es}{\emptyset}
\newcommand{\sm}{\setminus}
\newcommand{\ve}{\varepsilon}
\newcommand{\bc}{\bigcup}
\newcommand{\bca}{\bigcap}
\newcommand{\Lra}{\Leftrightarrow}
\newcommand{\Ra}{\Rightarrow}
\newcommand{\ra}{\rightarrow}
\newcommand{\sse}{\subseteq}
\newcommand{\spe}{\supseteq}
\newcommand{\fa}{\forall}
\newcommand{\mr}{\mathrm}
\newcommand{\mc}{\mathcal}
\newcommand{\mf}{\mathfrak}
\newcommand{\DMO}{\DeclareMathOperator}
\newcommand{\DST}{\displaystyle}
\newcommand{\NN}{\mathbb{N}}
\newcommand{\NNZ}{\NN_0}
\newcommand{\ZZ}{\mathbb{Z}}
\newcommand{\RR}{\mathbb{R}}
\mathchardef\breakingcomma\mathcode`\,
\newcommand{\inl}[1]{\lstinline$#1$}
\newcommand{\und}{{\:\wedge\:}} 
\newcommand{\mb}{{\:|\:}} 
\newcommand{\set}[1]{\{ #1 \}}
\nc{\simlvi}[1]{\!\sim_{#1}}
\DeclareMathOperator{\symdif}{\vartriangle} 
\DeclareMathOperator{\addcup}{{\stackrel{\text{\raisebox{-2.2ex}[-0ex][-0ex]{\large$\cdot$}}}{\cup}}} 
\nc{\apprel}[3]{{#1}(#2)_{(#3)}} 
\nc{\cmpli}[1]{\complement^1_{#1}} 
\nc{\cmplzi}[1]{\complement^0_{#1}} 
\nc{\cmplzoi}[1]{\complement^*_{#1}} 
\nc{\zf}{\mr{ZF}}
\nc{\zfmf}{\zf^0} 
\nc{\zfc}{\mr{ZFC}}
\nc{\zfcmf}{\zfc^0} 
\nc{\bst}{\mr{BST}} 
\newcommand{\nni}{\NNZ \cup \{+\infty\}} 
\newcommand{\tb}[2]{\set{#1, \dots, #2}} 
\providecommand{\abs}[1]{\lvert #1 \rvert} 
\providecommand{\norm}[1]{\lVert #1 \rVert} 
\DeclareRobustCommand{\genericinterval}[2]{%
  \@ifstar{\genericinterval@star{#1}{#2}}{\genericinterval@nostar{#1}{#2}}}
\newcommand{\genericinterval@star}[4]{\mathopen{}\mathclose{\left#1#3,#4\right#2}}
\newcommand{\genericinterval@nostar}[4]{\mathopen{#1}#3,#4\mathclose{#2}}
\nc{\untit}[2]{{#1}^{#2 \downarrow}} 
\nc{\obit}[2]{{#1}^{#2 \uparrow}} 
\nc{\inzEKi}[1]{\mc{I}^{\mr{V}}_{#1}}
\nc{\inzKEi}[1]{\mc{I}^{\mr{E}}_{#1}}
\nc{\adjEi}[1]{\mc{A}^{\mr{V}}_{#1}}
\nc{\BD}[1]{{#1}\text{-}\mr{BD}}
\nc{\konv}[2]{{#1}[{#2}]} 
\nc{\actpres}[1]{\Phi_{#1}} 
\nc{\Prim}{\mc{PR}} 
\nc{\sselr}{\sse^{\mapsto}}
\nc{\sserl}{\sse^{\mapsfrom}}
\nc{\spelr}{\spe^{\mapsto}}
\nc{\sperl}{\spe^{\mapsfrom}}
\nc{\ball}[1]{\mr{B}^{#1}} 
\nc{\oball}[1]{\breve{\mr{B}}^{#1}} 
\nc{\pball}[1]{\dot{\mr{B}}^{#1}} 
\nc{\prr}[1]{\dot{\RR}^{#1}} 
\nc{\sph}[1]{\mr{S}^{#1}} 
\nc{\ssim}[1]{s\sigma_{#1}} 
\nc{\koerper}[1]{\norm{#1}}
\nc{\Ccovdim}{\mc{CD}}
\nc{\Cinddim}{\mc{SID}}
\nc{\CInddim}{\mc{LID}}
\DeclareMathOperator{\diffop}{D} 
\DeclareMathOperator*{\diffoplimit}{D} 
\nc{\diffopc}[1]{\sideset{_{#1}}{}\diffoplimit} 
\nc{\diffopp}[1]{\diffop_{#1}} 
\nc{\diffopcp}[2]{\sideset{_{#2}}{_{#1}}\diffoplimit} 
\nc{\meanH}[2]{\mf{M}_{#1,#2}} 
\nc{\emean}[2]{\mf{M}_{\exp_{#1},#2}} 
\DeclareMathOperator{\mor}{Mor}
\DeclareMathOperator{\Hom}{Hom} 
\nc{\autoerw}[1]{\mr{Aut}^{#1}} 
\nc{\komma}[2]{(#1 \downarrow #2)} 
\nc{\Kmat}{\mf{MAT}} 
\nc{\Khmat}{\mf{HMAT}} 
\nc{\homfun}[1]{\mor_{#1}(-_1,-_2)} 
\nc{\homfunae}[1]{\mor_{#1}(-_1)} 
\nc{\homfunbe}[1]{\mor_{#1}(-_2)} 
\nc{\homfunxy}[3]{\mor_{#1}(#2(-_1), #3(-_2))}
\nc{\homfunx}[2]{\mor_{#1}(#2(-_1), -_2)}
\nc{\homfuny}[2]{\mor_{#1}(-_1, #2(-_2))}
\nc{\homfuna}[2]{\mor_{#1}(#2, -)} 
\nc{\homfunb}[2]{\mor_{#1}(-, #2)} 
\nc{\hhomfuna}[2]{\Hom_{#1}(#2, -)} 
\nc{\hhomfunb}[2]{\Hom_{#1}(-, #2)} 
\newcommand{\Va}{\mc{V\hspace{-0.1em}A}}
\newcommand{\Lit}{\mc{LIT}}
\newcommand{\Cl}{\mc{CL}}
\newcommand{\Cls}{\mc{CLS}}
\newcommand{\Sat}{\mc{SAT}}
\newcommand{\Usat}{\mc{USAT}}
\newcommand{\Musat}{\mc{M\hspace{0.8pt}U}} 
\newcommand{\Musati}[1]{\Musat_{\!#1}} 
\newcommand{\Smusat}{\mc{S}\Musat} 
\newcommand{\Smusati}[1]{\Smusat_{\!#1}}
\nc{\Clsoo}{\Cls^{1,1}} 
\DeclareMathOperator{\lit}{lit}
\DeclareMathOperator{\var}{var}
\DMO{\dos}{ds} 
\DMO{\mdos}{mds} 
\newcommand{\Clash}{\mc{HIT}} 
\newcommand{\Uclash}{\mc{U}\Clash} 
\newcommand{\Uclashi}[1]{\Uclash_{\!\!#1}}
\DeclareMathOperator{\res}{\diamond} 
\DeclareMathOperator{\dpl}{DP} 
\newcommand{\dpi}[1]{\dpl_{\!#1}}
\DMO{\premr}{ax} 
\DMO{\concr}{C} 
\DMO{\allcr}{cl} 
\DMO{\thardness}{thd} 
\DMO{\phardness}{phd} 
\DMO{\whardness}{awid} 
\DMO{\dep}{dep} 
\DMO{\hts}{hs} 
\DMO{\semspace}{css} 
\DMO{\resspace}{crs} 
\DMO{\treespace}{cts} 
\nc{\bth}[1]{\langle{#1}\rangle} 
\DMO{\rsub}{r_S} 
\DMO{\rk}{r} 
\DMO{\ro}{\rk_1} 
\DMO{\rki}{\rk_{\infty}} 
\DMO{\rpl}{r^{pl}} 
\DMO{\ropl}{\rk_1^{pl}} 
\nc{\rslur}{\xrightarrow{\text{SLUR}}} 
\nc{\rslurs}{\rslur_{\!*}} 
\DMO{\slur}{slur} 
\nc{\Slur}{\mc{SLUR}} 
\nc{\rkslur}[1]{\xrightarrow{\text{SLUR}_{#1}}} 
\nc{\rkslurs}[1]{\rkslur{#1}_{\!*}} 
\nc{\Altsluri}[1]{\Slur(#1)}
\nc{\Altslurstari}[1]{\Slur\text{\textasteriskcentered}(#1)}
\nc{\Canoni}[1]{\mr{CANON}(#1)}
\nc{\rkslurstar}[1]{\xrightarrow{\text{SLUR\textasteriskcentered}#1}} 
\nc{\rkslursstar}[1]{\rkslurstar{#1}_{\!*}} 
\DMO{\slurstar}{\slur\!\text{\textasteriskcentered}}
\nc{\Urefc}{\mc{UC}}
\nc{\Propc}{\mc{PC}}
\nc{\Wrefc}{\mc{WC}} 
\DeclareMathOperator{\ldeg}{ld} 
\DeclareMathOperator{\vdeg}{vd} 
\DeclareMathOperator{\minvdeg}{\mu\!\vdeg} 
\DMO{\varmvd}{\var_{\minvdeg}} 
\DMO{\nfc}{fc} 
\DMO{\maxnfc}{\nu\!\nfc} 
\nc{\svbf}{\mc{VB}} 
\nc{\svbfs}{\mc{VB}^*} 
\DMO{\potp}{pp} 
\DMO{\potprec}{NM} 
\DMO{\minnonmer}{VDM} 
\DMO{\minnonmerh}{VDH} 
\DMO{\maxsmar}{FCM} 
\DMO{\maxsmarh}{FCH} 
\DMO{\varsing}{\var_s} 
\DMO{\varosing}{\var_{1s}} 
\DMO{\varnosing}{\var_{\neg1s}} 
\nc{\Musatns}{\Musat'} 
\nc{\Musatnsi}[1]{\Musati{#1}'}
\nc{\Smusatns}{\Smusat'} 
\nc{\Smusatnsi}[1]{\Smusati{#1}'}
\nc{\Uclashns}{\Uclash'} 
\nc{\Uclashnsi}[1]{\Uclashi{#1}'}
\nc{\tsdp}{\xrightarrow{\text{sDP}}}
\nc{\tsdps}{\tsdp_{\!*}}
\nc{\tosdp}{\xrightarrow{\text{1sDP}}}
\nc{\tosdps}{\tosdp_{\!*}}
\DMO{\sdp}{sDP} 
\DMO{\osdp}{sDP_1} 
\nc{\cflmusat}{\mc{CF}\Musat} 
\nc{\cflmusati}[1]{\mc{CF}\Musati{#1}}
\nc{\cflimusat}{\mc{CFI}\Musat} 
\DMO{\sNF}{sNF} 
\DMO{\eqp}{eqp} 
\DMO{\sgp}{sp} 
\DMO{\singind}{si} 
\DMO{\osingind}{si_1} 
\DMO{\shyp}{svh} 
\DMO{\sdph}{ssh} 
\DMO{\msdph}{mss} 
\DMO{\osdph}{ssh_1} 
\DMO{\mosdph}{mss_1} 
\DMO{\mps}{mps} 
\DMO{\purec}{puc} 
\DMO{\doping}{D}
\nc{\glue}[4]{\operatorname{glue}((#1,#2), (#3,#4))} 
\nc{\gluea}[3]{#1 \mathbin{\boxplus}_{#3} #2} 
\newcommand{\cor}{\mathbin{\ovee}} 
\DMO{\frl}{fl} 
\nc{\Con}{\mr{Con}}
\nc{\Log}{\mr{Log}}
\nc{\Lin}{\mr{Lin}}
\nc{\Pol}{\mr{Pol}}
\nc{\ExL}{\mr{ExL}}
\nc{\ExP}{\mr{ExP}}
\nc{\CTime}{\mr{CTime}}
\nc{\CSpace}{\mr{CSpace}}
\nc{\LTime}{\mr{LTime}}
\nc{\LSpace}{\mr{L}}
\nc{\NLSpace}{\mr{NL}}
\nc{\LinTime}{\mr{LinTime}}
\nc{\LinSpace}{\mr{LinSpace}}
\nc{\PTime}{\mr{P}}
\nc{\PSpace}{\mr{PSpace}}
\nc{\Np}{\mr{NP}}
\nc{\Conp}{\text{coNP}}
\nc{\NPSpace}{\mr{NPSpace}}
\nc{\CoNPSpace}{\mr{coNPSpace}}
\nc{\ELTime}{\mr{ELTime}}
\nc{\ELSpace}{\mr{ELSpace}}
\nc{\EPTime}{\mr{EPTime}}
\nc{\EPSpace}{\mr{EPSpace}}
\nc{\NEPTime}{\mr{NEPTime}}
\nc{\polydelta}[1]{\Delta_{#1}^{\mr P}}
\nc{\polypi}[1]{\Pi_{#1}^{\mr P}}
\nc{\polysigma}[1]{\Sigma_{#1}^{\mr P}}
\nc{\Ph}{\mr{PH}}
\nc{\Dp}{D^P}
\nc{\PllC}[2]{{\text{$\mr{PT}$/$\mr{WK}$}(#1, #2)}} 
\nc{\Nc}{\mr{NC}}
\nc{\Nci}[1]{\Nc^{#1}}
\nc{\Ac}{\mr{AC}}
\nc{\Aci}[1]{\Ac^{#1}}
\nc{\pmodpoly}{P / \mathrm{poly}}
\nc{\Wh}[1]{\mr{W}[#1]} 
\nc{\Rl}{\mr{RL}}
\nc{\coRl}{\mr{coRL}}
\nc{\Rp}{\mr{RP}}
\nc{\coRp}{\mr{coRP}}
\nc{\Zpp}{\mr{ZPP}}
\nc{\Bpp}{\mr{BPP}}
\nc{\Pp}{\mr{PP}}
\nc{\Reach}{\mr{STCON}} 
\nc{\Undreach}{\mr{USTCON}} 
\nc{\Pcol}[2]{\mr{COL}(#1,#2)} 
\nc{\Pscol}[2]{\mr{SCOL}(#1,#2)} 
\nc{\Psorcol}[2]{\mr{SORCOL}(#1,#2)} 
\DMO{\slp}{slp}
\nc{\Mss}{\mr{MSS}}
\nc{\Key}{\mr{KEY}}
\nc{\Keyi}[1]{\Key_{\!#1}}
\nc{\Nbmss}{N_{\mr{bm}}} 
\nc{\Nbkey}{N_{\mr{bk}}} 
\nc{\Rnb}{N_{\mr{b}}}
\nc{\Rnk}{N_{\mr{k}}}
\nc{\Rnr}{N_{\mr{r}}}
\nc{\Byte}{\mr{B}[8]}
\nc{\QByte}{\mr{B}[4,8]}
\nc{\KByte}{\mc{B}} 
\nc{\RQByte}{\mc{QB}} 
\nc{\ramz}[3]{\mr{ram}_{#1}^{#2}(#3)} 
\nc{\waez}[2]{\mr{vdw}_{#1}(#2)} 
\nc{\gtz}[2]{\mr{grt}_{#1}(#2)} 
\nc{\pdwaez}[2]{\mr{vdw}_{#1}^{\mr{pd}}(#2)} 
\nc{\absfeh}[1]{\delta_{#1}} 
\nc{\relfeh}[1]{\ve_{#1}} 
\newcounter{dDef} 
\newcounter{dLem} 
\newcounter{dThm} 
\newcounter{dPro} 
\newcounter{Beispielzaehler}
\nc{\bm}{\boldmath}
\nc{\bmm}[1]{\mbox{\bm$\DST #1$}}
\nc{\mi}[1]{\bmm{\mathrm{(#1):}} \quad}
\newcommand{\Schrift}{report}
\newcommand{\Liste}{minimal unsatisfiability \sep hitting clause-set \sep disjoint/orthogonal tautology \sep deficiency \sep Finiteness Conjecture \sep singular variables \sep full subsumption resolution \sep irreducible CNF \sep clause-factor}
\providecommand{\sep}{, }
\DMO{\maxnhitdef}{NV}
\nc{\Dt}[1]{\mc{F}_{#1}} 
\DMO{\isouhit}{NI}
\newcommand{\Irred}{\mc{IRD}} 
\DMO{\nsv}{\mathit{n}_s}
\DMO{\nosv}{\mathit{n}_{1s}}
\DMO{\nnosv}{\mathit{n}_{\neg1s}}
\newcommand{\Clpr}{\mc{CIR}} 
\newcommand{\Pruclash}{\mc{IUH}} 
\newcommand{\Pruclashi}[1]{\Pruclash_{#1}}
\newcommand{\Ipruclash}{\mc{NIUH}} 
\newcommand{\Ipruclashi}[1]{\Ipruclash_{#1}}
\begin{document}

\pagestyle{headings}

\title{Unsatisfiable hitting clause-sets with three more clauses than variables}

\author{\href{http://cs.swan.ac.uk/~csoliver}{Oliver Kullmann}\inst{1} \and \href{http://logic.sysu.edu.cn/2005/english/PEOPLE/200510/english_277.html}{Xishun Zhao}\thanks{Partially supported by NSFC Grant 61272059 and NSSFC Grant 13\&ZD186.}\inst{2}}
\institute{Swansea University \and Sun Yat-sen University, Guangzhou}

\maketitle

\begin{abstract}
  Hitting clause-sets (as CNFs), known in DNF language as ``disjoint'' or ``orthogonal'', are clause-sets $F$, such that any $C, D \in F$, $C \ne D$, have a literal $x \in C$ with $\ol{x} \in D$. The set of unsatisfiable such $F$ is denoted by $\Uclash \subset \Musat$ (minimal unsatisfiability). A basic fact is $\delta(F) \ge 1$ for $F \in \Musat$, where the deficiency $\delta(F) := c(F) - n(F)$ is the difference between the number of clauses and the number of variables. Via the known singular DP-reduction, generalising unit-clause propagation, every $F \in \Uclash$ can be reduced to its (unique) ``non-singular normal form'' $\sNF(F) \in \Uclashns$, where $\delta(\sNF(F)) = \delta(F)$, and $\Uclashns \subset \Uclash$ is the subset of non-singular elements, i.e., every variable occurs positively as well as negatively at least twice.

  The \emph{Finiteness Conjecture} (FC) is that for every $k \in \NN$ the number $n(F)$ of variables for $F \in \Uclashns$ with $\delta(F) = k$ is bounded. This conjecture is part of the project of classifying $\Uclashi{\delta=k}$. In this \Schrift{} we prove FC for $k = 3$ (known for $k \le 2$). For this, a central novel concept is transferred from number theory (Berger et al 1990 \cite{BergerFelzenbaumFraenkel1990Covers}), namely the fundamental notion of \emph{clause-irreducible clause-sets} $F$, having no non-trivial \emph{clause-factors} $F'$, which are $F' \sse F$ logically equivalent to some clause. The derived factorisations allow to reduce FC to the clause-irreducible case. Another new tool is \emph{nearly-full-subsumption resolution} (nfs-resolution), which allows to change certain pairs $C, D$ of clauses. Clause-sets which become clause-reducible after a series of nfs-resolutions are called \emph{nfs-reducible}, and we can furthermore reduce FC to the nfs-irreducible case.
\begin{keywords}
  \Liste
\end{keywords}
\end{abstract}



\section{Introduction}
\label{sec:intro}

Disjoint or orthogonal DNFs (every two terms/conjuncts/cubes have a conflict) have been playing an important role for boolean functions and their applications from the beginning, exploiting that the tautology problem (and also the counting problem) is computable in polynomial time; see \cite[Section 1.6, Chapter 7]{CramaHammer2011BooleanFunctions} for some overview. As CNFs, they are more known as hitting clause-sets, denoted by $\Clash$, and one of their earliest use is \cite{Iw89} (for counting solutions; see \cite[Section 13.4.2]{SS09HBSAT} for an extension). In this \Schrift, we study the unsatisfiable elements of $\Clash$, denoted by $\Uclash$; see \cite[Section 11.4.2]{Kullmann2007HandbuchMU} for some basic information. Our main context is the study of minimally unsatisfiable clause-sets ($\Musat$; see \cite{Kullmann2007HandbuchMU}), which is organised in layers by the deficiency $\delta$, and where the central Finiteness Conjecture is that every such layer can be described by finitely many ``patterns''. For $\Uclash \subset \Musat$ this means that every layer contains only finitely many isomorphism types (after a basic reduction), and this is the main problem studied in this \Schrift. The basic definitions are as follows.

$\Clash$ is the set of clause-sets $F$, such that for all $C, D \in F$, $C \ne D$, there is $x \in C$ with $\ol{x} \in D$. The set of unsatisfiable hitting clause-sets, denoted by $\Uclash$, is the set of $F \in \Clash$ with $\sum_{C \in F} 2^{-\abs{C}} = 1$. As measures we use $c(F) := \abs{F} \in \NNZ$ for the number of clauses of $F$, and $n(F) := \abs{\var(F)} \in \NNZ$ for the number of variables of $F$, while the deficiency is defined as $\delta(F) := c(F) - n(F) \in \ZZ$. For $F \in \Uclash$ holds $\delta(F) \ge 1$ (an instructive exercise for the reader, or see \cite{Kullmann2007HandbuchMU}). Finally $\Uclashns \subset \Uclash$, the set of nonsingular $F \in \Uclash$, is given by the condition, that for every $v \in \var(F)$ there exist (at least) four different clauses $A,B,C,D \in F$ with $v \in A, B$ and $\ol{v} \in C, D$. A central problem of the field is the \emph{Finiteness Conjecture} (FC; Conjecture 25 in \cite{KullmannZhao2011Bounds}):

\begin{definition}\label{def:maxvarhitdef}
  $\bmm{\maxnhitdef(k)} \in \nni$ is the supremum of $n(F)$ for $F \in \Uclashnsi{\delta=k}$.
\end{definition}

\begin{conjecture}\label{con:basicn}
  For every $k \in \NN$ we have  $\maxnhitdef(k) < +\infty$.
\end{conjecture}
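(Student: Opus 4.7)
The plan is strong induction on $k \in \NN$. The base cases $k = 1$ and $k = 2$ are already settled in the literature (for $k=1$ up to isomorphism there is essentially one non-singular element, namely $\{\{x\},\{\ol x\}\}$; for $k=2$ a complete classification is known). For the inductive step I assume $\maxnhitdef(j) < +\infty$ for all $j < k$ and aim to bound $n(F)$ uniformly over $F \in \Uclashnsi{\delta = k}$. The overall strategy is to exhaust the reductions announced in the abstract --- singular DP-reduction, clause-factorisation, and nfs-resolution --- so that bounding $n$ is reduced to bounding $n$ on clause-sets that are simultaneously non-singular, clause-irreducible, and nfs-irreducible, i.e.\ ``fully reduced'' at deficiency $k$.

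First I would use singular DP-reduction: this preserves $\delta$ and lands in $\Uclashns$, so it suffices to work inside $\Uclashnsi{\delta=k}$. Next, if $F$ admits a non-trivial clause-factor $F' \sse F$ logically equivalent to a clause $C$, one can ``glue'' the complement of $F'$ against the single clause $C$ to express $F$ as a controlled combination of two hitting clause-sets $F_1, F_2$ whose deficiencies $\delta(F_1), \delta(F_2)$ are strictly less than $k$ and sum in a way determined by $F'$; then $n(F) \le n(F_1) + n(F_2) + O(1)$, and induction on $k$ bounds each summand. Thus it suffices to consider clause-irreducible $F$. Finally, an nfs-resolution step replaces two clauses $C,D$ by modified versions without changing $n(F)$, $c(F)$, or the class $\Uclashns$; iterating until no nfs-step applies terminates in finitely many steps along a suitable well-founded order (e.g.\ descending multiset of clause lengths), so the problem collapses to bounding $n(F)$ on the fully reduced class.

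The core task is then a structural bound of the form $n(F) \le N(k)$ for fully reduced $F$ with $\delta(F) = k$. The levers available are: the probabilistic identity $\sum_{C \in F} 2^{-|C|} = 1$, non-singularity (every variable occurs at least twice positively and twice negatively), the hitting property (every two distinct clauses conflict somewhere), and the two irreducibility constraints (no sub-family of $F$ is logically a single clause, and no nfs-pair exists). I would attempt to prove a two-part estimate: an upper bound $\maxvdeg(F) \le h(k)$ on the largest clause length, combined with a bound $c(F) \le g(k)$ on the number of clauses; then $n(F) = c(F) - k \le g(k) - k$ and induction closes. For the clause-length bound, I would pigeonhole over the signed occurrences of a hypothetical high-degree variable $v$: the positive $v$-clauses and negative $\ol v$-clauses each form hitting sub-families, and largeness forces either a clause-factor (contradicting clause-irreducibility) or an nfs-pair (contradicting nfs-irreducibility). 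For the clause-count bound, I would leverage the identity $\sum 2^{-|C|} = 1$ against the now-bounded $\maxvdeg(F)$ to control $c(F)$.

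The main obstacle --- and the reason this statement is a conjecture rather than a theorem --- is that closing the gap between non-singularity (which forces dense variable occurrence) and the irreducibility constraints (which forbid certain recurring patterns) appears to require a genuinely new combinatorial extraction principle. In particular, the delicate step is showing that any fully reduced family of unbounded size \emph{must} contain a hidden clause-factor or nfs-pair; the cases successfully handled in this paper ($k \le 3$) rely on case-analysis that grows rapidly with $k$. The most promising general route I would pursue is to transfer the covering-system machinery of Berger--Felzenbaum--Fraenkel \cite{BergerFelzenbaumFraenkel1990Covers} systematically to the clausal setting, where the natural probability measure $C \mapsto 2^{-|C|}$ plays the role of density in a cover: their smallest-block induction, if successfully translated, would identify a canonical ``shortest clause'' whose neighbourhood can be factored out, giving the inductive descent in $k$ directly. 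Failing a full transfer, a reasonable intermediate goal is a conditional theorem of the shape ``$\maxnhitdef(k) < +\infty$ whenever $\maxvdeg$ is bounded on fully reduced $\Uclashnsi{\delta=k}$'', isolating the residual extremal question.
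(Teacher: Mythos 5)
There is a genuine gap, and you in effect concede it yourself: the statement you were asked to prove is precisely the paper's Finiteness Conjecture, which the paper does \emph{not} prove for general $k$ --- it proves it only for $k\le 3$ --- and your text ends by acknowledging that the decisive step (``a genuinely new combinatorial extraction principle'') is missing. What you do supply is the same reduction scaffolding the paper builds: singular DP-reduction with control of the singularity index, clause-factorisation with the deficiency split $\delta(F)=\delta(F_0)+\delta(G)+s-1$ and strict decrease in the nonsingular nontrivial case, and nfs-flips; carried out carefully this yields exactly the paper's Corollaries \ref{cor:uppboundNV} and \ref{cor:uppboundNVnfs}, namely that Conjecture \ref{con:basicn} is \emph{equivalent} to bounding $n$ on $\Pruclashi{\delta=k}$ (resp.\ $\Ipruclashi{\delta=k}$) for each $k\ge 3$. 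The ``core task'' you then pose for the fully reduced class is not a routine pigeonhole: no bound on clause length, clause count or $\maxvdeg$ is known there for $k\ge 4$, and the paper closes only $k=3$ by showing $\Ipruclashi{\delta=3}=\es$ (Theorem \ref{thm:nfsdecompo234}) via the specific Lemmas \ref{lem:22fns} and \ref{lem:createfs}, which exploit that the minimal variable degree is at most $5$ at deficiency $3$; nothing analogous is available for general $k$. So your argument establishes (at best) a reduction, not the conjecture.

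Two concrete defects in the sketch itself. First, your termination argument for nfs-flips is wrong: an nfs-flip leaves $n$, $c$, $\delta$ \emph{and the multiset of clause sizes} invariant (the paper states this explicitly), and it is reversible, so ``descending multiset of clause lengths'' is not a well-founded measure and the flip process need not terminate; the paper never needs termination --- nfs-reducibility is defined existentially (some finite series of flips reaching a clause-reducible clause-set), and the possible loss of nonsingularity after a flip is handled by Lemma \ref{lem:nfsflspec} inside Theorem \ref{thm:equivconjnfs}, a point your sketch skips. Second, a small base-case slip: $\set{\set{x},\set{\ol{x}}}$ is \emph{singular} (both literal degrees equal $1$); the nonsingular representative at deficiency $1$ is $\set{\bot}$, so $\maxnhitdef(1)=0$ --- harmless for finiteness, but it signals that the nonsingularity condition (all four occurrences) needs more care than the sketch gives it. Also note that your claim $n(F)\le n(F_1)+n(F_2)+O(1)$ after factoring silently requires the singularity-index bounds of Corollary \ref{cor:nsvsi} together with Lemma \ref{lem:propclfacuhit}; without them the cofactor and residual factor need not be nonsingular and the induction hypothesis does not apply directly.
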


\begin{example}\label{exp:uhit12}
  By \cite{DDK98} we know $\maxnhitdef(1) = 0$ (via $\set{\bot}$). By \cite{KleineBuening2000SubclassesMU} up to isomorphism there are two elements in $\Uclashnsi{\delta=2}$: $\bmm{\Dt{2}} := \set{\set{1,2}, \set{-1,-2}, \set{-1,2}, \set{-2,1}}$ and $\bmm{\Dt{3}} := \set{\set{1,2,3}, \set{-1,-2,-3},\set{-1,2},\set{-2,3},\set{-3,1}}$. Thus $\maxnhitdef(2) = 3$,
\end{example}

Using $\set{C} \cor F := \set{C \cup D : D \in F}$ for clauses $C$ and clause-sets $F$ with $\var(C) \cap \var(F) = \es$, we obtain more examples with high $\maxnhitdef(k)$:
\begin{lemma}\label{lem:Nkinc}
  For $m \in \NN$ let $K_m$ be defined as follows: $K_1 := \Dt{3}$, while $K_{m+1}$ is obtained from $K_m$ by taking a copy $F'$ of $\Dt{3}$ with $\var(F') \cap \var(K_m) = \es$, take a new variable $v \notin \var(K_m) \cup \var(F')$, and let $K_{m+1} := (\set{\set{v}} \cor K_m) \cup (\set{\set{\ol{v}}} \cor F')$. Then $K_m \in \Uclashnsi{\delta=m+1}$ with $n(K_m) = 3 + (m-1) \cdot 4$. So we get $\maxnhitdef(k) \ge 3 + (k-2) \cdot 4 = 4 k - 5$ for $k \ge 2$.
\end{lemma}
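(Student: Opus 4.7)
My plan is to prove the lemma by induction on $m$. The base case $m=1$ is delivered directly by Example~\ref{exp:uhit12}: $K_1 = \Dt{3} \in \Uclashnsi{\delta = 2}$ with $n(K_1) = 3 = 3 + 0 \cdot 4$. For the inductive step I first pin down the counting: $n(K_{m+1}) = n(K_m) + 3 + 1$ (from the fresh copy of $\Dt{3}$ together with the new variable $v$), $c(K_{m+1}) = c(K_m) + c(\Dt{3}) = c(K_m) + 5$, and hence $\delta(K_{m+1}) = \delta(K_m) + 1$. These recurrences telescope to $n(K_m) = 4m - 1 = 3 + (m-1)\cdot 4$, $c(K_m) = 5m$ and $\delta(K_m) = m+1$, matching the claim.

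The substantive content is that $K_{m+1}$ inherits the three defining properties of $\Uclashns$. For the hitting property, take two distinct clauses of $K_{m+1}$: if both carry the literal $v$, they are of the form $\set{v} \cup D_i$ with distinct $D_i \in K_m$, whose clash (in some variable other than $v$, since $v \notin \var(K_m)$) is inherited; the symmetric case reduces to the hitting property of $F' \cong \Dt{3}$; and a mixed pair clashes on $v$ itself. Unsatisfiability follows from the weight characterisation $\sum_{C \in F} 2^{-\abs{C}} = 1$ for $F \in \Uclash$ (built into the definition of $\Uclash$): prepending one literal halves each summand, so $\sum_{C \in \set{\set{v}} \cor K_m} 2^{-\abs{C}} = \tfrac{1}{2}$ by the inductive hypothesis, and similarly for the $F'$-side, giving total $1$. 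For nonsingularity, every $w \in \var(K_m)$ has $\ge 2$ positive and $\ge 2$ negative occurrences in $K_m$ by induction, and these multiplicities are preserved by the $\cor$-operation (which only prepends $v$); the same holds for variables in $F'$; and $v$ itself occurs positively in all $c(K_m) = 5m \ge 5$ clauses of $\set{\set{v}} \cor K_m$ and negatively in all $c(F') = 5$ clauses of $\set{\set{\ol{v}}} \cor F'$, each comfortably $\ge 2$.

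I do not anticipate a real obstacle: the construction is engineered precisely so that $\cor$-splicing across a fresh variable preserves $\Clash$, preserves the defining weight identity of $\Uclash$, and preserves nonsingularity, while pushing the deficiency up by exactly $1$ per step. The only point requiring a moment's care is the counting of positive/negative occurrences of the newly introduced variable $v$, for which one only needs $c(K_m), c(\Dt{3}) \ge 2$, both obvious. The lower bound is then immediate: setting $m = k-1$ gives $K_{k-1} \in \Uclashnsi{\delta = k}$ with $n(K_{k-1}) = 4k - 5$, so $\maxnhitdef(k) \ge 4k - 5$ for all $k \ge 2$.
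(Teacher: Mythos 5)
Your proof is correct and is the straightforward inductive argument the lemma invites; the paper itself leaves this verification to the reader, and you have supplied exactly the check one would expect: the counting recurrences $n(K_{m+1})=n(K_m)+4$, $c(K_{m+1})=c(K_m)+5$, $\delta(K_{m+1})=\delta(K_m)+1$, the case split on clause pairs for the hitting property, the weight identity $\sum 2^{-\abs{C}}=1$ (which is precisely the paper's definition of $\Uclash$ within $\Clash$), and the preservation of nonsingularity including the new variable $v$.
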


We believe the $K_m$ have the maximal number of variables for deficiency $m+1$, and so we consider the following strengthening of Conjecture \ref{con:basicn}:
\begin{conjecture}\label{con:concfinhitstr}
  For $k \in \NN$, $k \ge 2$, we have $\maxnhitdef(k) = 4 k - 5$.
\end{conjecture}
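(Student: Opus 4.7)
The plan is to establish the matching upper bound $\maxnhitdef(k) \le 4k-5$, since Lemma \ref{lem:Nkinc} already provides the lower bound. I would proceed by induction on $k$, with base case $k = 2$ handled by Example \ref{exp:uhit12} ($\maxnhitdef(2) = 3 = 4 \cdot 2 - 5$), and execute the inductive step via the two structural tools announced in the abstract: clause-factorisation and nfs-resolution.

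For the inductive step, let $F \in \Uclashnsi{\delta = k}$ with $k \ge 3$. First, suppose $F$ is clause-reducible, i.e.\ has a non-trivial clause-factor $F' \sse F$ logically equivalent to some clause. I expect every such factorisation to produce, up to a variable renaming, a gluing $F = (\set{\set{v}} \cor F_1) \cup (\set{\set{\ol{v}}} \cor F_2)$ with $F_i \in \Uclashnsi{\delta = k_i}$ and $k_1 + k_2 = k + 1$, where both $k_i \ge 2$; the latter constraint is forced by nonsingularity at $v$, since $F_i = \set{\bot}$ would leave $v$ with only one occurrence of one polarity in $F$. Applying the induction hypothesis to the smaller $F_i$ then yields
\[
  n(F) = n(F_1) + n(F_2) + 1 \le (4k_1 - 5) + (4k_2 - 5) + 1 = 4(k+1) - 9 = 4k - 5.
\]
If $F$ is clause-irreducible but nfs-reducible, then by definition a sequence of nfs-resolutions produces a clause-reducible clause-set, and under the (expected) invariance of $n$, $\delta$, and nonsingularity along such sequences, the preceding case yields $n(F) \le 4k - 5$.

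The hard case, and the main obstacle, is $F$ being simultaneously clause-irreducible and nfs-irreducible. The extremal examples $K_m$ of Lemma \ref{lem:Nkinc} are iterated gluings and hence highly clause-reducible, so one expects a much sharper bound in the irreducible regime. My target is a structural inequality of the form $n(F) \le c \cdot \delta(F) - d$ with $c < 4$, or ideally $n(F) = O(1)$, for such $F$. I would try to obtain this by combining the counting identity $\sum_{C \in F} 2^{-\abs{C}} = 1$ with the forbidden configurations imposed by the absence of nfs-reductions (no near-resolvable pairs available to merge) and by the absence of clause-factors (no logically coherent proper subfamily), augmented by a case analysis on the shortest clauses in $F$. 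This structural step is the combinatorial heart of the argument and the chief technical hurdle: it must rule out every clause-irreducible, nfs-irreducible example outside an explicit small-$n$ regime, which can then be verified directly against the target bound $4k - 5$ by a finite computation, in the same spirit as the paper's treatment of $k = 3$.
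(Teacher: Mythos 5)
The first and fundamental problem is one of status: the statement you set out to prove is stated in the paper as a \emph{conjecture}, not a theorem. The paper establishes $\maxnhitdef(k) = 4k-5$ only for $k = 3$ (Corollary~\ref{cor:NV3}); for $k \ge 4$ the statement remains open. Your proposal correctly locates the open core --- the simultaneously clause-irreducible and nfs-irreducible case --- but leaves it unresolved, explicitly flagging it as ``the chief technical hurdle.'' That hurdle is exactly what the paper does not clear for general $k$: the reduction you describe (factor out clause-factors, then exhaust nfs-flips, then attack the irreducible/nfs-irreducible residue) is precisely the content of Theorems~\ref{thm:uppboundNV} and~\ref{thm:equivconjnfs} (summarized in Corollary~\ref{cor:uppboundNVnfs}), and the paper closes the loop only for $k=3$ by showing $\Ipruclashi{\delta=3} = \es$ (Theorem~\ref{thm:nfsdecompo234}). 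No structural inequality of the type $n(F) \le c\,\delta(F) - d$ with $c<4$ for $F \in \Ipruclash$ is available, and your proposal offers only a hope that one exists. As written, this is not a proof of the conjecture but a re-derivation of the reduction.

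Even setting aside the open core, the reduction as you describe it has several concrete errors. First, a nontrivial clause-factorisation $(F_0,C)\cor G$ need not arise from a full variable, so the special form $F = (\set{\set{v}} \cor F_1) \cup (\set{\set{\ol v}} \cor F_2)$ does not cover the general case. Second, you write $n(F) = n(F_1) + n(F_2) + 1$, which implicitly assumes $\var(F_1)\cap\var(F_2)=\es$; in general $s := |\var(F_0)\cap\var(G)|$ can be positive, and controlling $s$ is the main work in Lemma~\ref{lem:propclfacuhit} and Theorem~\ref{thm:uppboundNV}, Parts~\ref{thm:uppboundNV2} and~\ref{thm:uppboundNV3b} (where the overlap contributes the term $-3s$). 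Third, your inference that ``both $k_i \ge 2$'' from $F_i \ne \set{\bot}$ is invalid: a deficiency-$1$ element of $\Uclash$ need not equal $\set{\bot}$ unless it is nonsingular, and $F_0$, $G$ need not be nonsingular. The paper gets $\delta(F_0),\delta(G)\ge 2$ only under the added hypothesis that $F$ is not strictly fs-resolvable (via $\varosing(F_0)=\varosing(G)=\es$, Lemma~\ref{lem:propclfacuhit}, Part~\ref{lem:propclfacuhit2}); the strictly fs-resolvable case requires a separate branch using Lemma~\ref{lem:si2sr}. Fourth, because the pieces of a factorisation can be singular, you cannot apply an inductive hypothesis stated for $\Uclashns$ directly; the paper must pass to $\sNF$ and bound $\singind$ via Corollary~\ref{cor:nsvsi}. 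Finally, your assumption that nfs-flips preserve nonsingularity is false --- Lemma~\ref{lem:nfsflspec} is there precisely because an nfs-flip can create a singular variable, and the proof of Theorem~\ref{thm:equivconjnfs} must handle that branch separately.
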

The values of $k \mapsto 4k-5$ for $2 \le k \le 6$ are $3,7,11,15,19$. The main result of this paper is that Conjecture \ref{con:concfinhitstr} holds for $k=3$ (Corollary \ref{cor:NV3}). New tools have been developed to show this. First we investigate singular DP-reduction \cite{KullmannZhao2012ConfluenceC,KullmannZhao2012ConfluenceJ}, and especially its inversion called ``singular extensions'', in Sections \ref{sec:singvar}, \ref{sec:uhitnsingvar}. The main novel concept of this \Schrift{} is \emph{irreducibility}, an important and intuitive concept, introduced and developed in Section \ref{sec:redsclscl}: one can not factor out a sub-clause-set logically equivalent to a single clause. We extracted it from our work, and later realised that up to the setting it is basically the same as investigated in \cite{Korec1984Covers,BergerFelzenbaumFraenkel1990Covers}. For this \Schrift{} the main point is that FC can be reduced to the irreducible case via induction. This induction still leaves some leeway, and allowing ``nearly-full-subsumption resolution'' in Section \ref{sec:Subsumptionflips} we can handle deficiency 3.

\section{Preliminaries}
\label{sec:Preliminaries}

Most notations and concepts in this section are standard (see the Handbook chapter \cite{Kullmann2007HandbuchMU}), but we provide all definitions, boldfacing those where confusions are possible. We use standard set-theoretical notations and concepts. For example for a set $X$ of sets by $\bc X$ the union of the elements of $X$ is denoted, and by $\bca X$ for $X \ne \es$ their intersection. The symmetric difference of sets $X, Y$ is $X \symdif Y := (X \sm Y) \cup (Y \sm X)$. We use $\NN = \set{x \in \ZZ : x \ge 1}$ and $\NNZ = \NN \cup \set{0}$.

In this \Schrift{} w.l.o.g.\ we use $\Va := \NN$ for the set of variables, that is, variables are just natural numbers, and $\Lit := \ZZ \sm \set{0}$, that is, literals are just non-zero integers, while complementation (logical negation of literals) is just (arithmetical) negation, that is, for $x \in \Lit$ we use $\ol{x} := -x \in \Lit$. For a set $L \sse \Lit$ of literals we use $\ol{L} := \set{\ol{x} : x \in L}$ for elementwise complementation. A \textbf{clause} is a finite set $C \subset \Lit$ of literals, which is ``clash-free'', that is, $C \cap \ol{C} = \es$; the set of all clauses is denoted by $\Cl$. A \textbf{clause-set} is a finite set of clauses, the set of all clause-sets is denoted by $\Cls$. The underlying variable of a literal, given by $\var: \Lit \ra \NN$, is defined as $\var(x) := \abs{x}$ for $x \in \Lit$, while for a clause $C$ let $\var(C) := \set{\var(x) : x \in C} \subset \Va$, and for a clause-set $F$ let $\var(F) := \bc_{C \in F} \var(C) \subset \Va$. For a set $L \sse \Lit$ of literals let $\lit(L) := L \cup \ol{L}$ be the closure under complementation, while for $F \in \Cls$ let $\lit(F) := \lit(\var(F))$. We note here that the actually occurring literals of $F$ are just the elements of $\bc F$. As measures for clause-sets $F$ we use $n(F) := \abs{\var(F)} \in \NNZ$ for the number of variables, and $c(F) := \abs{F} \in \NNZ$ for the number of clauses. The \textbf{deficiency} $\delta(F) \in \ZZ$ is defined as $\delta(F) := c(F) - n(F)$. For $\mc{C} \sse \Cls$ we use notations like $\mc{C}_{\delta=k} := \set{F \in \mc{C} : \delta(F) = k}$. For $F \in \Cls$ and $x \in \Lit$ let $\bmm{F_x} := \set{C \in F : x \in C} \in \Cls$ be the sub-clause-set consisting of all clauses containing literal $x$, and let $\bmm{\ldeg_F(x)} := c(F_x) \in \NNZ$ be the \textbf{literal-degree} of $x$ in $F$, while for $v \in \Va$ the \textbf{variable-degree} is $\bmm{\vdeg_F(v)} := \ldeg_F(v) + \ldeg_F(\ol{v}) \in \NNZ$. A \textbf{full clause} of $F \in \Cls$ is some $C \in F$ with $\var(C) = \var(F)$, while the set of all full clauses over some finite $V \subset \Va$ is denoted by $\bmm{A(V)} := \set{C \in \Cl : \var(C) = V} \in \Cls$. So the set of full clauses of $F \in \Cls$ is $F \cap A(\var(F))$. Furthermore we use $A_n := A(\tb 1n)$ for $n \in \NNZ$. So $A_0 = \set{\bot}$ and $A_1 = \set{\set{1},\set{-1}}$. A \textbf{full variable} of $F \in \Cls$ is some $v \in \var(F)$ such that for all $C \in F$ holds $v \in \var(C)$. So the subsets of $A(V)$ are precisely the clause-sets where every variable is full.

$\Sat$ is the set of satisfiable clause-sets, which are those $F \in \Cls$ such that there is $C \in \Cl$ with $\fa\, D \in F : C \cap D \ne \es$, while $\Usat := \Cls \sm \Sat$ is the set of unsatisfiable clause-sets. So $F \in \Cls$ is unsatisfiable iff for all $C \in \Cl$ there is $D \in F$ with $C \cap D = \es$. Furthermore $\Musat \subset \Usat$, the set of \textbf{minimally unsatisfiable clause-sets}, is the set of all $F \in \Usat$ such that for all $C \in F$ holds $F \sm \set{C} \in \Sat$. In the \Schrift{} we do not use the usual ``partial assignments'', but just use clauses, whose elements in such a context are thought to be set to true. So in the above definition of $\Sat$ the clause $C$ corresponds to a ``satisfying (partial) assignment''. This usage of clauses depends on clauses $C$ not being tautological, i.e., $C \cap \ol{C} = \es$ --- otherwise we had an inconsistency.

$F \in \Cls$ is called \textbf{irredundant}, if for all $C \in F$ there exists a super-clause $D \in \Cl$, $C \sse D$, such that for all $E \in F \sm \set{C}$ holds $D \cap \ol{E} \ne \es$; the set of all irredundant clause-sets is denoted by $\bmm{\Irred} \subset \Cls$. We note that for $F \in \Irred$ and $F' \sse F$ also $F' \in \Irred$ holds. We have $\Musat \subset \Irred$, and indeed $\Musat = \Usat \cap \Irred$. Two clause-sets $F, G$ are \textbf{logically equivalent} iff $\fa\, C \in \Cl : (\fa\, D \in F : C \cap D \ne \es) \Lra (\fa\, D \in G : C \cap D \ne \es)$. So $F \in \Cls$ is irredundant iff there is no $C \in F$ such that $F$ is logically equivalent to $F \sm \set{C}$ iff subsets $F', F'' \sse F$ are logically equivalent only if they are equal.

Two clause-sets $F, G$ are \textbf{isomorphic}, written \bmm{F \cong G}, if there is a bijection (an ``isomorphism'') $f: \lit(F) \ra \lit(G)$ with $f(\ol{x}) = \ol{f(x)}$ for $x \in \lit(F)$ and $G = \set{\set{f(x) : x \in C} : C \in F}$ (see ``mixed symmetries'' in \cite[Section 10.4]{Sak09HBSAT}).

$\Clash$ is the set of \textbf{hitting clause-sets}, i.e., those $F \in \Cls$ such that for all $C, D \in F$, $C \ne D$, holds $C \cap \ol{D} \ne \es$. We have $\Clash \subset \Irred$. The central class for this \Schrift{} is $\bmm{\Uclash} := \Clash \cap \Usat$. Obviously $A(V) \in \Uclash$. If $F \in \Uclash$ has at least two unit-clauses, then $F \cong A_1$. If for $F \in \Cls$ holds $\sum_{C \in F} 2^{-\abs{C}} < 1$, then $F \in \Sat$, while for all $F \in \Clash$ holds $\sum_{C \in F} 2^{-\abs{C}} \le 1$, and for $F \in \Clash \cup \Usat$ holds $F \in \Uclash \Lra \sum_{C \in F} 2^{-\abs{C}} = 1$.

The default interpretation of clause-sets $F$ is as a CNF (conjunction of disjunction), and so the logical conjunction for $F, G \in \Cls$ is just realised by $F \cup G$, while the logical disjunction is union clause-wise:
\begin{definition}\label{def:cor}
  For clause-sets $F, G \in \Cls$ we construct $\bmm{F \cor G} \in \Cls$, the \textbf{combinatorial disjunction} of $F, G$, as the set of all clauses $C \cup D$ for $C \in F$ and $D \in G$ (since clauses are clash-free, only non-clashing pairs $C,D$ are considered here): $F \cor G := \set{C \cup D \mb C \in F \und D \in G \und C \cap \ol{D} = \es}$.
\end{definition}
$F \cor G$ is logically equivalent to the disjunction of $F$ and $G$. So for $G \in \Usat$ we have that $F \cor G$ is logically equivalent to $F$. And $F \cor G \in \Usat \Lra \set{F,G} \sse \Usat$. For a finite $V \subset \Va$ we have $A(V) = \cor_{v \in V} \set{\set{v},\set{\ol{v}}}$. As $\Usat$ is stable under $\cor$, so is $\Clash$, and thus also $\Uclash$.

The \textbf{resolution operation} $\bmm{C \res D} \in \Cl$ for clauses $C, D \in \Cl$ is only partially defined, namely for $\abs{C \cap \ol{D}} = 1$, in which case $C \res D := (C \cup D) \sm \lit(C \cap \ol{D})$, or, in other words, if there is a literal $x$ with $x \in C$, $\ol{x} \in D$, and $(C \cup D) \sm \set{x,\ol{x}}$ is a clause. \textbf{DP-reduction} is denoted for $F \in \Cls$ and $v \in \Va$ by $F \leadsto \bmm{\dpi{v}(F)} := \set{C \res D : C, D \in F, C \cap \ol{D} = \set{v}} \cup \set{C \in F : v \notin \var(C)}$ (also called ``variable elimination''), that is, replacing all clauses containing $v$ by their resolvents. $\Uclash$ behaves well for (general) DP-reductions (\cite{KullmannZhao2012ConfluenceJ}): it is stable, and a sequence of DP-reductions does not depend on the order.

A special case of resolution, where both parent clauses are identical up to the resolution literals, is called ``full subsumption resolution'', and the corresponding resolutions and ``inverse resolutions'' are performed abundantly. Basic theory and applications one finds in \cite[Section 6]{KullmannZhao2010Extremal} and \cite[Section 5]{KullmannZhao2015FullClauses}:
\begin{definition}\label{def:fullsubres}
  Using a slight abuse of language, a \textbf{full subsumption pair} (short ``fs-pair'') is a set $\set{C,D}$ such that $C, D \in \Cl$, $\abs{C \cap \ol{D}} = 1$, and $\abs{C \symdif D} = 2$. A \textbf{full subsumption resolution} (``fs-resolution'') can be performed for $F \in \Cls$, if there is an fs-pair $\set{C,D} \sse F$, such that $C \res D \notin F$, in which case $F$ is called \textbf{full subsumption resolvable} (``fs-resolvable''), and performing the fs-resolution means the transition $F \leadsto (F \sm \set{C,D}) \cup \set{C \res D}$. An fs-resolution is called \textbf{strict}, if no variable is lost in the transition, otherwise \textbf{non-strict}, while if we just speak of ``fs-resolution'', then it may be strict or non-strict. In the other direction we speak of \textbf{(strict/non-strict) full subsumption extension} (``fs-extension''), that is, the transition $F \in \Cls \leadsto F' \in \Cls$, such that $F'$ is (strict/non-strict) fs-resolvable, and the fs-resolution yields $F$.
\end{definition}
In other words, for a clause $C \in F$ and a variable $v \in \Va \sm \var(C)$ we can perform an fs-extension on $C$, replacing $C$ by $C \cup \set{v}, C \cup \set{\ol{v}}$, iff none of these two clauses is already in $F$ (which is guaranteed for irredundant $F$); strictness means $v \in \var(F)$, non-strictness means $v \notin \var(F)$ (i.e., the fs-extension introduces a new variable). Obviously an fs-pair $\set{C,D}$ is logically equivalent to $\set{C \res D}$, and indeed for clauses $C, D \in \Cl$ there exists a clause $E \in \Cl$ such that $\set{C,D}$ is logically equivalent to $\set{E}$ iff either $C \sse D$ or $D \sse C$ or $\set{C,D}$ is an fs-pair. This topic will be taken up again by the notion of a ``clause-factor'' (Section \ref{sec:redsclscl})

\section{Singular variables}
\label{sec:singvar}

\cite[Section 3]{KullmannZhao2012ConfluenceJ} started a systematic investigation into \textbf{singular DP-reduction} (which of course played already an important role in earlier work on MU, e.g.\ \cite{KleineBuening2000SubclassesMU}). A \textbf{singular variable} of $F \in \Cls$ is a variable $v$ with $\min(\ldeg_F(v), \ldeg_F(\ol{v})) = 1$, while a clause-set $F \in \Cls$ is called \textbf{nonsingular} if $F$ does not have singular variables; denoting the set of singular variables of $F$ with $\bmm{\varsing(F)} \sse \var(F)$, thus $F$ is nonsingular iff $\varsing(F) = \es$. The subsets of nonsingular elements of $\Musat$ and $\Uclash$ are denoted by \bmm{\Musatns} and \bmm{\Uclashns}. For $F \in \Cls$ a singular DP-reduction is the transition $F \leadsto \dpi{v}(F)$ for a singular variable $v \in \varsing(F)$. More precisely we call a variable $v$ \textbf{$m$-singular} for $F$ and $m \in \NN$ if $v$ is singular and $\vdeg_F(v) = m+1$; the set of all 1-singular variables of $F$ is denoted by $\varosing(F) \sse \varsing(F)$, while the set of \textbf{non-1-singular variables} is $\bmm{\varnosing(F)} := \varsing(F) \sm \varosing(F)$. By \cite[Lemma 12, Part 2(b)]{KullmannZhao2012ConfluenceJ} we have:
\begin{lemma}[\cite{KullmannZhao2012ConfluenceJ}]\label{lem:characsingDPuhit}
 $\set{\bca F_v, \bca F_{\ol{v}}}$ is an fs-pair for all $F \in \Uclash$, $v \in \varsing(F)$.
\end{lemma}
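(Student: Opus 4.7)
I would assume without loss of generality that $\ldeg_F(\bar v) = 1$, so that $F_{\bar v} = \{D\}$ for a unique clause $D$ and $\bca F_{\bar v} = D$. Set $F_v = \{C_1, \dots, C_k\}$, $C^* := \bca F_v$, $D' := D \sm \{\bar v\}$ and $F_r := F \sm (F_v \cup \{D\})$. Since $v \in C^*$ and $\bar v \in D$ automatically, the assertion that $\{C^*, D\}$ is an fs-pair is equivalent to the single identity $C^* \sm \{v\} = D'$; so the plan is to prove the two inclusions (A)~$D' \sse C_i$ for each $i$, and (B)~$C^* \sm \{v\} \sse D'$. Throughout I rely on $F[v := 1] \in \Uclash$ (closure of $\Uclash$ under partial assignments) and on $\dpi{v}(F) \in \Uclash$ (the cited DP-stability of $\Uclash$), hence on the weight identities $\sum_{C} 2^{-|C|} = 1$ they satisfy.

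Write $S := \sum_{C \in F_r} 2^{-|C|}$. Comparing the weight identity for $F$ with the one for $F[v := 1] = \{D'\} \cup F_r$ yields $\sum_i 2^{-|C_i|} = 2^{-|D|}$. Next, $\dpi{v}(F) = \{C_i \res D : i \text{ clean}\} \cup F_r$, where ``clean'' means $C_i \cap \bar D = \{v\}$ (the precondition built into the definition of $\dpi{v}$); using $|C_i \res D| = |C_i| + |D| - 2 - |C_i \cap D|$, the identity for $\dpi{v}(F)$ rearranges into
\[ \sum_{i \text{ clean}} \, 2^{-|C_i| + |C_i \cap D|} \;=\; \tfrac{1}{2}. \]
Since $\bar v \in D \sm C_i$ forces $|C_i \cap D| \le |D|-1$, with equality iff $C_i \cap D = D'$, the left-hand side is bounded by $2^{|D|-1} \sum_i 2^{-|C_i|} = \tfrac{1}{2}$, and so equality must hold throughout. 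This forces every $i$ to be clean and $C_i \cap D = D'$, which is exactly (A).

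For (B), assume for contradiction that there is $y \in C^* \sm (\{v\} \cup D')$. Each $C_i$ then decomposes as a disjoint union $C_i = \{v, y\} \cup D' \cup E_i$. I would verify that $\{E_i\}_i$ is itself a hitting clause-set: the $E_i$ are clash-free (subsets of $C_i$), pairwise distinct (else $C_i = C_j$), and any clash literal between $C_i, C_j$ must already lie in $E_i$ with complement in $E_j$, since a clash inside the common core $\{v,y\} \cup D'$ would produce a self-clash in one of the $C_i$. The same observation rules out $E_i = \es$ (which would give $C_i \sse C_j$ for some $j \ne i$, destroying hitting). But $|C_i| = |D| + 1 + |E_i|$, and substituting into $\sum_i 2^{-|C_i|} = 2^{-|D|}$ produces $\sum_i 2^{-|E_i|} = 2$, contradicting the bound $\sum 2^{-|E_i|} \le 1$ that holds for every hitting clause-set. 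Hence no such $y$ exists; combined with (A), this yields $C^* \sm \{v\} = D'$, i.e.\ the desired fs-pair.

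The main obstacle is the ``hitting-excess'' contradiction in (B): the key observation is that the common core $\{v, y\} \cup D'$ is too rich to absorb any pairwise clash between the $C_i$, so the residues $E_i$ must themselves form a hitting family, and their combined weight $\sum 2^{-|E_i|} = 2$ blatantly violates the Kraft-type bound $\le 1$. Part~(A), by contrast, is routine weight bookkeeping once the clean-resolvent identity for $\dpi{v}(F)$ has been set up.
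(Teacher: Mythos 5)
Your proof is correct and self-contained. Note, though, that the paper itself does not prove this lemma here: it quotes it from \cite[Lemma 12, Part 2(b)]{KullmannZhao2012ConfluenceJ}, so there is no in-paper argument to match against. What you give is an independent derivation entirely through the Kraft-type weight $\sum_C 2^{-|C|}$: you apply the identity $=1$ to $F$, to $F[v{:=}1]$, and to $\dpi{v}(F)$, and the bound $\le 1$ to the auxiliary hitting family $\{E_i\}$. That is a coherent route, and in fact (A) and (B) both reduce cleanly to weight bookkeeping once you have verified that the relevant clause-sets stay in $\Uclash$ (which you do). Two small points worth spelling out if you were to write this up in full. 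First, in step (A) the identity $\sum_{i\ \mathrm{clean}} 2^{-|C_i|+|C_i\cap D|} = \tfrac12$ tacitly assumes the resolvents $C_i \res D$ are pairwise distinct and disjoint from $F_r$; if they were not, the left side would only be $\ge \tfrac12$ --- but this still meets your upper bound $\le \tfrac12$ and forces equality, so the conclusion (all $i$ clean, no collisions, $C_i \cap D = D'$) is unaffected; it would just be cleaner to phrase the display as two inequalities that sandwich $\tfrac12$. Second, the degenerate case $k=1$ (where $\{E_i\}$ is a singleton and cannot clash with anything) is already killed by the weight equation $\sum_i 2^{-|C_i|} = 2^{-|D|}$ alone, so no appeal to the hitting property of $\{E_i\}$ is needed there; this is implicit in your ``$\sum 2^{-|E_i|} = 2$ contradicts $\le 1$'' but deserves a remark, since the preceding sentence about $E_i = \es$ forcing $C_i \subsetneq C_j$ presupposes $k \ge 2$.
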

I.e., let $C \in F$ be the \textbf{main clause} and $D_1, \dots, D_m \in F$ be the \textbf{side clauses} of the $m$-singular variable $v \in \varsing(F)$: Lemma \ref{lem:characsingDPuhit} says $\set{C, \bca_{i=1}^m D_i}$ is an fs-pair. So a 1-singular variable for UHIT is the situation of a non-strict fs-resolution:
\begin{corollary}\label{cor:charac1sUHIT}
  For $F \in \Uclash$ and $v \in \varosing(F)$: $F_v \cup F_{\ol{v}}$ is an fs-pair.
\end{corollary}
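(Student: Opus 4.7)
\begin{prf}
The plan is to unfold the definition of a 1-singular variable and then invoke Lemma \ref{lem:characsingDPuhit} directly; there is essentially no obstacle here, since this corollary is just the specialisation of that lemma to $m = 1$.

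First I would note that $v \in \varosing(F)$ means $v$ is 1-singular in $F$, i.e.\ $\vdeg_F(v) = 2$. Combined with the defining condition of singularity, $\min(\ldeg_F(v), \ldeg_F(\ol{v})) = 1$, this forces $\ldeg_F(v) = \ldeg_F(\ol{v}) = 1$. So there exist unique clauses $C, D \in F$ with $F_v = \set{C}$ and $F_{\ol{v}} = \set{D}$; in particular $v \in C$ and $\ol{v} \in D$, so $C \ne D$ (clauses are clash-free, so $C$ cannot contain $\ol{v}$), and therefore $F_v \cup F_{\ol{v}} = \set{C, D}$ is a two-element set.

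Next I would compute the big intersections appearing in Lemma \ref{lem:characsingDPuhit}: since each of $F_v$, $F_{\ol{v}}$ is a singleton, we have $\bca F_v = C$ and $\bca F_{\ol{v}} = D$. Applying the lemma (with the side clause $D_1 = D$ as the unique side clause) yields immediately that $\set{C, D} = \set{\bca F_v, \bca F_{\ol{v}}}$ is an fs-pair. Since $F_v \cup F_{\ol{v}} = \set{C, D}$, this is exactly the claim.

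The only subtlety worth flagging is making sure the two clauses $C$ and $D$ are indeed distinct so that $F_v \cup F_{\ol{v}}$ really is a set of size two (as required by the definition of an fs-pair in Definition \ref{def:fullsubres}), but this is immediate from the fact that $v \in C$, $\ol{v} \in D$, and clauses do not contain complementary literals. Hence the corollary follows.
\end{prf}
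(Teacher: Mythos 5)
Your proof is correct and follows the paper's own line of reasoning exactly: you unpack 1-singularity to $\ldeg_F(v)=\ldeg_F(\ol v)=1$, observe that $F_v$ and $F_{\ol v}$ are singletons so that $\bca F_v$ and $\bca F_{\ol v}$ are just those clauses, and then apply Lemma~\ref{lem:characsingDPuhit} in the case $m=1$. The paper states this as the ``I.e.'' remark directly preceding the corollary, so this is precisely the intended argument.
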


\begin{corollary}\label{cor:2suhit}
  Consider $F \in \Uclash$ and a 2-singular variable $v$. Then the side-clauses $D_1, D_2 \in F$ yield an fs-pair $\set{D_1,D_2}$.
\end{corollary}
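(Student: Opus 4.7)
The plan is to combine Lemma \ref{lem:characsingDPuhit} with the identity $\sum_{G \in F} 2^{-\abs{G}} = 1$ characterising $\Uclash$, applied to both $F$ and its DP-reduct $\dpi{v}(F)$.

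First, applying Lemma \ref{lem:characsingDPuhit} to the 2-singular variable $v$ yields that $\set{\bca F_v, \bca F_{\ol{v}}} = \set{C, D_1 \cap D_2}$ is an fs-pair whose unique resolution literal has to be $v$. Decoding the definition of an fs-pair, this forces $C = E \cup \set{v}$ and $D_1 \cap D_2 = E \cup \set{\ol{v}}$ for the clause $E := C \sm \set{v}$. Setting $A_i := D_i \sm (E \cup \set{\ol{v}})$ for $i = 1,2$ we obtain $D_i = E \cup \set{\ol{v}} \cup A_i$ with $A_1 \cap A_2 = \es$; moreover $A_1, A_2 \ne \es$, since otherwise some $D_i \sse D_{3-i}$, which is impossible in a hitting clause-set (any clash literal between two nested clauses would violate clash-freeness of the larger one).

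The key step is to pin down $\abs{A_1}, \abs{A_2}$ by counting. We have $\dpi{v}(F) = (F \sm \set{C, D_1, D_2}) \cup \set{E \cup A_1, E \cup A_2}$, and the two resolvents $E \cup A_i$ are distinct (as $A_1 \cap A_2 = \es$ and both are nonempty) and cannot coincide with any $G \in F \sm \set{C, D_1, D_2}$, since $E \cup A_i = G$ would give $G \subsetneq D_i$, again ruled out by hitting. Hence both $F \in \Uclash$ and $\dpi{v}(F) \in \Uclash$ satisfy the defining equation $\sum_G 2^{-\abs{G}} = 1$. Subtracting the two identities and dividing by $2^{-\abs{E}}$ yields
\[ 2^{-\abs{A_1}} + 2^{-\abs{A_2}} = 1, \]
whose only solution in positive integers is $\abs{A_1} = \abs{A_2} = 1$.

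Finally, write $A_1 = \set{y}$ and $A_2 = \set{z}$; clearly $y \ne z$ and $y, z \notin \lit(E) \cup \set{v,\ol{v}}$. A direct inspection (using that $E$ is clash-free and $v \notin D_1 \cup D_2$) shows that the only candidate for a literal in $D_1 \cap \ol{D_2}$ is $y$, and $y \in \ol{D_2}$ iff $z = \ol{y}$. Since $F \in \Clash$ forces $D_1 \cap \ol{D_2} \ne \es$, we conclude $z = \ol{y}$, whence $\abs{D_1 \cap \ol{D_2}} = 1$ and $\abs{D_1 \symdif D_2} = 2$, i.e., $\set{D_1, D_2}$ is an fs-pair. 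The main obstacle is the middle step: the $\sum 2^{-\abs{G}} = 1$ identity is the only place where unsatisfiability (rather than just the hitting property) is genuinely used, and without it larger $A_i$ could not be excluded.
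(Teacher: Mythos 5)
Your proof is correct, but takes a genuinely different route from the paper's. Both start the same way: invoke Lemma~\ref{lem:characsingDPuhit} to get $D_1 \cap D_2 = (C \sm \set{v}) \cup \set{\ol{v}}$, write $D_i = E \cup \set{\ol{v}} \cup A_i$ with $A_1 \cap A_2 = \es$ and $A_i \ne \es$. From there the paper argues directly: it picks a clash literal $w \in D_1 \cap \ol{D_2}$ and shows $D_2$ can have no literal $x$ beyond $E \cup \set{\ol{v}, \ol{w}}$ by explicitly exhibiting, if such an $x$ existed, a satisfying partial assignment for $F$ (set $E$ false and $v,w,x$ true; the hitting condition with $C$ covers every other clause). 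You instead use the measure characterisation $\sum_{G} 2^{-\abs{G}} = 1$ applied to both $F$ and $\dpi{v}(F)$ — valid because $\Uclash$ is stable under DP-reduction — subtract, and obtain $2^{-\abs{A_1}} + 2^{-\abs{A_2}} = 1$, which forces $\abs{A_1} = \abs{A_2} = 1$; the clash between $D_1, D_2$ then pins $A_2 = \ol{A_1}$. Your version is cleaner and more algebraic, cleanly isolating the single point where unsatisfiability (beyond the hitting property) is actually invoked; the paper's version is more elementary in that it does not rely on DP-stability of $\Uclash$, and the constructed assignment makes the structure of $D_1, D_2$ visible directly. One small point worth noting: your bookkeeping that $\dpi{v}(F)$ is obtained from $F$ by deleting $C, D_1, D_2$ and adding two genuinely new clauses $E \cup A_1, E \cup A_2$ (disjoint from $F \sm \set{C,D_1,D_2}$) is exactly what is needed for the subtraction to be legitimate, and you justify it correctly via the no-subsumption property of hitting clause-sets.
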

\begin{proof}
Consider the main clause $C$, w.l.o.g.\ assume $v \in C$, and let $C_0 := C \sm \set{v}$. Then $C_0 \cup \set{\ol{v}} = D_1 \cap D_2$. Since $D_1, D_2$ clash, there is $w \in \var(F)$ with w.l.o.g.\ $w \in D_1$, $\ol{w} \in D_2$, and thus $w \notin \var(C)$. If there would be some other literal, say w.l.o.g.\ $x \in D_2 \sm (C_0 \cup \set{\ol{v},\ol{w}})$, then the assignment setting all literals in $C_0$ to false and setting $v,w,x$ to true would satisfy $F$ (due to $F \in \Clash$). \qed
\end{proof}

\begin{corollary}\label{cor:uhit3fs}
  If $F \in \Uclash$ contains a variable occurring at most three times, then this variable is a singular variable, and $F$ contains an fs-pair.
\end{corollary}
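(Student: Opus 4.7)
The plan is to show first that having at most three occurrences forces singularity, and then to dispatch the two remaining cases (1-singular and 2-singular) by appealing to the two preceding corollaries.

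The first step is to observe that every variable $v \in \var(F)$ of a minimally unsatisfiable clause-set satisfies $\ldeg_F(v) \ge 1$ and $\ldeg_F(\ol{v}) \ge 1$: otherwise, say if $v$ occurs only positively, then $F \sm F_v$ is still unsatisfiable (extend any falsifying assignment of $F \sm F_v$ by setting $v \mapsto 1$), contradicting minimality via $\Uclash \sse \Musat$. Hence $\vdeg_F(v) \ge 2$, and combined with the hypothesis $\vdeg_F(v) \le 3$ we get $\min(\ldeg_F(v), \ldeg_F(\ol{v})) = 1$, showing that $v$ is singular. Moreover $v$ is either $1$-singular (when $\vdeg_F(v) = 2$) or $2$-singular (when $\vdeg_F(v) = 3$).

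In the $1$-singular case, Corollary~\ref{cor:charac1sUHIT} directly yields that $F_v \cup F_{\ol{v}}$ is an fs-pair contained in $F$. In the $2$-singular case, Corollary~\ref{cor:2suhit} supplies an fs-pair formed by the two side-clauses $D_1, D_2 \in F$. Either way, an fs-pair $\sse F$ is exhibited.

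The entire argument is bookkeeping once the two preceding corollaries are in hand; the only step that requires a brief justification not already in the excerpt is the folklore observation that in any $F \in \Musat$ each variable has at least one positive and one negative occurrence, and this is immediate from the minimality condition. There is no real obstacle here — the work was done in Lemma~\ref{lem:characsingDPuhit} and Corollary~\ref{cor:2suhit}.
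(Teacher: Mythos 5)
Your proof is correct and matches the paper's intent: the corollary is stated without a proof block precisely because it follows immediately by noting that a variable occurring $\le 3$ times in $F \in \Uclash \subset \Musat$ must occur in both polarities (hence is $1$- or $2$-singular), and then invoking Corollary~\ref{cor:charac1sUHIT} or Corollary~\ref{cor:2suhit} respectively. One small slip in wording: in the pure-literal parenthetical you write ``falsifying assignment of $F \sm F_v$'' where you mean a \emph{satisfying} assignment, which is then modified by setting $v$ to true to satisfy all of $F$, contradicting unsatisfiability; the underlying reasoning is correct.
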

In Lemma \ref{lem:22fns} we give further sufficient criterion for the presence of fs-pairs.

\begin{corollary}\label{cor:Smuonly11sgen}
  For $x, y \in C \in F \in \Uclash$, $x \ne y$: $\ldeg_F(x) = 1 \Ra \ldeg_F(y) \ge 2$.
\end{corollary}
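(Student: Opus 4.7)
The plan is to reduce the claim to the singular-variable analysis provided by Lemma \ref{lem:characsingDPuhit}, applied to $v := \var(x)$. The core observation is that $\ldeg_F(x) = 1$ forces $v$ to be singular with main clause $C$, and then Lemma \ref{lem:characsingDPuhit} pins down the structure of the side clauses so that every other literal of $C$ is forced into all of them.

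First, I would verify $v \in \varsing(F)$. Since $F \in \Uclash \sse \Musat$ is minimally unsatisfiable, no variable of $F$ can be pure: if $\ldeg_F(\ol{x}) = 0$, then $F \sm F_x$ is a proper sub-clause-set and hence satisfiable by minimality, and extending a satisfying assignment by setting $x$ to true would satisfy all of $F$, contradicting $F \in \Usat$. Hence $\ldeg_F(\ol{x}) \ge 1$, and combined with $\ldeg_F(x) = 1$ we obtain $\min(\ldeg_F(v), \ldeg_F(\ol{v})) = 1$, so $v \in \varsing(F)$ with main clause $C$ and side clauses $D_1, \dots, D_m$ for some $m \ge 1$.

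Next, Lemma \ref{lem:characsingDPuhit} asserts that $\set{C, \bca_{i=1}^m D_i}$ is an fs-pair. Since $x \in C$ and $\ol{x} \in D_i$ for every $i$, the unique clashing variable of this fs-pair is $v$, and therefore $C \sm \set{x} = (\bca_{i=1}^m D_i) \sm \set{\ol{x}}$. Now $y \in C$ with $y \ne x$, and automatically $y \ne \ol{x}$ since $C$ is clash-free, so $y \in C \sm \set{x} \sse D_i$ for every $i$. Combined with $y \in C$ this gives $\ldeg_F(y) \ge m + 1 \ge 2$, as required.

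I do not anticipate a substantive obstacle: once singularity of $v$ is secured via the no-pure-literals property of $\Musat$, the content of the corollary is just the unpacking of what Lemma \ref{lem:characsingDPuhit} tells us — namely, that every literal of the main clause other than the singular literal must also appear in the intersection of the side clauses, and hence in each side clause individually.
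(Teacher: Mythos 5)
Your proof is correct and follows essentially the same route as the paper's: apply Lemma \ref{lem:characsingDPuhit} to the singular variable $\var(x)$ and read off that every literal of $C$ other than $x$ must lie in $\bca F_{\ol{x}}$, hence in some second clause. You simply spell out two steps the paper leaves implicit — that a clause containing $\ol{x}$ exists (no pure literals in $\Musat$), and how the fs-pair condition forces $C \sm \set{x} \sse \bca F_{\ol{x}}$ — so there is nothing substantively different here.
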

\begin{proof}
Consider $D \in F$ with $\ol{x} \in D$; by Lemma \ref{lem:characsingDPuhit} $y \in D$, thus $\ldeg_F(y) \ge 2$. \qed
\end{proof}

By \cite[Theorem 23]{KullmannZhao2012ConfluenceJ} we know that singular DP-reduction is confluent for $\Uclash$. So we have the retraction $\bmm{\sNF}: \Uclash \ra \Uclashns$, which maps $F \in \Uclash$ to the unique nonsingular $\sNF(F)$ obtainable from $F$ by iterated singular DP-reduction. $\Uclash$ is partitioned into the \textbf{singular fibres} $\sNF^{-1}(F)$ for $F \in \Uclashns$. More generally, by \cite[Theorem 63]{KullmannZhao2012ConfluenceJ} the singularity index $\bmm{\singind(F)} \in \NNZ$ is defined for $F \in \Musat$ as the unique number of singular DP-reductions needed to reduce $F$ to an element of $\Musatns$; for $F \in \Uclash$ the uniqueness of the number of reductions steps also follows with the help of the confluence of sDP-reduction. We have $\singind(F) = c(F) - c(\sNF(F)) = n(F) - n(\sNF(F))$ for $F \in \Uclash$.

Consider $m \in \NN$; a \emph{general $m$-singular extension} of $G \in \Cls$ with $x \in \Lit \sm \lit(F)$ is some $F \in \Cls$ with $\ldeg_F(x) = 1$, $\ldeg_F(\ol{x}) = m$, and $\dpi{\var(x)}(F) = G$. By \cite[Lemma 9]{KullmannZhao2012ConfluenceJ} we know that $F \in \Musat$ implies $G \in \Musat$, and since DP-reduction is satisfiability-equivalent, we have that $G \in \Usat$ implies $F \in \Usat$, however in general $G \in \Musat$ does not imply $F \in \Musat$, since there might be tautological resolvents, and some resolvents might already exist in $F$. This is excluded by the definition of a ``$m$-singular extensions'' in \cite[Definition 5.6]{KullmannZhao2010Extremal}, which we need to generalise in order not just to preserve $\Musat$, but also $\Uclash$. Consider $\mc{C} \sse \Cls$, $G \in \mc{C}$, $m \in \NN$ and $x \in \Lit \sm \lit(G)$. A general $m$-singular extension $F$ of $G$ with $x$ is called an \textbf{$m$-singular $\mc{C}$-extension of $G$ with $x$} if $F \in \mc{C}$ and $c(F) = c(G) + 1$. For ``hitting extensions'' we need to obey Lemma \ref{lem:characsingDPuhit} and obtain:
\begin{lemma}\label{lem:singext}
  For $G \in \Uclash$, $x \in \Lit \sm \lit(F)$ and $m \in \NN$ the \textbf{$m$-singular hitting extensions $F$} (the $m$-singular $\Uclash$-extensions of $G$) are given by choosing some $G' \sse G$ with $c(G') = m$ such that the clause $\bca G'$ clashes with every element of $G \sm G'$, and letting $F := (G \sm G') \cup \set{(\bca G') \cup \set{x}} \cup (\set{\ol{x}} \cor G')$.
\end{lemma}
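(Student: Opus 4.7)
The plan is to establish both directions of the stated correspondence between subsets $G' \sse G$ with the clash property and $m$-singular hitting extensions of $G$ on $x$.

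For the forward direction (the construction produces a valid $m$-singular hitting extension), the literal-degrees $\ldeg_F(x) = 1$ and $\ldeg_F(\ol{x}) = m$ are read off immediately, and distinctness of the new clauses (they each contain $x$ or $\ol{x} \notin \lit(G)$) yields $c(F) = c(G) + 1$. For the DP-reduction, the resolvent of $(\bca G') \cup \set{x}$ with $\set{\ol{x}} \cup D$ for $D \in G'$ simplifies to $(\bca G') \cup D = D$ using $\bca G' \sse D$, so $\dpi{\var(x)}(F) = (G \sm G') \cup G' = G$, which also delivers $F \in \Usat$ via satisfiability-equivalence of DP. The hitting property follows by case analysis: pairs inside $G \sm G'$ or inside $\set{\ol{x}} \cor G'$ inherit a clash from $G \in \Clash$; the new clause $(\bca G') \cup \set{x}$ clashes with each $\set{\ol{x}} \cup D$ via $x/\ol{x}$, and with each $E \in G \sm G'$ by the standing hypothesis on $\bca G'$ (the literal $x$ cannot supply the clash here since $\var(x) \notin \var(E)$).

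For the converse, let $F$ be any $m$-singular hitting extension of $G$ on $x$, with $C$ the unique clause containing $x$ and $D_1, \dots, D_m$ the clauses containing $\ol{x}$. Since $\var(x)$ is singular in $F$, Lemma \ref{lem:characsingDPuhit} forces $\set{C, \bca_i D_i}$ to be an fs-pair, giving $C_0 := C \sm \set{x} = \bca_i D_i \sm \set{\ol{x}}$, hence $D_i \spe C_0 \cup \set{\ol{x}}$ for all $i$. Setting $D_i' := D_i \sm \set{\ol{x}}$ and $G' := \set{D_1', \dots, D_m'}$, one checks $\bca G' = C_0$ and that each resolvent of $C$ with $D_i$ equals $D_i'$, so $\dpi{\var(x)}(F) = (F \sm (\set{C} \cup \set{D_1, \dots, D_m})) \cup G'$, which must equal $G$.

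The cardinality constraint $c(F) = c(G) + 1$ now forces the $D_i'$ to be pairwise distinct and disjoint from $F \sm (\set{C} \cup \set{D_1, \dots, D_m})$, whence $\abs{G'} = m$ and $G \sm G' = F \sm (\set{C} \cup \set{D_1, \dots, D_m})$. The clash condition $\bca G' \cap \ol{E} \ne \es$ for $E \in G \sm G'$ is inherited from $F \in \Clash$ applied to $C$ and $E$, noting $\var(x) \notin \var(E)$ rules out a clash on $x$. Reassembling $F$ by the displayed formula is then immediate. The only subtle point in the entire argument is leveraging $c(F) = c(G) + 1$ to prevent the $D_i'$ from colliding after $\ol{x}$ is stripped; without this hypothesis the general singular extension concept would deliver only a surjection onto $G$ rather than the clean bijective correspondence needed for the lemma.
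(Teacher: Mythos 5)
Your proof is correct and takes exactly the approach the paper intends but leaves implicit (the lemma is stated with no proof, just the remark that hitting extensions ``need to obey Lemma~\ref{lem:characsingDPuhit}''). The forward direction is a routine verification, and the converse correctly leverages Lemma~\ref{lem:characsingDPuhit} to pin down the common core $C_0$, then uses the cardinality constraint $c(F) = c(G) + 1$ --- which you rightly flag as the load-bearing hypothesis --- to rule out collisions among the $D_i'$ and overlap with the untouched clauses, yielding the clean one-to-one correspondence. You also silently and correctly fix the obvious typo $\lit(F)$ in the statement to $\lit(G)$.
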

Two principal choices for $G'$ are always possible (the \emph{trivial singular hitting extensions}): The 1-singular hitting extensions are precisely the non-strict fs-extensions. At the other end, a $c(G)$-singular hitting extension of $G$ adds the unit-clause $\set{x}$ and adds to every other clause the literal $\ol{x}$; these extensions are called \textbf{full singular unit-extensions}. A simple observation:
\begin{lemma}\label{lem:fullsingfs}
  Consider $F \in \Uclash \sm \set{\bot}$ and obtain $F'$ by full singular unit-extension. Then $F'$ has an fs-pair if and only if $F$ has an fs-pair.
\end{lemma}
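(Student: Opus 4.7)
The plan is to unfold the definition and check both directions by direct case analysis. By Lemma \ref{lem:singext} with $G' := F$, the full singular unit-extension of $F$ with fresh literal $x \in \Lit \setminus \lit(F)$ is
$F' = \set{\set{x}} \cup \bigl(\set{\set{\ol{x}}} \cor F\bigr) = \set{\set{x}} \cup \setb{C \cup \set{\ol{x}} : C \in F}.$
The key invariant to exploit is that $\ol{x}$ occurs in every clause of $F' \setminus \set{\set{x}}$, while $x$ occurs only in the singleton $\set{x}$; thus on pairs of clauses $C' = C \cup \set{\ol x}, D' = D \cup \set{\ol x}$ one has $C' \symdif D' = C \symdif D$ and $C' \cap \ol{D'} = C \cap \ol D$, since $x \notin \var(F)$.

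For the direction ``$F$ fs-pair $\Ra$ $F'$ fs-pair,'' take an fs-pair $\set{C,D} \sse F$. Then $\set{C \cup \set{\ol{x}}, D \cup \set{\ol{x}}} \sse F'$, and by the invariant above this pair has $|(C \cup \set{\ol x}) \cap (\ol D \cup \set x)| = |C \cap \ol D| = 1$ and $|(C \cup \set{\ol x}) \symdif (D \cup \set{\ol x})| = |C \symdif D| = 2$, hence it is an fs-pair.

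For the direction ``$F'$ fs-pair $\Ra$ $F$ fs-pair,'' let $\set{C', D'} \sse F'$ be an fs-pair and split on whether $\set{x}$ is involved. If $C' = \set{x}$ and $D' = D \cup \set{\ol x}$ with $D \in F$, then $|C' \symdif D'| = |\set{x}| + |D \cup \set{\ol{x}}| = 2 + |D|$ (noting $x \notin D$ and $\ol x \notin \set{x}$), which equals $2$ only if $D = \bot$; but $\bot \in F \in \Uclash$ together with the hitting property forces $F = \set{\bot}$, contradicting the hypothesis $F \in \Uclash \setminus \set{\bot}$. So both clauses of the fs-pair are of the form $C \cup \set{\ol x}$, $D \cup \set{\ol x}$ with $C, D \in F$, and the invariant immediately gives that $\set{C,D}$ is an fs-pair of $F$.

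The hard part is essentially nil: everything reduces to the algebra of adding a common literal $\ol{x}$ to every clause. The only step requiring care is Case 1, where excluding $D = \bot$ is precisely why the lemma assumes $F \ne \set{\bot}$.
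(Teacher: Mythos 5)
Your proof is correct. The paper states this as "a simple observation" and gives no proof, so the direct verification you carry out — writing $F' = \set{\set{x}} \cup \set{C \cup \set{\ol{x}} : C \in F}$ via Lemma~\ref{lem:singext} with $G' = F$ (using $\bca F = \bot$ for $F \in \Uclash$), observing that adding the common literal $\ol x$ leaves $C \symdif D$ and $C \cap \ol D$ unchanged, and then ruling out the case where $\set{x}$ participates in the fs-pair via the hypothesis $F \ne \set{\bot}$ — is exactly the intended argument and is complete.
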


We conclude this section by some applications to the structure of $\Uclash$, using the minimal var-degree $\minvdeg(F) := \min_{v \in \var(F)} \vdeg_F(v)$, where by \cite{Ku99dKo} for $F \in \Uclashi{\delta=2}$ holds $\minvdeg(F) \in \set{2,3,4}$.
\begin{lemma}\label{lem:specpropd2}
  Consider $F \in \Uclashi{\delta=2}$ with $\minvdeg(F) = 4$.
  \begin{enumerate}
  \item\label{lem:specpropd21} $F$ is singular iff $F$ has a unit-clause iff $F$ is not isomorphic to $\Dt{2}$ or $\Dt{3}$.
  \item\label{lem:specpropd22} $F$ is obtained from $\Dt{2}$ or $\Dt{3}$ by a series of full singular unit-extensions.
  \item\label{lem:specpropd23} $F$ is not fs-resolvable iff $F$ is obtained from $\Dt{3}$ by a series of full singular unit-extensions.
  \end{enumerate}
\end{lemma}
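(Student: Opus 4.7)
The plan is to prove the three parts together, with Part 1 as the core and Parts 2, 3 falling out by induction. In Part 1 the chain splits into: ``$F \not\cong \Dt{2}, \Dt{3}$ iff $F$ is singular'' (by Example \ref{exp:uhit12}, which classifies $\Uclashnsi{\delta=2}$ up to isomorphism as $\set{\Dt{2}, \Dt{3}}$); ``$F$ has a unit-clause $\Ra F$ is singular'' (the unit's variable has literal-degree $1$); and the nontrivial direction ``$F$ singular $\Ra F$ has a unit-clause''.

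For the nontrivial direction, first I pick a singular variable $v$ with $\ldeg_F(v) = 1$, main clause $C \ni v$ and side clauses $D_1, \dots, D_m$, noting $m \geq 3$ from $\minvdeg(F) = 4$. By Lemma \ref{lem:characsingDPuhit} every $D_i$ contains $C_0 \cup \set{\ol v}$, with $C_0 := C \sm \set{v}$, so setting $S_i := D_i \sm (C_0 \cup \set{\ol v})$ the family $\set{S_i}$ is hitting over variables disjoint from $\var(C)$ (the common part $C_0 \cup \set{\ol v}$ is clash-free, so any $D_i, D_j$ clash forces $S_i \cap \ol{S_j} \ne \es$). Combining $\sum_{D \in F} 2^{-|D|} = 1$ with the analogous equation for the clause-set $\set{C_0} \cup F_0 \in \Uclash$ (where $F_0 := F \sm (F_v \cup F_{\ol v})$; this is the hitting-unsat ``restriction'' of $F$ by setting $v$ to false) yields by direct arithmetic $\sum_i 2^{-|S_i|} = 1$, hence $\set{S_i} \in \Uclash$ with $m \geq 3$ clauses.

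If $|C| = 1$ then $C = \set{v}$ is already a unit-clause and we are done, so I assume $|C| = k \geq 2$ and aim either to contradict the hypotheses or to locate a unit-clause elsewhere in $F$. In the core case $k = 2$, writing $C_0 = \set{w_1}$, every $E \in F_0$ must contain $\ol{w_1}$ (it clashes with $C$, not via $\ol v \notin E$), and letting $T_j := E_j \sm \set{\ol{w_1}}$ a repeat of the sum-identity gives $\set{T_j} \in \Uclash$ with $p := c(F) - 1 - m$ clauses. Setting $s := n(\set{S_i})$, $t := n(\set{T_j})$, and $q := |\var(\set{S_i}) \cap \var(\set{T_j})|$, the variable-count $\var(F) = \set{v, w_1} \cup \var(\set{S_i}) \cup \var(\set{T_j})$ gives $\delta(\set{S_i}) + \delta(\set{T_j}) = 3 - q$, hence $q \leq 1$; on the other hand, every $u \in \var(\set{S_i}) \sm \var(\set{T_j})$ (and symmetrically) must have $\vdeg_{\set{S_i}}(u) \geq 4$. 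Listing the isomorphism types of small $\set{S_i}, \set{T_j} \in \Uclash$ with $\delta \in \set{1, 2}$ (built by iterated full singular unit-extension from $\set{\bot}$, $\Dt{2}$ or $\Dt{3}$) one checks in each admissible combination that either some $E_j$ is a unit-clause in $F$ (finishing the argument) or a variable of $\set{S_i} \cup \set{T_j}$ has $\vdeg_F < 4$ (contradicting $\minvdeg(F) = 4$). The main obstacle is $k \geq 3$: here $F_0$ is no longer captured by a single $\Uclash$-pivot. I handle it by first using Corollary \ref{cor:Smuonly11sgen} together with $F \in \Musat$ (every literal of every variable must appear) to derive $\vdeg_F(w) \geq 5$ for every $w \in C_0$, so the min-var-deg is realized outside $C_0$; then I iterate the identity argument after partitioning $F_0$ by a chosen pivot literal $\ol{y_0}$ with $y_0 \in C_0$.

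Part 2 follows by induction on $n(F)$: the nonsingular case is Example \ref{exp:uhit12}; for the singular case, Part 1 gives a unit-clause $\set{x} \in F$, Lemma \ref{lem:characsingDPuhit} applied to $\var(x)$ yields $\bca F_{\ol x} = \set{\ol x}$ so every other clause of $F$ contains $\ol x$, and $G := \set{D \sm \set{\ol x} : D \in F, D \ne \set{x}}$ lies in $\Uclashi{\delta=2}$ with $n(G) = n(F) - 1$ and $\minvdeg(G) = 4$ (old variables' degrees are preserved by the un-extension). Then $F$ is the full singular unit-extension of $G$ at $x$, and the inductive hypothesis finishes. Part 3 is immediate from Part 2 via Lemma \ref{lem:fullsingfs}: $\Dt{2}$ is fs-resolvable (e.g.\ $\set{1,2}$ and $\set{-1,2}$ resolve to $\set{2} \notin \Dt{2}$) while $\Dt{3}$ is not (direct check on its five clauses: the three 2-clauses are pairwise disjoint and the two 3-clauses are complementary, so no pair satisfies both fs-conditions), and full singular unit-extension preserves the existence of fs-pairs.
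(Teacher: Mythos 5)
The paper disposes of Part \ref{lem:specpropd21} with a single citation to \cite[Lemma 5.13]{KullmannZhao2010Extremal}, so your from-scratch argument for that part takes a genuinely different route; Parts \ref{lem:specpropd22} and \ref{lem:specpropd23} in your write-up match the paper's approach (induction via Part \ref{lem:specpropd21} plus preservation of $\minvdeg$ under singular DP-reduction for Part \ref{lem:specpropd22}, and Lemma \ref{lem:fullsingfs} for Part \ref{lem:specpropd23}), and are fine.

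The problem is that your Part \ref{lem:specpropd21} argument has genuine gaps. First, the stated derivation of $\vdeg_F(w) \ge 5$ for $w \in C_0$ from Corollary \ref{cor:Smuonly11sgen} together with $F \in \Musat$ does not follow: those two facts give only $\ldeg_F(w) \ge 2$ and $\ldeg_F(\ol{w}) \ge 1$, i.e.\ $\vdeg_F(\var(w)) \ge 3$. (The conclusion is true, but for a different reason: by Lemma \ref{lem:characsingDPuhit} every $w \in C_0$ lies in the main clause $C$ and in all $m \ge 3$ side clauses $D_i$, hence $\ldeg_F(w) \ge m + 1 \ge 4$, giving $\vdeg_F(\var(w)) \ge 5$.) Second, the case $|C| = 2$ is left to ``listing the isomorphism types\ldots one checks in each admissible combination'' without performing the check; note also that ``built by iterated full singular unit-extension from $\set{\bot}$, $\Dt{2}$ or $\Dt{3}$'' quietly presupposes the very classification Part \ref{lem:specpropd22} is about, so to avoid circularity you would have to set up the whole lemma as a simultaneous induction on $n(F)$ (which you did not announce for Part \ref{lem:specpropd21}). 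Third, and most seriously, the case $|C| \ge 3$ is handled only by ``iterate the identity argument after partitioning $F_0$ by a chosen pivot literal $\ol{y_0}$ with $y_0 \in C_0$,'' which does not work as in the $|C|=2$ case: a clause of $F_0$ must clash with $C$ via \emph{some} $\ol{y}$, $y \in C_0$, but different clauses of $F_0$ can use different $y$'s, so peeling off a single literal does not produce a residual $\Uclash$ and the sum identity does not carry over. Since this is exactly what the external citation is covering for the paper, the gap is substantive and Part \ref{lem:specpropd21} is not established by your argument.
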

\begin{proof}
\cite[Lemma 5.13]{KullmannZhao2010Extremal} proves Part \ref{lem:specpropd21}. Part \ref{lem:specpropd22} follows by induction, using Part \ref{lem:specpropd21} and the fact, that singular DP-reduction does not decrease the minimum var-degree (\cite[Lemma 5.4]{KullmannZhao2010Extremal}). Finally Part \ref{lem:specpropd23} follows with Lemma \ref{lem:fullsingfs}. \qed
\end{proof}

\begin{corollary}\label{cor:fsd2}
  $F \in \Uclashi{\delta=2}$ is not fs-resolvable iff $F$ is obtained from a clause-set isomorphic to $\Dt{3}$ by a series of full singular unit-extensions (or, equivalently, unit-clause propagation on $F$ yields a clause-set isomorphic to $\Dt{3}$).
\end{corollary}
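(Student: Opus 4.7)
The plan is to reduce the statement to Part~3 of Lemma~\ref{lem:specpropd2}, which already gives the characterisation under the extra hypothesis $\minvdeg(F) = 4$. First I would recall from \cite{Ku99dKo} that $\minvdeg(F) \in \set{2,3,4}$ for every $F \in \Uclashi{\delta=2}$, so it suffices to rule out $\minvdeg(F) \le 3$ whenever $F$ is not fs-resolvable.

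For $\minvdeg(F) \le 3$ I would apply Corollary~\ref{cor:uhit3fs} to obtain an fs-pair $\set{C,D} \sse F$, and then make the small but crucial observation that in any hitting clause-set an fs-pair is automatically \emph{resolvable}: since $\set{C,D}$ is an fs-pair, $C \res D \subsetneq C$, and if $C \res D$ also lay in $F$, then being distinct from $C$ it would have to clash with $C$ by the hitting property, contradicting $C \res D \sse C$ (a clause being clash-free). Hence $F$ is fs-resolvable, which contrapositively forces $\minvdeg(F) = 4$ whenever $F$ is not fs-resolvable, and the forward direction of the corollary then follows immediately from Lemma~\ref{lem:specpropd2}, Part~3.

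For the converse, suppose $F$ arises from some $F_0 \cong \Dt{3}$ by a series of full singular unit-extensions. By iterating Lemma~\ref{lem:fullsingfs} along the chain, $F$ contains an fs-pair iff $F_0$ does; direct inspection of the five clauses of $\Dt{3}$ shows that no pair satisfies $\abs{C \symdif D} = 2$ (the two length-three clauses share the variable-set $\set{1,2,3}$ but differ in all three literals, giving symmetric difference $6$, while every pair involving a length-two clause has symmetric difference $3$ or $4$). Combined with the hitting observation of the previous paragraph, $F$ is therefore not fs-resolvable.

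The equivalent formulation via unit-clause propagation is then immediate: each full singular unit-extension introducing a fresh literal $x$ with unit-clause $\set{x}$ is inverted by exactly one unit-clause propagation step on $\set{x}$, which deletes the unit and strips $\ol x$ from every other clause (since $x$ occurs only in $\set{x}$ while $\ol x$ lies in every other clause), recovering the previous clause-set; iterating retraces the extension chain down to $\Dt{3}$, which has no unit-clauses and so terminates the propagation. I expect the only mildly subtle ingredient is the hitting-clause-set observation upgrading ``contains an fs-pair'' to ``is fs-resolvable''; everything else amounts to bookkeeping on top of Lemmas~\ref{lem:specpropd2} and~\ref{lem:fullsingfs}.
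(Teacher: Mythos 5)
Your proof is correct and takes essentially the same route as the paper: both reduce via $\minvdeg(F)\in\{2,3,4\}$, use Corollary~\ref{cor:uhit3fs} to eliminate the cases $\minvdeg(F)\le 3$, and hand the remaining case to Lemma~\ref{lem:specpropd2}, Part~\ref{lem:specpropd23}. The one place where you genuinely add something is the explicit observation that in a hitting clause-set an fs-pair automatically makes $F$ fs-resolvable: since $C\res D\subsetneq C$ for an fs-pair $\{C,D\}$, the resolvent cannot clash with $C$, so by the hitting condition it cannot already lie in $F$. This bridges a small gap the paper leaves tacit, since Corollary~\ref{cor:uhit3fs} only asserts the existence of an fs-pair while the paper's proof immediately invokes fs-resolvability. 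For the converse you route through Lemma~\ref{lem:fullsingfs} and a direct inspection of $\Dt{3}$ rather than the paper's ``full singular unit-extensions force $\minvdeg(F)\ge 4$'' followed by Lemma~\ref{lem:specpropd2}, Part~\ref{lem:specpropd23} again; both are equally short and rest on the same underlying Lemma~\ref{lem:fullsingfs}, so this is a cosmetic rather than a substantive difference. Your discussion of the unit-clause-propagation equivalence is also fine (the paper's proof does not address the parenthetical at all); the only small point worth recording is that any unit clause in a UHIT necessarily clashes with every other clause, so each UCP step inverts a full singular unit-extension regardless of the order in which units are propagated.
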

\begin{proof}
We have $\minvdeg(F) \in \set{2,3,4}$. If $\minvdeg(F) \le 3$, then by Corollary \ref{cor:uhit3fs} $F$ is fs-resolvable, while every clause-set obtained from $\Dt{3}$ by a series of full singular unit-extensions has $\minvdeg(F) \ge 4$. \qed
\end{proof}

Using that all $F \in \Uclashi{\delta=1}$ are fs-resolvable except of $F = \set{\bot}$, we get:
\begin{corollary}\label{cor:fsr5}
  $F \in \Uclash$ with $c(F) \le 5$ is not fs-resolvable iff $F = \set{\bot}$ or $F \cong \Dt{3}$.
\end{corollary}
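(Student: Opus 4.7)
The plan is to split by deficiency and reduce to the two previous results. The backward direction is essentially immediate: $\set{\bot}$ has no clauses, hence certainly no fs-pair; and $\Dt{3}$ is obtained from itself by the empty series of full singular unit-extensions, so Corollary \ref{cor:fsd2} gives that $\Dt{3}$ is not fs-resolvable (and the property is isomorphism-invariant).

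For the forward direction, suppose $F \in \Uclash$ with $c(F) \le 5$ is not fs-resolvable. The first step is to bound $\delta(F)$. Since $F \in \Uclash$ we have $\sum_{C \in F} 2^{-\abs{C}} = 1$ with every clause of length at most $n(F)$, yielding $c(F) \le 2^{n(F)}$. Combined with $c(F) = n(F) + \delta(F) \le 5$, a quick check of $n \in \set{0,1,2}$ shows that $\delta(F) \ge 3$ is impossible (the tightest case, $n=2, \delta=3$, would require $c=5 > 2^2 = 4$). Hence $\delta(F) \in \set{1,2}$.

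Now split. If $\delta(F) = 1$, the remark stated immediately before the corollary tells us that all $F \in \Uclashi{\delta=1}$ are fs-resolvable except $\set{\bot}$, so $F = \set{\bot}$. If $\delta(F) = 2$, apply Corollary \ref{cor:fsd2}: $F$ is obtained from some $F_0 \cong \Dt{3}$ by a (possibly empty) series of full singular unit-extensions. Each such extension adds exactly one new clause (the unit clause on the freshly introduced variable, the remaining clauses being merely extended by one literal), so the final clause count equals $c(F_0) + \ell$ where $\ell$ is the length of the series. Since $c(F_0) = 5$ and $c(F) \le 5$, we must have $\ell = 0$, hence $F \cong \Dt{3}$. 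No step presents any real obstacle: the argument is purely a bookkeeping consequence of Corollary \ref{cor:fsd2} and the $\delta = 1$ remark, with the elementary counting bound $c \le 2^n$ eliminating higher deficiencies.
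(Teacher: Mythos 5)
Your proof is correct and follows the route the paper intends: the paper states the corollary immediately after noting that $\set{\bot}$ is the only non-fs-resolvable element of $\Uclashi{\delta=1}$, so the intended argument is exactly the deficiency split you carry out, with $\delta\ge 3$ ruled out by the clause count $c(F)\le 2^{n(F)}$ and the $\delta=2$ case settled by Corollary~\ref{cor:fsd2} together with the observation that each full singular unit-extension adds one clause to $c(\Dt{3})=5$. Nothing to add.
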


\section{Number of singular variables vs the singularity index}
\label{sec:uhitnsingvar}

\begin{definition}\label{def:nsv}
  For $F \in \Cls$ let $\bmm{\nsv(F)} := \abs{\varsing(F)} \in \NNZ$, while $\bmm{\nosv(F)} := \abs{\varosing(F)} \in \NNZ$ and $\bmm{\nnosv(F)} := \abs{\varnosing(F)} \in \NNZ$.
\end{definition}
Thus $\nsv(F) = \nosv(F) + \nnosv(F)$. We show that for $F \in \Uclash$ with ``large'' $\singind(F)$ also $\nsv(F)$ must be ``large'' (proving \cite[Conjecture 76]{KullmannZhao2012ConfluenceJ}). First an auxiliary lemma, showing how we can reduce the number of singular variables together with the singularity index:
\begin{lemma}\label{lem:controlsdpuhit1}
  Consider $F \in \Uclash$ with $\varsing(F) \not= \es$. Then there is a singular tuple $\vec{v} = (v_1,\dots,v_m)$ for $F$ with $1 \le m \le 2$ such that $\varsing(\dpi{\vec{v}}(F)) \sse \varsing(F) \sm \var(\set{v_1,\dots,v_m})$ (recall the order-independency of DP for $\Uclash$). More specifically, we can choose $\vec{v} = (v)$ for every $v \in \varnosing(F)$; assume $\varnosing(F) = \es$ in the sequel. For $v \in \varosing(F)$ there is a clause $C$ such that $C \cup \set{v}, C \cup \set{\ol{v}} \in F$ (Corollary \ref{lem:characsingDPuhit}). We can choose again $\vec{v} = (v)$ if for all $x \in C$ we have $\ldeg_F(x) \ge 3$. Otherwise consider some $x \in C$ with $\ldeg_F(x) = 2$ and $\ldeg_F(\ol{x}) \ge 2$. Now we can choose $\vec{v} = (v, \var(x))$.
\end{lemma}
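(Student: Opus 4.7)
My plan is direct bookkeeping of literal-degrees. Writing $G_i$ for the result of the first $i$ DP-steps, one checks $\var(G_i) = \var(F) \sm \{v_1,\dots,v_i\}$, so the containment $\varsing(G_m) \sse \varsing(F) \sm \{v_1,\dots,v_m\}$ reduces to: every $u \in \var(F) \sm \{v_1,\dots,v_m\}$ that is nonsingular in $F$ (both $\ldeg_F(u), \ldeg_F(\ol u) \ge 2$) remains nonsingular in $G_m$. Each of the three prescribed choices of $\vec v$ is then verified by a short literal-counting argument on the clauses deleted and added.

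\textbf{Case 1} ($v \in \varnosing(F)$, so $v$ is $m$-singular for some $m \ge 2$; WLOG $\ldeg_F(v) = 1$). Let $M \in F$ be the main clause (so $v \in M$) and $D_1,\dots,D_m \in F$ the side clauses. By Lemma \ref{lem:characsingDPuhit} each resolvent $R_i := (M \sm \{v\}) \cup (D_i \sm \{\ol v\})$ equals $D_i \sm \{\ol v\}$, so the $R_i$ are automatically non-tautological and pairwise distinct. For a literal $w \ne \pm v$: if $w \notin M$ then $w$ lies in $R_i$ iff $w \in D_i$, giving $\ldeg_{\dpi{v}(F)}(w) = \ldeg_F(w)$; if $w \in M$ then $w \in R_i$ for every $i$, contributing $m$ occurrences, and subtracting the removed clauses while using $\ldeg_F(w) \ge 1 + |\{i : w \in D_i\}|$ yields $\ldeg_{\dpi{v}(F)}(w) \ge m \ge 2$. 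Hence $\vec v = (v)$ works.

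\textbf{Case 2} ($\varnosing(F) = \es$ and $v \in \varosing(F)$). Lemma \ref{lem:characsingDPuhit} / Corollary \ref{cor:charac1sUHIT} writes the two $\pm v$-clauses of $F$ as $C \cup \{v\}$ and $C \cup \{\ol v\}$, and DP on $v$ replaces them by $C$. Consequently $\ldeg_{\dpi{v}(F)}(w) = \ldeg_F(w) - 1$ for $w \in C$, and is unchanged for $w \ne \pm v$ with $w \notin C$. In sub-case~(a), where $\ldeg_F(x) \ge 3$ for every $x \in C$, the drop preserves $\ldeg \ge 2$, so $\vec v = (v)$ suffices. In sub-case~(b) I pick $x \in C$ with $\ldeg_F(x) = 2$; because $x \in C$ already forces $\ldeg_F(x) \ge 2$, and $\varnosing(F) = \es$ combined with $\ldeg_F(x) = 2$ rules out $\var(x)$ being $1$-singular, we get $k := \ldeg_F(\ol x) \ge 2$. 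Let $E_1,\dots,E_k \in F$ enumerate the $\ol x$-clauses; none equals $C \cup \{v\}$ or $C \cup \{\ol v\}$ (those contain $x$, not $\ol x$), so in $\dpi{v}(F)$ the variable $\var(x)$ is $k$-singular with main clause $C$ and sides $E_i$, making $\vec v = (v, \var(x))$ a legitimate singular tuple. Unwinding the two DP-steps,
\[
  G := \dpi{\var(x)}(\dpi{v}(F)) = \bigl(F \sm \{C \cup \{v\}, C \cup \{\ol v\}, E_1,\dots,E_k\}\bigr) \cup \{R_1,\dots,R_k\},
\]
with $R_i := (C \sm \{x\}) \cup (E_i \sm \{\ol x\})$. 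Repeating the count for a literal $w \ne \pm v, \pm x$: if $w \in C$, every $R_i$ contains $w$, and the clauses outside the deleted set contribute non-negatively, giving $\ldeg_G(w) \ge k \ge 2$; if $w \notin C$, the $R_i$-contributions exactly compensate the removed $E_i$-contributions, so $\ldeg_G(w) = \ldeg_F(w) \ge 2$. Thus nonsingularity of $u$ in $F$ transfers to $G$.

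The main obstacle will be sub-case~(b): a single DP on $v$ is insufficient because any literal $x \in C$ with $\ldeg_F(x) = 2$ drops to degree~$1$ after the step, threatening to produce a fresh singular variable. The rescue is that such an $x$ necessarily comes paired with multiplicity $k := \ldeg_F(\ol x) \ge 2$ on the complementary side, so the second DP-step lands in the well-behaved Case~1 pattern, whose $k$-fold amplification of the shared resolvent head $C \sm \{x\}$ restores degree $\ge k \ge 2$ for every literal in $C$ affected by it. This is exactly why the tuple-length bound $m \le 2$ in the lemma is just enough.
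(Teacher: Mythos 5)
The paper states Lemma~\ref{lem:controlsdpuhit1} without proof, so there is no internal argument to compare against; your literal-degree bookkeeping is the natural route and it is correct, including the observation (crucial in Case~1 and in the second DP-step of sub-case~(b)) that Lemma~\ref{lem:characsingDPuhit} forces $M \sm \set{v} \sse D_i$, so the resolvents simplify to $D_i \sm \set{\ol v}$ and therefore re-hit the common part $m$ (resp.\ $k$) times. Two small points worth tightening in a polished write-up: first, in sub-case~(b) the sentence ``$\varnosing(F)=\es$ combined with $\ldeg_F(x)=2$ rules out $\var(x)$ being $1$-singular'' has the logic reversed --- $\ldeg_F(x)=2$ \emph{alone} excludes $1$-singularity (a $1$-singular variable has both literal-degrees equal to $1$), while $\varnosing(F)=\es$ excludes non-$1$-singularity, and the two together yield nonsingularity, hence $\ldeg_F(\ol{x})\ge 2$; second, your exact degree-count equalities in Case~1 and Case~2(b) tacitly use that the resolvents $R_i$ are genuinely new clauses (distinct from the untouched part of $F$) --- this holds because $R_i = D_i \sm \set{\ol v}$ (resp.\ $E_i \sm \set{\ol x}$) would fail to clash with $D_i$ (resp.\ $E_i$) if it were already present, contradicting $F \in \Clash$ --- but the proof as written only notes pairwise distinctness of the $R_i$ among themselves, so a brief remark would make the count airtight.
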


\begin{lemma}\label{lem:nsvsi}
  For $F \in \Uclash$ holds $\nsv(F) \ge \frac 12 \singind(F)$.
\end{lemma}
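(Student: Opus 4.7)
The plan is a straightforward induction on $\singind(F)$ whose entire engine is Lemma \ref{lem:controlsdpuhit1}. The base case $\singind(F)=0$ is immediate: by definition $F$ is then nonsingular, so $\nsv(F)=0 = \tfrac{1}{2}\singind(F)$.

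For the inductive step, assume the bound holds for all $G \in \Uclash$ with $\singind(G) < \singind(F)$, and suppose $\singind(F) \ge 1$, so $\varsing(F) \ne \es$. Invoke Lemma \ref{lem:controlsdpuhit1} to obtain a singular tuple $\vec v = (v_1,\dots,v_m)$ with $m \in \{1,2\}$ satisfying
\[
  \varsing(F') \sse \varsing(F) \sm \var(\{v_1,\dots,v_m\}),
  \qquad F' := \dpi{\vec v}(F).
\]
Since $\vec v$ is a singular tuple, the $m$ component DP-reductions are each singular DP-reductions in their respective intermediate clause-sets, so by the uniqueness of the singularity index (stated just after Lemma \ref{lem:characsingDPuhit}) we get $\singind(F') = \singind(F) - m$.

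Now I read off $\nsv(F')$ in each case. For $m=1$, $v_1 \in \varsing(F)$, giving $\nsv(F') \le \nsv(F)-1$. For $m=2$, $v_1 \in \varosing(F)$ but $v_2 = \var(x)$ has $\vdeg_F(v_2) = \ldeg_F(x)+\ldeg_F(\ol x) \ge 2+2 = 4$, so $v_2 \notin \varsing(F)$, and thus $\varsing(F) \sm \var(\{v_1,v_2\}) = \varsing(F) \sm \{v_1\}$, again yielding $\nsv(F') \le \nsv(F)-1$. Applying the inductive hypothesis to $F'$,
\[
  \nsv(F) \;\ge\; \nsv(F') + 1 \;\ge\; \tfrac{1}{2}\singind(F') + 1 \;=\; \tfrac{1}{2}(\singind(F)-m)+1 \;\ge\; \tfrac{1}{2}\singind(F),
\]
where the last inequality uses $m \le 2$. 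This closes the induction.

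There is no real obstacle: Lemma \ref{lem:controlsdpuhit1} is designed precisely for this argument, and the factor $\tfrac{1}{2}$ in the bound is exactly the worst case $m=2$, where one ``pays'' two reductions of the singularity index for a single reduction of $\nsv$. The only point that requires a moment's care is recognising that in the $m=2$ case the second component $\var(x)$ need not be singular in $F$ itself (it becomes singular only after DP on $v_1$), but since we only need an upper bound on $\nsv(F')$ in terms of $\nsv(F)$ this does not cost anything.
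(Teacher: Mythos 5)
Your proof is correct and follows exactly the paper's argument: induction on $\singind(F)$, using Lemma \ref{lem:controlsdpuhit1} to obtain a singular tuple of length $m \le 2$, then combining $\singind(F') = \singind(F) - m$ with $\nsv(F') \le \nsv(F) - 1$ (which holds because $v_1 \in \varsing(F)$ and $\varsing(F') \sse \varsing(F) \sm \var(\{v_1,\dots,v_m\})$). Your extra case analysis for $m=2$ verifying that $v_2 \notin \varsing(F)$ is a harmless elaboration --- the bound $\nsv(F') \le \nsv(F) - 1$ only needs that $v_1$ is removed from the singular variables, regardless of $v_2$'s status in $F$.
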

\begin{proof}
  We use induction on $\singind(F)$. The statement holds trivially for $\singind(F) = 0$, and so assume $\singind(F) > 0$. Consider a singular tuple $\vec{v} = (v_1,\dots,v_m)$ for $F$ according to Lemma \ref{lem:controlsdpuhit1}, and let $F' := \dpi{\vec{v}}(F)$ (note that $\singind(F') = \singind(F) - m$). Applying the induction hypothesis to $F'$ we get $\nsv(F') \ge \frac 12 \cdot \singind(F') = \frac 12 \cdot (\singind(F) - m) \ge \frac 12  \cdot \singind(F) - 1$, and thus $\nsv(F) \ge \nsv(F') + 1 \ge \frac 12 \cdot \singind(F)$. \qed
\end{proof}

So we get $\singind(F) \le 2 \nsv(F)$ for $F \in \Uclash$. This can be refined:
\begin{corollary}\label{cor:nsvsi}
  For $F \in \Uclash$ holds $\singind(F) \le 2 \nosv(F) + \nnosv(F)$.
\end{corollary}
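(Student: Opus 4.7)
The plan is to prove the statement by induction on $\singind(F)$, tracking the sharper potential $\sigma(F) := 2\nosv(F) + \nnosv(F)$ in place of $2\nsv(F)$ used for Lemma \ref{lem:nsvsi}. The base case $\singind(F) = 0$ is immediate, since then $F$ is nonsingular. For the induction step I would invoke Lemma \ref{lem:controlsdpuhit1} to obtain a singular tuple $\vec{v} = (v_1,\dots,v_m)$ with $1 \le m \le 2$, set $F' := \dpi{\vec{v}}(F)$ so that $\singind(F') = \singind(F) - m$, and verify $\sigma(F) - \sigma(F') \ge m$ in each of the three sub-cases provided by that lemma; together with the inductive hypothesis $\singind(F') \le \sigma(F')$ this closes the step.

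The first sub-case is $m = 1$ with $v_1 \in \varnosing(F)$. Let $C$ be the main clause and $D_1,\dots,D_k$ (with $k \ge 2$) the side clauses of $v_1$. By Lemma \ref{lem:characsingDPuhit}, $\{C, \bca_i D_i\}$ is an fs-pair, hence $C \sm \{v_1\} \sse D_i$ for every $i$; thus each variable $u \in \var(C) \sm \{v_1\}$ appears on one fixed side with literal-degree $\ge k + 1 \ge 3$ in $F$. Combined with the fact that the DP step only deletes $C$ from the literal-counts of such $u$ (so their $\var(C)$-side degrees drop by exactly $1$), this forces (i) no 1-singular variable of $F$ lies in $\var(C)$, (ii) non-1-singular variables in $\var(C)$ stay non-1-singular in $F'$, and (iii) non-singular variables in $\var(C)$ stay non-singular. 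Since degrees outside $\var(C) \cup \{v_1\}$ are unchanged, I conclude $\nosv(F') = \nosv(F)$ and $\nnosv(F') = \nnosv(F) - 1$, so $\sigma(F) - \sigma(F') = 1$.

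The second sub-case is $m = 1$ with $v_1 \in \varosing(F)$ and every literal of the common part $C$ of the fs-pair $\{C \cup \{v_1\}, C \cup \{\ol{v_1}\}\}$ having literal-degree $\ge 3$ in $F$. A parallel degree-accounting (now the two clauses $C \cup \{v_1\}$, $C \cup \{\ol{v_1}\}$ are replaced by the single clause $C$, so $\var(C)$-side degrees drop by exactly $1$) shows that all variables in $\var(C)$ stay non-singular in $F'$, giving $\nosv(F') = \nosv(F) - 1$ and $\nnosv(F') = 0$, hence $\sigma(F) - \sigma(F') = 2 > m$. In the remaining $m = 2$ sub-case we have $\varnosing(F) = \es$ (so $\sigma(F) = 2\nsv(F)$) and $v_2 = \var(x)$ is non-singular in $F$; thus $\abs{\varsing(F) \sm \var(\{v_1,v_2\})} = \nsv(F) - 1$, and the containment statement of Lemma \ref{lem:controlsdpuhit1} yields $\nsv(F') \le \nsv(F) - 1$, whence $\sigma(F') \le 2\nsv(F') \le \sigma(F) - 2$.

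The main obstacle I foresee is the bookkeeping in the two $m = 1$ sub-cases: one must rule out that the DP step silently creates new 1-singular variables from previously non-singular or non-1-singular ones, which would let $\sigma$ drop by less than $1$. The structural fact that handles this is the fs-pair identity from Lemma \ref{lem:characsingDPuhit}, which forces every surviving variable of $\var(C)$ to appear in all side clauses on the same side with many copies, so after DP these variables still sit comfortably above the 1-singular threshold. Once this degree-control is in place, matching $\sigma$-decrements against $\singind$-decrements in each sub-case is routine.
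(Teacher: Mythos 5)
Your proof is correct, and it takes a genuinely different route from the paper. The paper's proof first runs sDP-reduction on all non-1-singular variables (obtaining $F'$ with $\varnosing(F') = \es$), invokes an external result (Corollary 25, Part 1 of \cite{KullmannZhao2012ConfluenceJ}) to control how $\varsing$ and $\var$ shrink under that reduction, and then applies Lemma \ref{lem:nsvsi} as a black box to $F'$, where $\nsv(F') = \nosv(F')$. You instead re-run the induction underlying Lemma \ref{lem:nsvsi}, via the case analysis of Lemma \ref{lem:controlsdpuhit1}, but replace the potential $2\nsv$ by the weighted potential $\sigma = 2\nosv + \nnosv$ and verify that it drops by at least $m$ in each of the three sub-cases. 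The degree-accounting you use (each variable $u \in \var(C)\sm\set{v_1}$ occurs on the $C$-side in all side-clauses, by the fs-pair identity of Lemma \ref{lem:characsingDPuhit}, so its $C$-side literal-degree is $\ge m+1$ and drops by exactly $1$) is exactly the structural fact needed to rule out new 1-singular variables and to keep the bookkeeping tight, and you apply it correctly in all three sub-cases. The upshot is that your argument is self-contained with respect to the lemmas stated in the paper (no appeal to the cited external corollary) and directly subsumes Lemma \ref{lem:nsvsi} (just bound $\sigma \le 2\nsv$), whereas the paper's argument is shorter by reusing that lemma but requires the extra external input to transfer it back to $F$. Both are valid.
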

\begin{proof}
  We perform first sDP-reduction (only) on the non-1-singular variables, until they all disappear, obtaining $F' \in \Uclash$. By \cite[Corollary 25, Part 1]{KullmannZhao2012ConfluenceJ}, we have $\var(F) \sm \var(F') \sse \varsing(F)$ and $\varsing(F') \sse \varsing(F)$. We now apply Lemma \ref{lem:nsvsi} to $F'$. \qed
\end{proof}

As an application we obtain that after an fs-resolution on a nonsingular UHIT, three singular DP-reductions are sufficient to remove all singularities:
\begin{lemma}\label{lem:si2sr}
  Consider an fs-resolvable $F \in \Uclashns$, where fs-resolution yields $F'$ (thus $F' \in \Uclash$). Then $\singind(F') \le 3$.
\end{lemma}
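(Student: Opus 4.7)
The plan is to apply Corollary \ref{cor:nsvsi}, which yields $\singind(F') \le 2\nosv(F') + \nnosv(F')$, and to bound the right-hand side by $3$ via a structural analysis of which variables of $F'$ become singular under the fs-resolution $F \leadsto F'$.

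Write $C = E \cup \{x\}$ and $D = E \cup \{\ol{x}\}$ for the fs-pair, and set $u := \var(x)$. Since $F \in \Uclashns$ we have $\ldeg_F(x), \ldeg_F(\ol{x}) \ge 2$, so the fs-resolution is strict: $\var(F') = \var(F)$. For any variable $w \notin \var(E) \cup \{u\}$ the literal-degrees are unchanged, so $w$ stays nonsingular in $F'$. For $w \in \var(E)$ with literal $y \in E$, one has $\ldeg_{F'}(y) = \ldeg_F(y) - 1$ and $\ldeg_{F'}(\ol{y}) = \ldeg_F(\ol{y}) \ge 2$; thus $w$ is singular in $F'$ iff $\ldeg_F(y) = 2$, and in that case it is non-1-singular. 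For $u$ both literal-degrees drop by $1$, so $u$ is 1-singular iff $\ldeg_F(x) = \ldeg_F(\ol{x}) = 2$. In particular $\nosv(F') \le 1$.

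Let $Y := \{w \in \var(E) : \ldeg_F(y_w) = 2\}$ where $y_w \in E$ is the literal of $w$ in $E$. Then $\nnosv(F') \le |Y| + 1$, so it suffices to establish $|Y| \le 1$ when $u$ is 1-singular, $|Y| \le 2$ when $u$ is non-1-singular, and $|Y| \le 3$ when $u$ is nonsingular. In the 1-singular case, Corollary \ref{cor:charac1sUHIT} applied to $u$ in $F'$ forces the two other clauses $C^*, D^* \in F$ containing $x, \ol{x}$ to form an fs-pair, so $C^* = G \cup \{x\}$ and $D^* = G \cup \{\ol{x}\}$ for some clause $G$; moreover $y_w \notin G$ for every $w \in Y$, since each $y_w$ appears only in $C, D$.

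The key structural step is an elementary case analysis of the clashes of an arbitrary $C'' \in F \setminus \{C, D\}$: since $C''$ cannot contain both $x$ and $\ol{x}$, its clashes with both $C$ and $D$ force either $C''$ to contain $\ol{z}$ for some $z \in E$, or $C''$ to coincide with $C^*$ (giving $x \in C''$) or $D^*$ (giving $\ol{x} \in C''$). Applying this analysis to $C^*, D^*$ themselves shows $G$ contains $\ol{z}$ for some $z \in E$; combining with the UHIT measure identity $\sum_{C' \in F} 2^{-|C'|} = 1$, the nonsingularity requirement $\ldeg_F(\ol{y_w}) \ge 2$ for each $w \in Y$, and the forced clashes of any further $F$-clauses with $C^*, D^*$ (producing literals from $\ol{G}$ in them), should yield a contradiction from $|Y| \ge 2$ --- and analogous bookkeeping yields the bounds $|Y| \le 2, 3$ in the other two sub-cases. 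The main obstacle is precisely this combinatorial argument: the hitting/measure constraints together with nonsingularity of $F$ form a rigid system that breaks when too many literals of $E$ have degree exactly $2$ in $F$, but exhibiting the breakdown requires tracking the structure of the four pivot clauses $C, D, C^*, D^*$ against their clash obligations with the remainder of $F$ (in particular, showing that the hypothetical extra clauses forced by $|Y|$ too large either violate $\sum 2^{-|C'|} = 1$ or introduce a singular variable of $F$, contradicting $F \in \Uclashns$).
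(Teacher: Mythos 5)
Your overall strategy matches the paper's: apply Corollary~\ref{cor:nsvsi}, observe that only $\var(x)$ and variables of $E$ can become singular under the fs-resolution $F \leadsto F'$, and then bound the number of such variables. The degree bookkeeping in your first two paragraphs is correct, as is the reduction to bounding $\abs{Y}$ (with the correct targets in each of the three sub-cases for $u = \var(x)$). But the decisive step --- actually establishing those bounds on $\abs{Y}$ --- is left unfinished: you write ``should yield a contradiction'' and ``analogous bookkeeping yields,'' and the sketch (tracking clashes of arbitrary clauses against $C$, $D$, $C^*$, $D^*$ together with the identity $\sum_{C' \in F} 2^{-\abs{C'}} = 1$) is neither carried through nor clearly a winning route. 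That case analysis is where your proof stops being a proof.

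The tool you miss is already in the paper: Corollary~\ref{cor:Smuonly11sgen}, applied to the resolvent clause $E \in F'$, gives $\abs{Y} \le 1$ directly and unconditionally. Indeed, if some literal $y \in E$ has $\ldeg_{F'}(y) = 1$, that corollary forces $\ldeg_{F'}(z) \ge 2$ for every other $z \in E$, so at most one literal of $E$ can have $F'$-degree $1$, i.e.\ $\abs{Y} \le 1$. Feeding $\abs{Y} \le 1$ back into your own case split then yields $\singind(F') \le 2\cdot 1 + 1 = 3$ (when $u$ is 1-singular), $\le 0 + 2 = 2$ (when $u$ is non-1-singular), and $\le 0 + 1 = 1$ (when $u$ is nonsingular) --- done. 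This is exactly the paper's argument stated contrapositively: assume $\singind(F') \ge 4$, get $\nsv(F') \ge 2$ via Lemma~\ref{lem:nsvsi}, conclude $\varsing(F') = \{\var(x), w\}$ with $w \in \var(E)$ by Corollary~\ref{cor:Smuonly11sgen}, note $w$ is non-1-singular because $F$ is nonsingular, and contradict Corollary~\ref{cor:nsvsi}. So the gap is concrete: you rederived the setup but then, instead of applying the lemma the paper provides for exactly this purpose, started a fresh combinatorial analysis and did not complete it.
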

\begin{proof}
Let $F' = (F \sm \set{C,D}) \cup \set{R}$ with $R := (C \cup D) \sm \set{v,\ol{v}}$. Assume $\singind(F') \ge 4$. Thus by Lemma \ref{lem:nsvsi} we have $\nsv(F') \ge 2$. By Corollary \ref{cor:Smuonly11sgen} follows $\varsing(F') = \set{v,w}$, where $w \in \var(R)$, since only at most literal of $R$ can have become singular in $F'$. But since $F$ is nonsingular, the variable $w$ is non-1-singular, contradicting Corollary \ref{cor:nsvsi}. \qed
\end{proof}

\section{Reducing sub-clause-sets to clauses: ``factors''}
\label{sec:redsclscl}

What clause-sets $F$ are logically equivalent to clauses $C$ ? If in some $F \in \Cls$ we find some $F' \sse F$ (logically) equivalent to $C$, then $F$ is equivalent to $(F \sm F') \cup \set{C}$. In preparation for the easy answer, note that for all $F \in \Cls \sm \set{\top}$ holds $F = \set{\bca F} \cor \set{D \sm \bca F : D \in F}$.
\begin{lemma}\label{lem:clseqcl}
  For $F \in \Cls$ and $C \in \Cl$ the following properties are equivalent:
  \begin{enumerate}
  \item\label{lem:clseqcl1} $F$ is logically equivalent to $\set{C}$.
  \item\label{lem:clseqcl2} $F \ne \top$, $\bca F = C$, and $\set{D \sm C : D \in F} \in \Usat$.
  \item\label{lem:clseqcl3} There is $G \in \Usat$, $\var(G) \cap \var(C) = \es$, such that $F = \set{C} \cor G$.
  \item\label{lem:clseqcl4} There is $G \in \Usat$ with $F = \set{C} \cor G$.
  \end{enumerate}
\end{lemma}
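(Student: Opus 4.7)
The plan is to establish the cycle $(4) \Rightarrow (1) \Rightarrow (2) \Rightarrow (3) \Rightarrow (4)$, since three of the four links are very short. The implication $(3) \Rightarrow (4)$ is just forgetting the variable-disjointness hypothesis. The implication $(4) \Rightarrow (1)$ is already recorded right after Definition \ref{def:cor}: $F \cor G$ is logically equivalent to the disjunction $F \vee G$, and if $G \in \Usat$ this disjunction collapses to $F$; applied with $F$ replaced by $\set{C}$ we get $\set{C} \cor G \equiv \set{C}$.

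The substantive direction is $(1) \Rightarrow (2)$. Write $C' := \bca F$ and $G' := \set{D \sm C : D \in F}$. First, $F \ne \top$: otherwise $F$ is a tautology while $\set{C}$, with $C$ clash-free, is falsified by the assignment $\ol{C}$ (if $C = \bot$, use any nonempty assignment). Next, I would use the elementary fact that for two clauses $X,Y \in \Cl$ the one-clause set $\set{X}$ implies $Y$ iff $X \sse Y$, proved by testing the assignment $\set{x}$ for any $x \in X \sm Y$. Applying this in both directions: from $\set{C} \equiv F \vdash D$ for each $D \in F$ we get $C \sse D$, hence $C \sse \bca F = C'$; and since each $D \in F$ satisfies $C' \sse D$ we have $\set{C'} \vdash F \equiv \set{C}$, whence $C' \sse C$. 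Thus $C = C' = \bca F$. Finally, for $G' \in \Usat$: by clash-freeness of each $D \in F$ together with $C \sse D$, the literals of $G'$ are disjoint from $\lit(C)$. If $G'$ were satisfiable, pick a satisfying $B \sse \lit(G')$; then $B$ satisfies $F = \set{C} \cor G'$ (each $D = C \cup D'$ with $D' \in G'$ is hit by $B$ via $D'$) while $B \cap C = \es$, contradicting $F \equiv \set{C}$.

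For $(2) \Rightarrow (3)$ I set $G := G' = \set{D \sm C : D \in F}$, which is in $\Usat$ by hypothesis and satisfies $\var(G) \cap \var(C) = \es$ again by clash-freeness. One then verifies $F = \set{C} \cor G$: each $D \in F$ decomposes as $D = C \cup (D \sm C)$ with $C \cap \ol{D \sm C} = \es$ (both from $C \sse D$ and clash-freeness of $D$), giving $F \sse \set{C} \cor G$; conversely every element of $\set{C} \cor G$ is of the form $C \cup (E \sm C)$ for $E \in F$, and equals $E$ since $C \sse E$.

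The only step requiring thought is the argument $(1) \Rightarrow (2)$, and within it the unsatisfiability of $G'$; everything else reduces to the ``$\set{X} \vdash Y \Lra X \sse Y$'' principle and the decomposition $F = \set{\bca F} \cor \set{D \sm \bca F : D \in F}$ already noted before the lemma. I do not anticipate any genuine obstacle: once $C = \bca F$ is in hand, the variable-disjointness of $C$ and $G'$ makes every satisfying assignment of $G'$ automatically non-satisfying for $C$ yet satisfying for $F$, forcing $G' \in \Usat$.
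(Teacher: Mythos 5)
Your proof is correct, and it is essentially the intended one: the paper gives no proof of Lemma~\ref{lem:clseqcl} but explicitly prepares the reader for it with the decomposition $F = \set{\bca F} \cor \set{D \sm \bca F : D \in F}$ and the remark after Definition~\ref{def:cor} that $F \cor G$ is logically equivalent to $F$ when $G \in \Usat$, both of which you invoke. All four implications in your cycle $(4) \Ra (1) \Ra (2) \Ra (3) \Ra (4)$ check out; in particular the single-clause implication criterion $\set{X} \vdash Y \Lra X \sse Y$ and the variable-disjointness between $C$ and $\set{D \sm C : D \in F}$ (from clash-freeness of $D$ together with $C \sse D$) are exactly the points that make the equivalence work. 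The only place where a reader might want half a sentence more is the remark that a satisfying assignment $B$ for $G'$ can be taken inside $\lit(G')$ — say by intersecting an arbitrary satisfying clause with $\lit(G')$ — but this is immediate and you indicate it.
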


Clause-sets equivalent to clauses we call ``clause-factors'':
\begin{definition}\label{def:clfactor}
  A \textbf{clause-factor} is some $F \in \Cls \sm \set{\top}$ with $\set{C \sm \bca F : C \in F} \in \Usat$. The \textbf{clause-factors of $F \in \Cls$} are the sub-clause-sets of $F$ which are themselves clause-factors. A clause-factor $F$ of $F'$ is \textbf{trivial} if $c(F) = 1$ or $F \in \Usat \wedge F' = F$, otherwise \textbf{nontrivial}. The \textbf{intersection of a clause-factor} $F$ is $\bca F \in \Cl$. The \textbf{residue of a clause-factor} $F$ is $\set{C \sm \bca F : C \in F} \in \Usat$; a \textbf{residual clause-factor of $F \in \Cls$} is the residue of a clause-factor of $F$.
\end{definition}
Subsets of irredundant clause-sets are irredundant again, and thus clause-factors of irredundant clause-sets are irredundant (as clause-sets):
\begin{lemma}\label{lem:clfacinh}
  Consider a residual clause-factor $G$ of $F \in \Cls$. If $F$ is irredundant, then $G \in \Musat$. If $F \in \Clash$, then $G \in \Uclash$.
\end{lemma}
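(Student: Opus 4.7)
The plan is to exploit the structural description of clause-factors from Lemma~\ref{lem:clseqcl}. Let $F'\sse F$ be the clause-factor whose residue is $G$, and set $C:=\bca F'\in\Cl$. Then Lemma~\ref{lem:clseqcl} yields $F'=\set{C}\cor G$ with $\var(C)\cap\var(G)=\es$, and because $\var(D)\cap\var(C)=\es$ for every $D\in G$, the map $D\mapsto C\cup D$ is a bijection $G\ra F'$. Throughout, $G\in\Usat$ holds by the very definition of a clause-factor, so I only need to verify irredundancy for the first claim and the hitting property for the second.

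For the first assertion, I would first note that subsets of irredundant clause-sets are irredundant (remarked in Section~\ref{sec:Preliminaries}), so $F'\in\Irred$. Then I transfer this to $G$: were $G$ redundant, there would be some $D\in G$ with $G$ logically equivalent to $G\sm\set{D}$; applying $\set{C}\cor(-)$, which preserves logical equivalence since it realises disjunction with the clause $C$, and using the bijection to identify the removed clause, we would obtain that $F'$ is logically equivalent to $F'\sm\set{C\cup D}$, contradicting $F'\in\Irred$. Hence $G\in\Irred$, and together with $G\in\Usat$ this gives $G\in\Musat$.

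For the second assertion, $F'\sse F\in\Clash$ immediately yields $F'\in\Clash$, since being hitting is hereditary. Now pick two distinct $D_1,D_2\in G$; the corresponding $E_i:=C\cup D_i\in F'$ are distinct and must clash on some literal $x$, so $x\in E_1$ and $\ol{x}\in E_2$. Because $C$ is a clause (hence clash-free) and $\var(C)$ is disjoint from $\var(D_1)\cup\var(D_2)$, the literal $x$ cannot lie in $C$ (otherwise $\ol{x}$ would be forced into $C$ or into $D_2$, both impossible), so $x\in D_1$ and symmetrically $\ol{x}\in D_2$. Thus $G\in\Clash$, and combined with $G\in\Usat$ we conclude $G\in\Uclash$.

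I do not anticipate a substantive obstacle. The only step needing a moment of care is the passage of (logical) redundancy through the $\cor$-construction, which reduces to the observation that $F'$ represents the boolean disjunction of the single clause $C$ with the CNF $G$, so disjoining a fixed clause preserves logical equivalence. The clash analysis in part~2 is equally transparent once the variable-disjointness of $C$ and the residue is kept in mind.
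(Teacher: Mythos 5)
Your proof is correct and follows the route the paper leaves implicit: the paper's only explicit remark is the preamble sentence that subsets of irredundant (and hitting) clause-sets inherit those properties, hence the factor $F'$ does, and the passage from $F'$ to its residue $G$ via the bijection $D\mapsto C\cup D$ (with $C=\bca F'$ variable-disjoint from $G$) is treated as immediate. You spell out exactly this transfer — using Lemma~\ref{lem:clseqcl} for the structural form, the $\cor$-equivalence argument for irredundancy, and the variable-disjointness for the clash analysis — so there is no substantive difference, only a welcome filling-in of detail.
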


\subsection{Clause-factorisations}
\label{sec:factorisation}

We see that a combinatorial disjunction $F \cor G$ is the union of the clause-factors $\set{C} \cor G$ for $C \in F$. If we want just to single out a single clause of $F$ for this operation, keeping the rest of $F$, we do this by ``pointing'' $F$:
\begin{definition}\label{def:pointedcor}
  A \textbf{pointed clause-set} is a pair $(F,C) \in \Cls \times \Cl$ with $C \in F$. For a pointed clause-set $(F,C)$ and $G \in \Cls$ we define the \textbf{pointed combinatorial disjunction} (``pcd''; recall Definition \ref{def:cor}) as
  \begin{displaymath}
    \bmm{(F,C) \cor G} := (F \sm \set{C}) \cup (\set{C} \cor G) \in \Cls.
  \end{displaymath}
\end{definition}
The simplest choice for $F$ is $\set{C} \cor G = (\set{C},C) \cor G$. The two simplest choices for $G$ are $(F,C) \cor \top = F \sm \set{C}$ and $(F,C) \cor \set{\bot} = F$. Using the interpretation of clause-sets as CNFs, $(F,C) \cor G$ is logically equivalent to $(F \sm \set{C}) \wedge (C \vee G)$; so if $G$ is unsatisfiable, then $(F,C) \cor G$ is logically equivalent to $F$.

\begin{definition}\label{def:clausefac}
  A pointed combinatorial disjunction $(F,C) \cor G$ (according to Definition \ref{def:pointedcor}) is called a \textbf{clause-factorisation (of $F$ and $G$ via $C$)}, if $\var(C) \cap \var(G) = \es$, the union is disjoint (i.e., $(F \sm \set{C}) \cap (\set{C} \cor G) = \es$), and furthermore $G \in \Usat$ holds. In a clause-factorisation $(F,C) \cor G$ we call $\set{C} \cor G$ \textbf{the factor}, $G$ the \textbf{residual factor}, and $F$ the \textbf{cofactor}. A clause-factorisation is \textbf{trivial}, if $\set{F,G} \cap \set{\set{\bot}} \ne \es$, otherwise \textbf{nontrivial}.
\end{definition}
The trivial clause-factorisations are $(F,C) \cor \set{\bot} = F$ and $(\set{\bot},\bot) \cor G = G$ for $F \in \Cls$ and $G \in \Usat$. Correspondingly, for the trivial factor $\set{C}$ of $F \in \Cls$, $C \in F$, the intersection is $C$, the residue is $\set{\bot}$, and the cofactor is $F$, while for the trivial factor $G$ of $G \in \Usat$ the intersection is $\bot$ and the cofactor is $\set{\bot}$. Directly from the definitions we obtain the basic properties:

\begin{lemma}\label{lem:propclfac}
  Consider $F \in \Cls$ and a clause-factorisation $F = (F_0,C) \cor G$.
  \begin{enumerate}
  \item\label{lem:propclfac2} $\delta(F) = \delta(F_0) + \delta(G) - 1 + \abs{\var(F_0) \cap \var(G)}$.
  \item\label{lem:propclfac1} $\set{F_0,G} \subset \Musat \Lra F \in \Musat$.
  \item\label{lem:propclfac3} If $F \in \Musat$, then:
    \begin{enumerate}
    \item\label{lem:propclfac3a} $1 \le \delta(F_0) \le \delta(F)$ and $1 \le \delta(G) \le \delta(F)$.
    \item\label{lem:propclfac3b} $\delta(F) = \delta(F_0)$ iff $\var(F_0) \cap \var(G) = \es$ and $G \in \Musati{\delta=1}$.
    \item\label{lem:propclfac3c} $\delta(F) = \delta(G)$ iff $\var(F_0) \cap \var(G) = \es$ and $F_0 \in \Musati{\delta=1}$.
    \item\label{lem:propclfac3d} If $F$ is nonsingular and $F \ne \set{\bot}$:
      \begin{enumerate}
      \item\label{lem:propclfac3di} If $\varosing(F_0) \cap \var(G) = \es$, then $\varosing(F_0) = \es$.
      \item\label{lem:propclfac3dii} If $\var(F_0) \cap \varsing(G) = \es$, then $G$ is nonsingular.
      \item\label{lem:propclfac3diii} If the factorisation in nontrivial: $\delta(F_0) < \delta(F)$ and $\delta(G) < \delta(F)$.
      \end{enumerate}
    \end{enumerate}
  \end{enumerate}
\end{lemma}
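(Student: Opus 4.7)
The plan is to work through the five items of Lemma~\ref{lem:propclfac} using only the defining constraints of a clause-factorisation ($\var(C) \cap \var(G) = \es$, $(F_0 \sm \set{C}) \cap (\set{C} \cor G) = \es$, and $G \in \Usat$). For Part~1 I perform a direct double count: $c(F) = c(F_0) + c(G) - 1$, since $\var(C) \cap \var(G) = \es$ ensures the $c(G)$ clauses $C \cup D'$ are pairwise distinct and clash-free (and disjoint from $F_0 \sm \set{C}$); and $n(F) = n(F_0) + n(G) - \abs{\var(F_0) \cap \var(G)}$ by inclusion--exclusion using $\var(C) \sse \var(F_0)$. Subtracting yields the formula.

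For Part~2 the central identity is $F \equiv F_0$ as CNFs, coming from $\bigwedge_{D' \in G}(C \vee D') \equiv C$ because $G \in \Usat$. I would prove the backward direction $F \in \Musat \Ra F_0, G \in \Musat$ by case analysis on which clause is removed: for $D \in F_0 \sm \set{C}$, $F \sm \set{D}$ is another clause-factorisation equivalent to $F_0 \sm \set{D}$; satisfiability of $F_0 \sm \set{C}$ is lifted from that of any $F \sm \set{C \cup D^*}$; and minimality of $G$ follows by contradiction, since $G \sm \set{D^*} \in \Usat$ would make $F \sm \set{C \cup D^*} \equiv F_0 \in \Usat$. The forward direction follows symmetrically. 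Parts~3(a)--(c) then drop out of Part~1 combined with $\delta(F_0), \delta(G) \ge 1$ (via Part~2), since all summands on the right of the Part~1 identity are nonnegative.

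For Part~3(d) my approach is to track literal-degrees across the replacement $C \leadsto \set{C} \cor G$: for $v \in \var(F_0) \sm \var(G)$ with $v \notin \var(C)$ the degrees are unchanged from $F_0$; for $v \in C$ (WLOG) we have $\ldeg_F(v) = \ldeg_{F_0}(v) - 1 + c(G)$ and $\ldeg_F(\ol{v}) = \ldeg_{F_0}(\ol{v})$; for $v \in \var(G) \sm \var(F_0)$ the degrees equal those in $G$, using $\var(C) \cap \var(G) = \es$. This settles~(i): a 1-singular $v \in \varosing(F_0)$ with $v \notin \var(G)$ retains a literal-degree equal to $1$ in $F$ (whether or not $v \in \var(C)$), contradicting nonsingularity of $F$; and~(ii): a singular $v \in \varsing(G)$ with $v \notin \var(F_0)$ keeps its degrees, hence stays singular in $F$. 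For~(iii), failure of $\delta(F_0) < \delta(F)$ would by~(b) force $\delta(G) = 1$ and $\var(F_0) \cap \var(G) = \es$, while nontriviality gives $G \ne \set{\bot}$; since $\Musatnsi{\delta=1} = \set{\set{\bot}}$, $G$ has a singular variable, which is outside $\var(F_0)$ by disjointness and therefore contradicts~(ii). For $\delta(G) < \delta(F)$ the symmetric argument needs the sharper statement that every $F_0 \in \Musati{\delta=1} \sm \set{\set{\bot}}$ contains a \emph{1-singular} variable (to invoke~(i)), which I would establish by induction on $n(F_0)$: an Aharoni--Linial unit clause $\set{y} \in F_0$ yields $y \in \varsing(F_0)$, and singular DP on $y$ preserves literal-degrees of all other variables, so a 1-singular variable of the reduct lifts back to one in $F_0$.

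The main obstacle I foresee is exactly this last strengthening in Part~3(d)(iii): the $1$-singularity (rather than mere singularity) of some variable in $F_0 \in \Musati{\delta=1} \sm \set{\set{\bot}}$ is needed because~(i) is only formulated for $\varosing$. The degree-preservation under singular DP on MU clause-sets is essentially routine, but requires checking that the resolvents obtained during elimination do not coincide with existing clauses, so that the literal-degrees of variables other than the eliminated one are genuinely unchanged.
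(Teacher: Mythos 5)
Your handling of Parts~1, 3(a)--(c), 3(d)(i)--(ii), and the backward implication of Part~2 is correct. The serious gap is the forward implication of Part~2, which you dispatch with ``follows symmetrically''. It does not: in the backward direction one merely \emph{restricts} satisfying assignments of subsets of $F$, whereas the forward direction has to \emph{combine} a satisfying assignment of $F_0 \sm \set{C}$ (which must falsify $C$) with one of $G \sm \set{D^*}$, and these may conflict on $\var(F_0) \cap \var(G)$. Indeed $\set{F_0,G} \subset \Musat \Rightarrow F \in \Musat$ fails in general: take
\[
  F_0 := \set{\set{1,2},\set{-1,-3,-4},\set{-2,-3,-4},\set{3},\set{4}} \in \Musati{\delta=1}, \quad C := \set{1,2}, \quad G := A(\set{3,4}).
\]
All conditions of Definition~\ref{def:clausefac} hold, $F_0, G \in \Musat$, and $\var(F_0) \cap \var(G) = \set{3,4}$; but $F := (F_0,C) \cor G$ has $F \sm \set{\set{1,2,3,4}}$ unsatisfiable (the unit clauses $\set{3}, \set{4}$ together with $\set{-1,-3,-4}, \set{-2,-3,-4}$ force a unique total assignment, which falsifies the remaining clause $\set{1,2,-3,-4}$), so $F \notin \Musat$. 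A correct argument can therefore establish the forward implication only under the extra hypothesis $\var(F_0) \cap \var(G) = \es$ (where the two partial assignments live on disjoint variable-sets and join trivially); fortunately only the backward implication is invoked elsewhere (Lemma~\ref{lem:clfacinh}, proof of Theorem~\ref{thm:uppboundNV}).

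For the second inequality in Part~3(d)(iii) you rightly reduced matters to the fact that every $F_0 \in \Musati{\delta=1}$ with $n(F_0) \ge 1$ has a $1$-singular variable, which is true. But the sketch has two slips: \cite{AhLi86} yields an fs-pair $\set{E \addcup \set{y}, E \addcup \set{\ol y}} \sse F_0$, not a unit clause, and a variable occurring in an fs-pair need not be singular; and fs-resolving that pair does \emph{not} preserve the literal-degrees of all other variables, since each literal of the common part $E$ loses one occurrence (so a $1$-singular witness in the reduct can become $2$-singular in $F_0$). The intended induction thus needs to be set up with more care than your outline suggests.
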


The relation between clause-factorisations and -factors is now easy to see:
\begin{lemma}\label{lem:ntrclsub}
  $F \in \Cls$ allows a nontrivial clause-factorisation iff $F$ contains a nontrivial clause-factor.
\end{lemma}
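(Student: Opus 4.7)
\begin{prf}
The plan is to establish a direct correspondence between nontrivial clause-factorisations and nontrivial clause-factors by translating between the pointed-combinatorial-disjunction description and the ``intersection plus residue'' description via Lemma \ref{lem:clseqcl}.

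For the forward direction, given a nontrivial clause-factorisation $F = (F_0,C) \cor G$, the natural candidate for a nontrivial clause-factor is $H := \set{C} \cor G \sse F$. By Lemma \ref{lem:clseqcl}, $H$ is logically equivalent to $\set{C}$, so $H$ is a clause-factor of $F$ with intersection $C$ and residue $G$. Nontriviality of $H$ is then checked in two steps: first, $c(H) = c(G) \ge 2$, because $G \in \Usat$ together with $G \ne \set{\bot}$ forces $c(G) \ge 2$, and the map $D \mapsto C \cup D$ is a bijection from $G$ to $H$ by $\var(C) \cap \var(G) = \es$; second, if $H = F$, then the disjointness condition in the pcd forces $F_0 \sm \set{C} = \es$ and hence $F_0 = \set{C}$, so $C \ne \bot$ (from $F_0 \ne \set{\bot}$), so $H \equiv \set{C}$ is satisfiable, i.e.\ $H \notin \Usat$.

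For the backward direction, given a nontrivial clause-factor $H \sse F$ with intersection $C := \bca H$ and residue $G := \set{D \sm C : D \in H} \in \Usat$ (so $H = \set{C} \cor G$ by Lemma \ref{lem:clseqcl}), I would try $F_0 := (F \sm H) \cup \set{C}$, giving the pcd $(F_0, C) \cor G = (F \sm H) \cup H = F$, provided $C \notin F \sm H$ (so that removing $\set{C}$ from $F_0$ really restores $F \sm H$, keeping the union with $H$ disjoint); this subsumes both $C \notin F$ and $C \in H$. The variable-disjointness and $G \in \Usat$ are inherited from $H$; nontriviality of the resulting factorisation then reduces to excluding $F_0 = \set{\bot}$, which would force $F = H$ and $C = \bot$, hence $H = \set{\bot} \cor G = G \in \Usat$, contradicting nontriviality of $H$.

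The main obstacle is the residual case $C \in F \sm H$. The first remedy is to replace $H$ by $H \cup \set{C}$, which remains a clause-factor with intersection $C$ and residue $G \cup \set{\bot}$, and is nontrivial except in the edge case where simultaneously $C = \bot$ and $H \cup \set{C} = F$; this edge case boils down to $F = G \cup \set{\bot}$ with $H = G$ and $\bot \notin G$. In that edge case we bypass the obstruction by choosing a different nontrivial clause-factor: pick any $D_0 \in G$ (which exists since $c(G) \ge 2$) and take $H^* := \set{D_0, \bot}$. Then $H^*$ is a clause-factor (intersection $\bot$, residue $\set{D_0, \bot} \in \Usat$), is nontrivial ($c(H^*) = 2 < c(F) = c(G) + 1 \ge 3$, so $H^* \ne F$), and satisfies $\bca H^* = \bot \in H^*$, so it falls into the ``good'' case above and supplies the required nontrivial clause-factorisation of $F$.
\end{prf}
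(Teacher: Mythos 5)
Your proof is correct. The paper declares this lemma right after ``The relation between clause-factorisations and -factors is now easy to see:'' and gives no argument, so there is nothing to compare against directly; your write-up supplies the intended correspondence $H = \set{C} \cor G \leftrightarrow (F_0,C)\cor G$ in both directions and, more importantly, shows it is not quite as immediate as the paper suggests for general $F\in\Cls$.

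The forward direction is straightforward: the candidate factor $\set{C}\cor G$ has $c(\set{C}\cor G)=c(G)\ge 2$ (using $G\in\Usat$, $G\ne\set{\bot}$, and variable-disjointness), and the only danger for triviality is $H=F$ with $H\in\Usat$, which you rule out via $F_0\ne\set{\bot}\Rightarrow C\ne\bot$. In the backward direction, the obstruction $\bca H\in F\sm H$ is genuine for general $\Cls$ (e.g.\ $F=\set{\bot,\set{1},\set{-1}}$ with $H=\set{\set{1},\set{-1}}$), and your two-stage repair --- first absorbing $\bca H$ into $H$, then, in the residual edge case $F=G\cup\set{\bot}$, switching to the factor $\set{D_0,\bot}$ --- is sound; I checked that in each branch both $F_0\ne\set{\bot}$ and the residual factor $\ne\set{\bot}$ hold. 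It is worth noting that for $F\in\Clash$ (and hence in the paper's main setting $\Uclash$) the obstruction cannot arise: if $C=\bca H\in F\sm H$ then $C\sse D$ for each $D\in H$, so $C$ does not clash with those $D$, contradicting the hitting property; so for hitting clause-sets your ``good case'' argument already finishes the backward direction. Your proof is therefore a correct and complete treatment of the general statement, slightly more careful than what the authors evidently had in mind.
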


Following \cite{Korec1984Covers,BergerFelzenbaumFraenkel1990Covers} (introducing ``irreducibility'' for covers of the integers resp.\ cell partitions of lattice parallelotops), we introduce the fundamental notion of ``clause-irreducible clause-sets'', not allowing non-trivial clause-factorisations:
\begin{definition}\label{def:clprime}
  A clause-set $F \in \Cls$ is called \textbf{clause-irreducible}, if every clause-factor is trivial, otherwise $F$ is called \textbf{clause-reducible}; the set of all clause-irreducible clause-sets is denoted by $\bmm{\Clpr} \subset \Cls$.
\end{definition}

\subsection{Clause-factors for UHIT}
\label{sec:primeuhit}

\begin{definition}\label{def:clprimeuhit}
  In this \Schrift{} we are especially concerned with $\Uclash$, and we call the subset given by the clause-irreducible elements $\bmm{\Pruclash} := \Clpr \cap \Uclash$.
\end{definition}
So $\Pruclashi{n \le 1} = \Uclashi{n \le 1} = \set{\set{\bot}} \cup \set{\set{\set{v},\set{\ol{v}}} : v \in \Va}$.

\begin{example}\label{exp:nonprimeF2}
  $\Dt{2} = \set{\set{1,2},\set{-1,-2},\set{-1,2},\set{-2,1}}$ is clause-reducible, and the nontrivial clause-factors are the $\binom 42 - 2 = 4$ 2-element subsets of $F$ where the two clauses have precisely one clash (these are the fs-pairs). So $\Pruclashi{n = 2} = \es$.
\end{example}

\begin{lemma}\label{lem:equivcharacfacuh}
  Consider $F \in \Uclash$ and a non-empty subset $\top \ne F' \sse F$, and let $F'' := (F \sm F') \cup \set{\bca F'}$ be the ``cofactor''. We note that this union is disjoint, since $F \in \Clash$. The following conditions are equivalent:
  \begin{enumerate}
  \item\label{lem:equivcharacfacuh1} $F'$ is a clause-factor of $F$.
  \item\label{lem:equivcharacfacuh2} The intersection of $F'$ clashes with every other clause, i.e., $F'' \in \Clash$.
  \item\label{lem:equivcharacfacuh3} $F'' \in \Uclash$.
  \end{enumerate}
\end{lemma}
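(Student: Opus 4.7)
The plan is to prove a cycle of implications; the cleanest route appears to be $(1)\Ra(2)$, $(2)\Ra(3)$, $(3)\Ra(2)$, $(2)\Ra(1)$, since $(3)\Ra(2)$ is immediate from $\Uclash\sse\Clash$. Throughout I will write $C := \bca F'$ and $R := \set{D \sm C : D \in F'}$, so that by Lemma \ref{lem:clseqcl} condition (1) is equivalent to $R \in \Usat$.

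For $(2)\Ra(3)$, the remaining task is to show $F'' \in \Usat$. I will argue by contradiction: a clause $E$ satisfying $F''$ must in particular meet $C$ via some literal $x \in C$. Since $x \in \bca F'$ lies in every $D \in F'$, such an $E$ already satisfies $F'$ and hence all of $F$, contradicting $F \in \Usat$. This step uses nothing more than the definition of $\bca F'$.

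For $(1)\Ra(2)$, I fix $D \in F \sm F'$ and aim to produce $x \in C$ with $\ol x \in D$. Because $F$ is hitting, for every $D' \in F'$ there is some $x_{D'} \in D' \cap \ol D$. If $C \cap \ol D = \es$, then $x_{D'} \notin C$, so $x_{D'} \in D' \sm C$ for each $D' \in F'$. But then the clash-free set of literals $\ol D$ intersects every element of the residue $R$, so $\ol D$ satisfies $R$, contradicting $R \in \Usat$ (which is condition (1) via Lemma \ref{lem:clseqcl}). For the converse $(2)\Ra(1)$, I assume $R \in \Sat$, witnessed by a clause $E$ which can be chosen with $\var(E) \sse \var(R) \sse \Va \sm \var(C)$ (since each element of $R$ already lives there); then $E' := E \cup \ol C$ is clash-free. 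By construction $E'$ meets every $D' \in F'$ through its intersection with $D' \sm C$, and by (2) $E'$ meets every $D \in F \sm F'$ through $\ol C \cap D \supseteq \ol{(C \cap \ol D)} \ne \es$, so $E'$ satisfies $F$, contradicting $F \in \Usat$.

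The only mildly delicate step is $(1)\Ra(2)$, where one must resist the temptation to use logical equivalence of $F'$ and $\set C$ directly (that only gives $F \sim F''$, which is not enough for the hitting property); instead the proof has to exploit the specific interplay between the hitting property of $F$ and the unsatisfiability of the residue, via the explicit falsifying assignment $\ol D$. All other implications are either immediate or parallel constructions of explicit satisfying assignments, so the argument is short and elementary.
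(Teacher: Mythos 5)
Your proof is correct and uses essentially the same ideas as the paper's: the falsifying assignment $\ol D$ to derive the hitting condition from unsatisfiability of the residue, the observation that $F''$ inherits unsatisfiability from $F$ by subsumption, and the extension $E \cup \ol C$ of a would-be satisfying assignment of the residue to a satisfying assignment of $F$. The only difference is the grouping of implications (the paper proves $(1)\Ra(3)$ in one step, showing unsatisfiability and the hitting condition together, then $(3)\Ra(2)\Ra(1)$); your separation into $(1)\Ra(2)$ and $(2)\Ra(3)$ makes it clearer that $F'' \in \Usat$ requires no hypothesis beyond $F \in \Usat$, and your explicit choice $E' = E \cup \ol C$ (rather than $E \cup C$, which would not obviously meet the clauses of $F \sm F'$) is the careful one.
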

\begin{proof}
Part \ref{lem:equivcharacfacuh1} implies Part \ref{lem:equivcharacfacuh3}: If $F'$ is a factor of $F$, then $F''$ is unsatisfiable, since $\bca F'$ subsumes all clauses of $F'$, and $F''$ is a hitting clause-set, since if there would be some $C \in F \sm F'$ without a clash with $\bca F'$, then setting all literals in $C$ to false would be a satisfying assignment for the residue. Trivially Part \ref{lem:equivcharacfacuh3} implies \ref{lem:equivcharacfacuh2}. Finally assume $F'' \in \Clash$, but that the residue $\set{C \sm \bca F' : C \in F'} \in \Sat$. So then there is a clause $D$ with $\var(D) \cap \var(\bca F') = \es$, which has a clash with every clause in $F'$, and so $D \cup \bca F'$ is a clause with a clash with every clause of $F$, contradicting unsatisfiability of $F$. \qed
\end{proof}

Factors of UHITs are basically the same as singular extensions resulting in unsatisfiable hitting clause-sets (up to the choice of the extension-variable):
\begin{lemma}\label{lem:corrfactoruhitext}
  Consider $F \in \Uclash$. Then up to the choice of the extension variable, the singular hitting extensions of $F$ are given according to Lemma \ref{lem:singext} by some nonempty $G \sse F$, and by Lemma \ref{lem:equivcharacfacuh} these subsets are precisely the factors $F'$ of $F$. So the singular $m$-hitting-extensions for $m \ge 1$ correspond 1-1 to the factors $F'$ of $F$ with $c(F') = m$. Especially, the trivial factors of $F$ correspond 1-1 to the trivial singular hitting extensions of $F$, namely the factors of size $1$ correspond to the 1-extensions, and the factors of size $c(F)$ correspond to the full singular unit-extensions.
\end{lemma}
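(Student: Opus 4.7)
The statement is essentially a bookkeeping result: it asserts that two parameterisations of the same family of objects agree. My plan is therefore to identify the two parameterisations explicitly and check they produce the same data.

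First I would unfold Lemma~\ref{lem:singext}: up to the choice of the fresh extension variable $x$, an $m$-singular hitting extension of $F$ is determined by a choice of a subset $F' \sse F$ with $c(F') = m$ such that $\bca F'$ clashes with every element of $F \sm F'$. Then I would recall Lemma~\ref{lem:equivcharacfacuh}: a nonempty $F' \sse F$ is a clause-factor of $F$ precisely when $\bca F'$ clashes with every element of $F \sm F'$ (equivalently, when $(F \sm F') \cup \set{\bca F'} \in \Clash$). So the defining combinatorial condition on $F'$ is literally the same in both lemmas, which gives the required bijection: assigning to the extension its ``side-clause set'' $F'$ is the inverse of the assignment taking a factor $F'$ to the extension built from it via Lemma~\ref{lem:singext}.

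Next I would verify the size correspondence. For any $F' \sse F$ with $c(F') = m$ satisfying the clash condition, the extension uses $m$ side-clauses $D_1,\dots,D_m \in F'$ together with the main clause $(\bca F') \cup \set{x}$; hence ``$m$-singular'' on the extension side corresponds exactly to ``$c(F') = m$'' on the factor side. Matching the two trivial cases flagged in Lemma~\ref{lem:singext}: the $1$-singular hitting extensions correspond to singletons $F' = \set{C}$ (the trivial factors of size $1$, one per $C \in F$), while the $c(F)$-singular full unit-extensions correspond to the full choice $F' = F$ (a trivial factor since $c(F') = c(F)$, matching the clause $\bca F = \bot$ or, more generally, the whole $F$ itself as a factor, yielding residue $F \in \Uclash$).

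I do not expect a real obstacle here; the only point requiring a line of care is the condition $\var(\bca F') \cap \var((F \sm F') \cup \set{\bca F'}) = \es$ of sorts, implicit in the factorisation picture, together with the observation that $F \in \Clash$ automatically makes the union $(F \sm F') \cup \set{\bca F'}$ disjoint (as already noted in the proof of Lemma~\ref{lem:equivcharacfacuh}). After these checks, the 1-1 correspondence and the identification of the trivial cases are immediate, and the proof concludes. \qed
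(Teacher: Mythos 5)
Your proposal is correct and follows the same route the paper intends: Lemma~\ref{lem:singext} and condition~\ref{lem:equivcharacfacuh2} of Lemma~\ref{lem:equivcharacfacuh} impose literally the same combinatorial condition on a nonempty subset of $F$ (namely that $\bca F'$ clashes with every clause of $F \sm F'$), so the $m$-singular hitting extensions and the size-$m$ clause-factors are parameterised identically, and the trivial cases ($m=1$ and $m=c(F)$) line up with the trivial factors and the trivial extensions respectively. The paper in fact gives no separate proof — the lemma's statement \emph{is} the argument, citing the two prior lemmas — and your write-up just spells out the identification in a bit more detail.

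One small wrinkle: the sentence about a condition ``$\var(\bca F') \cap \var((F \sm F') \cup \set{\bca F'}) = \es$ of sorts'' is not a condition that appears or is needed here, and as written it would almost never hold (the clause $\bca F'$ sits inside the second set). The variable-disjointness requirement $\var(C) \cap \var(G) = \es$ belongs to the definition of a clause-\emph{factorisation} (Definition~\ref{def:clausefac}), not to the notion of a clause-\emph{factor} of $F$ used in this lemma, and it plays no role in the bijection you establish; you can simply drop that remark. The rest of the reasoning is sound and complete.
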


\begin{corollary}\label{cor:characprimeext}
  A clause-set $F \in \Uclash$ is irreducible if and only if every singular hitting-extension is trivial.
\end{corollary}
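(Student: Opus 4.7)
The plan is to derive this as an almost immediate consequence of Lemma \ref{lem:corrfactoruhitext}, essentially a translation of the definition of clause-irreducibility through the bijection between clause-factors of $F$ and singular hitting extensions of $F$.

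First I would invoke Lemma \ref{lem:corrfactoruhitext}, which gives (up to the choice of the extension variable) a one-to-one correspondence between the clause-factors $F' \sse F$ and the singular hitting extensions of $F$, sending a factor $F'$ of size $m$ to an $m$-singular hitting extension. Then, by Definition \ref{def:clprimeuhit} together with Definition \ref{def:clprime}, $F$ is irreducible iff every clause-factor of $F$ is trivial. Thus, the entire task reduces to checking that under this correspondence, the trivial clause-factors match up exactly with the trivial singular hitting extensions.

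For the matching: by Definition \ref{def:clfactor}, a clause-factor $F'$ of $F$ is trivial iff $c(F') = 1$ or $F' \in \Usat$ with $F' = F$. Since $F \in \Uclash \sse \Usat$, the second alternative simply reduces to $F' = F$, i.e.\ $c(F') = c(F)$. So the trivial factors of $F$ are precisely those of size $1$ or of size $c(F)$. The final sentence of Lemma \ref{lem:corrfactoruhitext} states exactly that factors of size $1$ correspond to the 1-singular hitting extensions (the non-strict fs-extensions) and factors of size $c(F)$ correspond to the full singular unit-extensions, which are the two types of \emph{trivial singular hitting extensions} described in the paragraph after Lemma \ref{lem:singext}.

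Combining these, ``every clause-factor of $F$ is trivial'' is equivalent to ``every singular hitting extension of $F$ is trivial''. Together with the irreducibility definition, this yields the corollary in both directions. There is no serious obstacle here; the only point to be careful about is to ensure that the two notions of ``trivial'' (for factors versus for singular hitting extensions) line up under the bijection, which is exactly what the last sentence of Lemma \ref{lem:corrfactoruhitext} was set up to guarantee.
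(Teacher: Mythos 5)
Your proof is correct and takes essentially the same route as the paper, which states the corollary without proof because it is an immediate consequence of Lemma~\ref{lem:corrfactoruhitext}: the lemma's last sentence already identifies trivial clause-factors (those of size $1$ or of size $c(F)$) with trivial singular hitting extensions (the non-strict fs-extensions and the full singular unit-extensions), so the equivalence follows directly from the definition of clause-irreducibility. Your careful check that the second disjunct of ``trivial factor'' collapses to $F'=F$ for $F\in\Uclash$ is exactly the right thing to verify.
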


Singular variables or full variables yield factors as follows:
\begin{lemma}\label{lem:singfact}
  Consider $F \in \Uclash$ and $v \in \var(F)$. If $v$ is a singular variable or a full variable of $F$, then $F_v$, $F_{\ol{v}}$ and $F_v \cup F_{\ol{v}}$ are factors of $F$.
\end{lemma}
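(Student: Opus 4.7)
The plan is to apply Lemma~\ref{lem:equivcharacfacuh}: a nonempty $F' \sse F \in \Uclash$ is a clause-factor of $F$ precisely when the cofactor $(F \sm F') \cup \set{\bca F'}$ is hitting, i.e.\ when $\bca F'$ clashes with every clause in $F \sm F'$. So in each of the three cases I only need to locate, for each such $D$, a clash literal belonging to the relevant intersection.

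The full-variable case is immediate. Since every clause of $F$ contains $v$ or $\ol v$, we have $F = F_v \cup F_{\ol v}$. Then $v \in \bca F_v$ clashes on $v$ with every $D \in F \sm F_v = F_{\ol v}$, so $F_v$ is a factor, and $F_{\ol v}$ is symmetric. For the union $F_v \cup F_{\ol v} = F$ itself the clash condition on $F \sm F = \es$ is vacuously satisfied, so $F$ is a factor of itself; equivalently, $F \in \Uclash$ forces $\bca F = \bot$, since a literal shared by every clause would yield a satisfying assignment contradicting unsatisfiability.

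In the singular case I may assume $\ldeg_F(v) = 1$ by symmetry and write $F_v = \set{C}$. This single-clause set is trivially a clause-factor. Lemma~\ref{lem:characsingDPuhit} says that $\set{C, \bca F_{\ol v}}$ is an fs-pair, and since $v \in C \sm \bca F_{\ol v}$ and $\ol v \in \bca F_{\ol v} \sm C$ this pins down $\bca F_{\ol v} = (C \sm \set{v}) \cup \set{\ol v}$ and hence $\bca(F_v \cup F_{\ol v}) = \bca F_v \cap \bca F_{\ol v} = C \sm \set{v}$. To verify both remaining clash conditions, take any $D$ outside the respective factor: if $v, \ol v \notin D$, then applying $F \in \Clash$ to the pair $D, C$ yields a clash literal which, because $\ol v \notin D$, cannot be $v$ and therefore lies in $C \sm \set{v}$, and hence in $\bca F_{\ol v}$ as well as in $\bca(F_v \cup F_{\ol v})$; the only further clause to handle, for $F_{\ol v}$, is $C$ itself, which clashes with $\bca F_{\ol v}$ on $v$.

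There is no substantive obstacle; the work reduces to the characterization in Lemma~\ref{lem:equivcharacfacuh} combined with the explicit shape of $\bca F_{\ol v}$ supplied by Lemma~\ref{lem:characsingDPuhit}, with the observation $\bca F = \bot$ for $F \in \Uclash$ handling the full-variable union case.
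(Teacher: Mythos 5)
Your proof is correct and follows essentially the same route as the paper's: both handle the full-variable case by observing that $v \in \bca F_v$ (resp.\ $\ol v \in \bca F_{\ol v}$) already supplies the clash with every clause on the other side required by Lemma~\ref{lem:equivcharacfacuh}, and both handle the singular case by using Lemma~\ref{lem:characsingDPuhit} to pin down $\bca F_{\ol v} = (C \sm \set v) \cup \set{\ol v}$ and then verifying the clash condition. If anything you are slightly more explicit than the paper in spelling out why the clash with the remaining clauses $D$ (those with $v,\ol v \notin D$) lands in $C \sm \set v$, and you sensibly bypass the paper's unnecessary aside that $\bca F_v = \set v$ in the full-variable case, relying only on the clash characterization.
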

\begin{proof}
First consider that $v$ is a singular variable of $F$, and assume w.l.o.g.\ that $\ldeg_F(v) = 1$. Then trivially $F_v$ is a factor of $F$. Let $C$ be the main clause of $v$, where w.l.o.g.\ $v \in C$, and let $D := C \sm \set{v}$. Now $\bca F_{\ol{v}} = D \cup \set{\ol{v}}$ (since $F$ is hitting), and thus $F_{\ol{v}}$ is a factor of $F$ (since $C$ clashes with every other clause). Finally $\bca (F_v \cup F_{\ol{v}}) = D$, and thus also $F_v \cup F_{\ol{v}}$ is a factor. Now assume that $v$ is a full variable of $F$. Then $F_v \cup F_{\ol{v}} = F$, while $\bca F_v = \set{v}$ (and $\bca F_{\ol{v}} = \set{\ol{v}}$; otherwise $F$ would be satisfiable), and thus also $F_v, F_{\ol{v}}$ are factors. \qed
\end{proof}

There are two other classes of easily recognisable factors:
\begin{lemma}\label{lem:2subpfac}
  Consider $F \in \Uclash$. The factors $F'$ with $c(F') = 2$ are precisely the fs-pairs (recall Definition \ref{def:fullsubres}) contained in $F$.
\end{lemma}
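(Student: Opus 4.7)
My plan is to prove the equivalence in both directions by using Lemma~\ref{lem:equivcharacfacuh}, which reduces the factor condition for $F' = \{C,D\} \sse F$ to the requirement that $C \cap D = \bca F'$ clashes with every clause of $F \sm \{C,D\}$, together with Lemma~\ref{lem:clfacinh}, which says that the residue of a factor of $F \in \Uclash$ is itself in $\Uclash$.

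For the easy direction (fs-pair $\Ra$ factor), let $\set{C,D}$ be an fs-pair with clashing literal $x$ (so $x \in C$, $\ol x \in D$), which by $\abs{C \symdif D} = 2$ forces $C \sm D = \set{x}$ and $D \sm C = \set{\ol x}$. Consequently $C \cap D = C \sm \set{x} = D \sm \set{\ol x}$ is precisely the resolvent $R := C \res D$. I now show that $R$ clashes with every $E \in F \sm \set{C,D}$: since $F \in \Clash$, $E$ clashes with both $C$ and $D$; if $E$ did not clash with $R$, then the only possible clash of $E$ with $C = R \cup \set{x}$ would force $\ol x \in E$, and symmetrically the only possible clash of $E$ with $D = R \cup \set{\ol x}$ would force $x \in E$, contradicting that $E$ is clash-free. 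Hence by Lemma~\ref{lem:equivcharacfacuh}, $\set{C,D}$ is a factor.

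For the converse (2-element factor $\Ra$ fs-pair), let $F' = \set{C,D} \sse F$ be a factor; since $c(F') = 2$ we have $C \ne D$. Lemma~\ref{lem:clfacinh} gives that the residue $\set{C', D'}$, with $C' := C \sm (C \cap D)$ and $D' := D \sm (C \cap D)$, belongs to $\Uclash$. From the Preliminaries, every element of $\Uclash$ satisfies $\sum_{E} 2^{-\abs{E}} = 1$, so $2^{-\abs{C'}} + 2^{-\abs{D'}} = 1$. Since $C' \ne D'$ are clash-free finite clauses, the only solution in non-negative integers is $\abs{C'} = \abs{D'} = 1$; the hitting condition on $\set{C', D'}$ then forces these two unit clauses to clash, i.e., $C' = \set{x}$ and $D' = \set{\ol x}$ for some literal $x$. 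Therefore $C = (C \cap D) \cup \set{x}$ and $D = (C \cap D) \cup \set{\ol x}$, yielding $\abs{C \cap \ol D} = 1$ and $\abs{C \symdif D} = 2$, so $\set{C,D}$ is an fs-pair.

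The whole argument is essentially combinatorial, and I do not expect a serious obstacle: the only subtle point is ensuring that in the reverse direction the residue clauses are genuinely non-empty (else the identity $2^{-\abs{C'}} + 2^{-\abs{D'}} = 1$ cannot balance), which is automatic because $C \not\sse D$ and $D \not\sse C$ both follow from $\set{C,D} \in \Clash$ with $C \ne D$. The rest is a clean application of the two prior lemmas together with the weight identity characterising $\Uclash$.
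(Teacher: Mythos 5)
Your proof is correct and follows essentially the same route as the paper: the forward direction is the same hitting-condition argument, and the converse exploits the weight identity $\sum_{E} 2^{-\abs{E}} = 1$ to pin down the clause sizes and then uses the hitting condition to extract the fs-pair structure. The only cosmetic difference is that you apply the weight identity to the residue (via Lemma~\ref{lem:clfacinh}) whereas the paper applies it to the cofactor $(F \sm F') \cup \set{\bca F'}$ (via Lemma~\ref{lem:equivcharacfacuh}); these are interchangeable and yield the identical size constraint.
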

\begin{proof}
First consider an fs-pair $F' := \set{C \cup \set{v}, C \cup \set{\ol{v}}} \sse F$; note that $\bca F' = C$. If there would be $D \in F \sm F'$ with $C \cap \ol{D} = \es$, then $D$ would also be clash-free with one element of $F'$, contradicting the hitting condition.

Now consider a factor $F'$ with $c(F') = 2$, and let $C := \bca F'$. Then $(F \sm F') \cup \set{C} \in \Uclash$ by Lemma \ref{lem:equivcharacfacuh}. Due to $\sum_{C \in F} 2^{-\abs{C}} = 1$ we have that $\abs{D} = \abs{C} + 1$ for $D \in F$, and because of the hitting condition thus $F'$ must be an fs-pair. \qed
\end{proof}

\begin{lemma}\label{lem:factorcm1}
  Consider $F \in \Uclash$. Then the factors $F'$ with $c(F') = c(F)-1$ are precisely given by $F' = F \sm \set{C}$ for $C \in F$ with $\abs{C}=1$ (unit-clauses).

  So if $c(F)=2$ (i.e., $F = \set{\set{v},\set{\ol{v}}}$ for some $v \in \Va$), then there are precisely two such factors, while otherwise there can be at most one such factor.
\end{lemma}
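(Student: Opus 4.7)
The plan is to reduce the entire statement to Lemma~\ref{lem:equivcharacfacuh}. For a single clause $C \in F$, write $F' := F \sm \set{C}$ and $F'' := (F \sm F') \cup \set{\bca F'} = \set{C} \cup \set{\bca F'}$. That lemma says $F'$ is a clause-factor of $F$ exactly when $F'' \in \Uclash$ and $F' \ne \top$ (the latter meaning simply $c(F) \ge 2$, which forces $F \ne \set{\bot}$). So I only need to analyse when this one- or two-element set of clauses lies in $\Uclash$.

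For the forward direction, I would first verify $C \ne \bca F'$: if $\bca F' = C$, then $C \sse D$ for every $D \in F \sm \set{C}$, but the hitting condition demands a literal $x \in C$ with $\ol{x} \in D$, contradicting clash-freeness of $D$ once $x \in D$ is forced by $C \sse D$. Hence $F''$ is a genuine two-element clause-set. The defining identity $\sum_{X \in F''} 2^{-\abs{X}} = 1$ for membership in $\Uclash$ (recalled in Section~\ref{sec:Preliminaries}) then forces $\abs{C} = \abs{\bca F'} = 1$, so $C$ is a unit-clause $\set{x}$ and $\bca F' = \set{\ol{x}}$.

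Conversely, if $C = \set{x}$ is a unit-clause and $c(F) \ge 2$, then the hitting condition between $C$ and any other $D \in F \sm \set{C}$ forces $\ol{x} \in D$; hence $\ol{x} \in \bca F'$ and $\bca F'$ clashes with $C$, which is precisely condition~(2) of Lemma~\ref{lem:equivcharacfacuh}. So $F'$ is a clause-factor. For the multiplicity claim I invoke the preliminary fact from Section~\ref{sec:Preliminaries} that every $F \in \Uclash$ with two or more unit-clauses satisfies $F \cong A_1$, i.e.\ $c(F) = 2$: thus for $c(F) = 2$ both clauses $\set{v}, \set{\ol{v}}$ are unit-clauses and yield two factors, while for $c(F) > 2$ there is at most one unit-clause and hence at most one such factor.

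I do not foresee a real obstacle here; the only slightly delicate point is the verification $C \ne \bca F'$, which ensures that $F''$ is a genuine two-clause UHIT rather than degenerating to a singleton. Once that is in place, the size constraint on two-clause UHITs pins down $\abs{C} = 1$ immediately, and the rest of the argument is mechanical.
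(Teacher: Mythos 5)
Your proposal is correct and follows essentially the same route as the paper: both reduce everything to Lemma~\ref{lem:equivcharacfacuh}, identify the cofactor $F'' = \set{C, \bca F'}$ as a two-element member of $\Uclash$, and conclude that both of its clauses must be unit-clauses. The one small difference is in how you pin down $\abs{C} = \abs{\bca F'} = 1$: the paper argues directly that $\abs{\bca F'} \ge 2$ would yield a satisfying assignment for $F$, whereas you derive it from the numerical criterion $\sum_{X \in F''} 2^{-\abs{X}} = 1$, which is arguably cleaner. You also make explicit the converse direction and the counting claim, both of which the paper's proof dismisses with ``the remaining assertions follow easily''; your invocation of the preliminary fact that any $F \in \Uclash$ with two or more unit-clauses is isomorphic to $A_1$ is exactly what is needed there.
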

\begin{proof}
Consider a factor $F'$ of $F$ with $c(F') = c(F)-1$, and let $C := \bca F'$. Since $F' \ne F$, we have $\abs{C} \ge 1$. If $\abs{C} \ge 2$, then $F$ would be satisfiable (note that $C$ clashes with the clause in $F \sm F'$). So there is a literal $x$ with $C = \set{x}$. Now the clause of $F \sm F'$ must be $\set{\ol{x}}$, since otherwise again $F$ would be satisfiable. The remaining assertions follow easily. \qed
\end{proof}

By Lemmas \ref{lem:singfact}, \ref{lem:2subpfac}, \ref{lem:factorcm1}:
\begin{lemma}\label{lem:trivirred}
  A clause-irreducible $F \in \Uclash$ is singular iff it has a full variable iff it has an fs-pair iff it has a unit-clause iff $F \cong A_1$.
\end{lemma}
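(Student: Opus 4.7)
The plan is to prove each of the four conditions characterising clause-irreducible $F \in \Uclash$ by the cycle $F \cong A_1 \Ra \text{everything} \Ra F \cong A_1$. The direction $F \cong A_1 \Ra$ (singular, full variable, fs-pair, unit-clause) is immediate, since the unique variable in $A_1$ has literal-degree $1$ in both polarities (so is both singular and full), and the two clauses $\set{v}, \set{\ol{v}}$ are units that form an fs-pair. The substance of the lemma is therefore to show that, under clause-irreducibility, each of the other four conditions forces $F \cong A_1$. For each, I would extract a clause-factor from the relevant previously-proven lemma and then invoke the definition of triviality ($c(F') = 1$ or $F' \in \Usat \wedge F' = F$) to pin $F$ down.

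The easiest case is the unit-clause case: if $\set{x} \in F$, then by Lemma \ref{lem:factorcm1} the set $F \sm \set{\set{x}}$ is a clause-factor of size $c(F) - 1$, and triviality forces $c(F)-1 = 1$ (the alternative $F \sm \set{\set{x}} = F$ contradicts $\set{x} \in F$). Thus $c(F) = 2$, and the second clause must be $\set{\ol{x}}$ for unsatisfiability, so $F \cong A_1$. The fs-pair case is analogous via Lemma \ref{lem:2subpfac}: an fs-pair is a factor of size $2$, so by triviality $F$ itself is this fs-pair $\set{E \cup \set{v}, E \cup \set{\ol{v}}}$; the equation $\sum_{C \in F} 2^{-\abs{C}} = 1$ with two clauses forces $E = \bot$, hence $F \cong A_1$.

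For the full-variable case, Lemma \ref{lem:singfact} supplies $F_v$ and $F_{\ol{v}}$ as factors. Neither can equal $F$, because $F_v = F$ (say) would mean the literal $\ol{v}$ is absent, and then setting $v$ to true satisfies $F$, contradicting $F \in \Usat$. Hence both have size $1$, giving $c(F) = 2$, and the unsatisfiability equation again forces both clauses to be units, so $F \cong A_1$. Finally, the singular-variable case reduces to the full-variable case: for $v \in \varsing(F)$, Lemma \ref{lem:singfact} gives that $F_v \cup F_{\ol{v}}$ is a factor; it has at least two clauses, so triviality rules out size $1$ and forces $F_v \cup F_{\ol{v}} = F$, meaning $v$ is full.

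The proof is essentially a dispatch over previously proven factor-identification lemmas, with no hard step. The only point where care is needed is ensuring that in each application, the trivial-factor alternative ``$F' = F$ and $F' \in \Usat$'' is either compatible with the desired conclusion or ruled out by a direct satisfiability argument; the cleanest ruling-out is the observation that if $v \in \var(F)$ appears in only one polarity, then $F$ is satisfiable, so $F_v$ or $F_{\ol{v}}$ can never exhaust $F$. Once all four branches land on $c(F) = 2$ with a wedge $2^{-\abs{C_1}} + 2^{-\abs{C_2}} = 1$, the conclusion $F \cong A_1$ is automatic.
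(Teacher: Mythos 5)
Your proof is correct and takes essentially the same approach as the paper, which simply cites Lemmas \ref{lem:singfact}, \ref{lem:2subpfac}, and \ref{lem:factorcm1} as the proof; your proposal spells out the dispatch over these three factor-identification lemmas in full, including the triviality case analysis and the use of $\sum_{C \in F} 2^{-\abs{C}} = 1$ to pin down $A_1$.
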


Since every $F \in \Uclashi{\delta=1}$ with $n(F) > 0$ is fs-resolvable, $\Pruclashi{\delta = 1} = \Uclashi{n \le 1}$. Having no fs-pair resp.\ no full variables has further consequences:
\begin{lemma}\label{lem:smallfactors}
  If $F \in \Uclash$ has no fs-pair, then $F$ has no nontrivial clause-factor $F'$ with $c(F') \le 4$, while if $F$ has no full variable, then $F$ has no nontrivial clause-factor $F'$ with $c(F') \ge c(F) - 3$.
\end{lemma}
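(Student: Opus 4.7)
The plan is to reduce both halves to a case analysis of small UHITs, by passing either to the residue $G$ (for part 1) or to the cofactor $H$ (for part 2) of a clause-factor $F'$; both sit inside $\Uclash$ by Lemmas \ref{lem:clfacinh} and \ref{lem:equivcharacfacuh}. The classification of such $G, H$ of size $\le 4$ rests on the counting identity $\sum 2^{-\abs{D}} = 1$ for UHITs, which tightly constrains the clause-sizes.

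For the first statement, let $F' \sse F$ be a nontrivial clause-factor with $c(F') = k \in \set{2,3,4}$. The case $k=2$ is directly Lemma \ref{lem:2subpfac}. For $k \in \set{3,4}$, set $C := \bca F'$ and $G := \set{D \sm C : D \in F'} \in \Uclash$ with $c(G) = k$; any fs-pair $\set{D_1,D_2} \sse G$ lifts to the fs-pair $\set{C \cup D_1, C \cup D_2} \sse F' \sse F$, since the literals of $C$ lie outside $\var(G)$ and are common to both clauses. For $c(G)=3$ the counting identity forces clause sizes $1,2,2$, so $G \cong \set{\set{y}, \set{\ol y, z}, \set{\ol y, \ol z}}$, whose two size-$2$ clauses are an fs-pair. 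For $c(G)=4$: if $G$ has a unit clause $\set{y}$, then $\dpi{\var(y)}(G) \in \Uclash$ has three clauses, contains an fs-pair by the previous case, and that fs-pair lifts back to $G$ by appending $\ol y$; otherwise all clauses of $G$ have size $2$, and pigeonhole applied to any fixed clause $\set{a,b}$ (each of the other three clashes with it via $\ol a$ or $\ol b$) produces two of them sharing, say, $\ol a$, which by clashing with each other must have the form $\set{\ol a, x}, \set{\ol a, \ol x}$: an fs-pair.

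For the second statement, given a nontrivial $F'$ with $c(F') \ge c(F)-3$, put $C := \bca F'$ and $H := (F \sm F') \cup \set{C}$; by Lemma \ref{lem:equivcharacfacuh}, $H \in \Uclash$ with $c(H) = c(F) - c(F') + 1 \in \set{2,3,4}$. The parallel claim is that every UHIT of size $\le 4$ has a variable $v$ full in $H$; such a $v$ automatically lies in $\var(C)$ (since $v$ must occur in $C \in H$). Then $v$ is full in $F$: it is in every $D \in F'$ via $C \sse D$, and in every $E \in F \sm F' \sse H$ by fullness in $H$. For $c(H) \in \set{2,3}$ the counting identity forces the shape, and the unit-clause variable is full. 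For $c(H) = 4$ with a unit clause, the same argument applies. For $c(H) = 4$ with all clauses of size $2$, we apply the pigeonhole argument of part 1 with the initial clause chosen as $C$ itself: the resulting pigeonhole literal $\ol a$ witnesses that $\var(a) \in \var(C)$ appears in all four clauses of $H$ (completing the analysis pins down the fourth clause as $\set{a, \ol b}$), giving the required full variable.

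The only delicate step is the $c = 4$, all-clauses-size-$2$ sub-case, where the pigeonhole analysis must be pushed far enough to determine the unique shape of the fourth clause; this simultaneously yields the fs-pair for part 1 and the full variable in $\var(C)$ for part 2. Everything else is a direct consequence of the counting identity together with Lemmas \ref{lem:clfacinh}, \ref{lem:equivcharacfacuh}, and \ref{lem:2subpfac}.
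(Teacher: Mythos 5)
Your proof is correct, and it follows the same decomposition as the paper: pass to the residue for the first claim and to the cofactor for the second, then exploit the smallness ($\le 4$ clauses) of the resulting element of $\Uclash$. Where you differ is in how the structure of small UHITs is obtained. The paper dispatches the first claim by invoking Corollary \ref{cor:fsr5}, which classifies the non-fs-resolvable $F \in \Uclash$ with $c(F) \le 5$ as $\set{\bot}$ and $\cong \Dt{3}$: the residue has $\le 4 < 5$ clauses, is not $\set{\bot}$ (by nontriviality of $F'$) and cannot be $\cong \Dt{3}$ (which has $5$ clauses), hence is fs-resolvable, and the resulting fs-pair lifts back into $F$. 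For the second claim it tacitly uses that every $H \in \Uclash$ with $2 \le c(H) \le 4$ has a full variable. You re-derive both facts directly from the counting identity $\sum_{D} 2^{-\abs{D}} = 1$ plus a short pigeonhole argument, which is self-contained but re-proves what the paper had already packaged (Corollary \ref{cor:fsr5} rests on Lemma \ref{lem:specpropd2} and ultimately on \cite{Ku99dKo}). Your route buys independence from that machinery at the cost of some length, and it also makes explicit two lifting steps that the paper's one-line proof leaves to the reader: that an fs-pair $\set{D_1,D_2}$ in the residue lifts to the fs-pair $\set{C \cup D_1, C \cup D_2}$ in $F$ because $C = \bca F'$ is variable-disjoint from the residue, and that a full variable of the cofactor must occur in $C$ and is therefore full in all of $F$. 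Both presentations are valid; the paper's is shorter by reuse, yours is more elementary.
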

\begin{proof}
If there would be a nontrivial clause-factor $F'$ with $c(F') \le 4$, then by Corollary \ref{cor:fsr5} the residue would have an fs-pair, and then also $F$ had one. And if there would be $F'$ with $c(F') \ge c(F) - 3$, then the cofactor would be an element of $\Uclash$ with at most four clauses, and thus had a full variable. \qed
\end{proof}

\begin{lemma}\label{def:pruclash2}
  Up to isomorphism there is one element in $\Pruclashi{\delta = 2}$, namely $\Dt{3}$.
\end{lemma}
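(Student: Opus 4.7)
The plan is to combine the classification of $\Uclashnsi{\delta=2}$ from Example \ref{exp:uhit12} with Lemmas \ref{lem:trivirred} and \ref{lem:smallfactors}, which together pin down irreducibility very tightly.

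First I would argue that any $F \in \Pruclashi{\delta = 2}$ lies in $\Uclashnsi{\delta = 2}$. Indeed, $F$ is clause-irreducible, so by Lemma \ref{lem:trivirred} $F$ can be singular only if $F \cong A_1$; but $A_1$ has $\delta = 1$, hence $F$ is nonsingular. By Example \ref{exp:uhit12} we then have $F \cong \Dt{2}$ or $F \cong \Dt{3}$ up to isomorphism.

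Next I would eliminate the case $F \cong \Dt{2}$. Example \ref{exp:nonprimeF2} already records that $\Dt{2}$ admits four nontrivial clause-factors of size two (its fs-pairs), hence $\Dt{2} \notin \Clpr$. So we must have $F \cong \Dt{3}$.

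Finally I would verify that $\Dt{3}$ itself is clause-irreducible, using the two-sided exclusion of Lemma \ref{lem:smallfactors}. A direct inspection of $\Dt{3} = \set{\set{1,2,3}, \set{-1,-2,-3}, \set{-1,2}, \set{-2,3}, \set{-3,1}}$ shows that each of the three variables has variable-degree exactly $4$, so no variable is full; moreover no two clauses form an fs-pair (alternatively this follows from Corollary \ref{cor:fsd2}, since $\Dt{3}$ has no unit-clauses and unit-clause propagation is vacuous). Thus any nontrivial clause-factor $F' \sse \Dt{3}$ would have to satisfy simultaneously $c(F') \ge 5$ (from the no-fs-pair half of Lemma \ref{lem:smallfactors}) and $c(F') \le c(\Dt{3}) - 4 = 1$ (from the no-full-variable half), which is impossible. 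Therefore $\Dt{3} \in \Pruclash$.

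This proof is essentially bookkeeping over the preceding results, so there is no real obstacle; the only place that needs a small explicit check is the verification that $\Dt{3}$ has no fs-pair and no full variable, which is immediate from the explicit listing of its five clauses.
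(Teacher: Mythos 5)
Your proof is correct and follows essentially the same route as the paper: classify $\Uclashnsi{\delta=2}$ via Example \ref{exp:uhit12}, discard $\Dt{2}$ by Example \ref{exp:nonprimeF2}, and establish irreducibility of $\Dt{3}$ via Lemma \ref{lem:smallfactors}. You spell out two small points the paper leaves implicit (that clause-irreducibility with $\delta=2$ forces nonsingularity via Lemma \ref{lem:trivirred}, and you invoke both halves of Lemma \ref{lem:smallfactors} where the paper uses only the fs-pair half together with the fact that $F' = F$ is a trivial factor), but these are just extra care, not a different argument.
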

\begin{proof}
Up to isomorphism there are precisely two elements in $\Uclashnsi{\delta=2}$, namely $\Dt{2}$, which is clause-reducible (Example \ref{exp:nonprimeF2}), and $\Dt{3}$, which has no fs-pair, and thus by Lemma \ref{lem:smallfactors} is clause-irreducible. \qed
\end{proof}

\subsection{Reducing FC to the irreducible case}
\label{sec:redfinirr}

Strengthening Lemma \ref{lem:propclfac} (again with simple proofs):
\begin{lemma}\label{lem:propclfacuhit}
  Consider $F \in \Cls$ and a clause-factorisation $F = (F_0,C) \cor G$.
  \begin{enumerate}
  \item\label{lem:propclfacuhit1} $\set{F_0,G} \subset \Uclash \Lra F \in \Uclash$.
  \item\label{lem:propclfacuhit3} Assume $F \in \Uclashns$.
    \begin{enumerate}
    \item\label{lem:propclfacuhit2} If $F$ is not strictly fs-resolvable (recall Definition \ref{def:fullsubres}), then $\varosing(F_0) = \varosing(G) = \es$.
    \item\label{lem:decompoUHIT2} $\nsv(F_0) \le \abs{\varsing(F_0) \cap \var(G)} + 1 \le \abs{\var(F_0) \cap \var(G)} + 1 \le \delta(F)$.
    \item\label{lem:decompoUHIT3} $\nsv(G) \le \abs{\var(F_0) \cap \varsing(G)} \le \abs{\var(F_0) \cap \var(G)} \le \delta(F)-1$.
    \end{enumerate}
  \end{enumerate}
\end{lemma}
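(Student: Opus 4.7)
I will use the representation $F = (F_0 \sm \set{C}) \cup (\set{C} \cor G)$ (disjoint by assumption) throughout, along with two basic literal-degree identities that follow from $\var(C) \cap \var(G) = \es$: for $x \in C$, $\ldeg_F(x) = \ldeg_{F_0}(x) - 1 + c(G)$, and for $x \notin \lit(C)$, $\ldeg_F(x) = \ldeg_{F_0}(x) + \ldeg_G(x)$ (with $\ldeg_G(x) = 0$ when $\var(x) \notin \var(G)$); these drive nearly all of Part 2. For Part 1, the forward direction checks the hitting property by the routine three-way case split (two clauses from $F_0 \sm \set{C}$; two from $\set{C} \cor G$; one from each, where the mixed case uses that every clause of $F_0 \sm \set{C}$ clashes with $C$ inside $F_0$), and gets unsatisfiability because $\set{C} \cor G$ is logically equivalent to $\set{C}$ by Lemma \ref{lem:clseqcl} (using $G \in \Usat$), hence $F$ is logically equivalent to $F_0$. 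For the backward direction the same lemma recognises $\set{C} \cor G$ as a clause-factor of $F$ with intersection $C$ and residue $G$; Lemma \ref{lem:equivcharacfacuh} then yields $F_0 \in \Uclash$, and Lemma \ref{lem:clfacinh} yields $G \in \Uclash$.

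For Part 2(a), the first observation is that for $F \in \Uclashns$ every fs-pair in $F$ yields a strict fs-resolution: nonsingularity guarantees the fs-variable survives (strictness), and $F \in \Clash$ forbids the resolvent from already lying in $F$ (resolvability). Given $v \in \varosing(F_0)$ with its fs-pair $\set{C_1, C_2} \sse F_0$, I split on whether $v \in \var(C)$: if $v \notin \var(C)$, then $\set{C_1, C_2} \sse F_0 \sm \set{C} \sse F$ is an fs-pair of $F$; if $v \in \var(C)$, then $v \notin \var(G)$, and taking WLOG $v \in C$, the degree identity gives $\ldeg_F(\ol{v}) = \ldeg_{F_0}(\ol{v}) = 1$, contradicting $F \in \Uclashns$. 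So the first alternative holds and produces the forbidden strict fs-resolution. The argument for $\varosing(G) = \es$ is dual: an fs-pair $\set{E_1, E_2} \sse G$ on $w \in \varosing(G)$ lifts, using $\var(C) \cap \var(G) = \es$, to an fs-pair $\set{C \cup E_1, C \cup E_2} \sse \set{C} \cor G \sse F$.

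For Parts 2(b) and 2(c), the last inequalities follow from Lemma \ref{lem:propclfac}(\ref{lem:propclfac2}), $\delta(F) = \delta(F_0) + \delta(G) - 1 + \abs{\var(F_0) \cap \var(G)}$, combined with $\delta(F_0), \delta(G) \ge 1$ (from Part 1); the middle inequalities are trivial containments. The first inequality of (c) reduces to $\varsing(G) \sse \var(F_0)$: any $w \in \varsing(G) \sm \var(F_0)$ has $w \notin \lit(C)$ and its variable appears in no clause of $F \sm (\set{C} \cor G)$, so $\ldeg_F = \ldeg_G$ at both $w$ and $\ol{w}$, making $w$ singular in $F$. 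The first inequality of (b) is the crux: for $v \in \varsing(F_0) \sm \var(G)$, if $v \notin \var(C)$ then $\ldeg_F = \ldeg_{F_0}$ on both $v$ and $\ol{v}$, making $v$ singular in $F$; hence $v \in \var(C)$, and taking WLOG $v \in C$, nonsingularity of $F$ forces $\ldeg_{F_0}(\ol{v}) = \ldeg_F(\ol{v}) \ge 2$, which together with singularity of $v$ in $F_0$ forces $\ldeg_{F_0}(v) = 1$. Each such $v$ therefore supplies a degree-$1$ literal of the single clause $C \in F_0$, and Corollary \ref{cor:Smuonly11sgen} forbids two such literals from coexisting in one clause of a UHIT, so at most one $v$ of this type exists. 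The main obstacle is exactly this last step: handling the two symmetric sub-cases $v \in C$ versus $\ol{v} \in C$ uniformly, and identifying the $+1$ on the right of $\nsv(F_0) \le \abs{\varsing(F_0) \cap \var(G)} + 1$ with precisely the single degree-$1$ literal permitted in $C$ by Corollary \ref{cor:Smuonly11sgen}.
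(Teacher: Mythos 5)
Your proof is correct, and since the paper omits the argument (introducing the lemma only with the remark that it has ``simple proofs''), there is nothing against which to compare beyond checking the reasoning itself, which holds up. Part 1 is the routine case split for hitting plus logical equivalence of $F$ with $F_0$ via Lemma~\ref{lem:clseqcl}, and the backward direction correctly routes through Lemmas~\ref{lem:equivcharacfacuh} and~\ref{lem:clfacinh}. In Part 2(a) the observation that for nonsingular $F \in \Uclash$ every fs-pair automatically gives a \emph{strict} fs-resolution (the resolution variable survives by nonsingularity, and the resolvent cannot already lie in $F$ since it would fail to clash with either parent) is exactly the bridge needed, and the split on $v \in \var(C)$ works as you describe. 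In Parts 2(b) and 2(c), the contrapositive ``if the variable is singular in the component and untouched by $C$ and by the other component, it is singular in $F$'' handles all but one variable, and the appeal to Corollary~\ref{cor:Smuonly11sgen} to cap the number of degree-$1$ literals of $C$ in $F_0$ at one is precisely the right tool to account for the $+1$.

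One small point of hygiene: your two stated degree identities (for $x \in C$ and for $x \notin \lit(C)$) do not literally cover the third case $x \in \ol{C}$ (i.e.\ $\ol{x} \in C$), which is precisely the case you invoke when you write $\ldeg_F(\ol{v}) = \ldeg_{F_0}(\ol{v})$ for $v \in C$. The formula for that third case is $\ldeg_F(x) = \ldeg_{F_0}(x)$ (since $x$ appears nowhere in $\set{C}\cor G$, as $x \notin C$ and $\var(x) \in \var(C)$ is disjoint from $\var(G)$), which is exactly what you use, so the argument goes through; it would just be cleaner to state the trichotomy $x \in C$, $x \in \ol{C}$, $\var(x) \notin \var(C)$ up front. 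The degenerate cases $F_0 = \set{\bot}$ or $G = \set{\bot}$ (for which $\varsing(F_0) = \es$ resp.\ $F = F_0$) make the crucial inequalities trivial, so there is no hidden dependence on $c(G) \ge 2$ in Part~2(b).
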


\begin{theorem}\label{thm:uppboundNV}
  Consider $F \in \Uclashns \sm \Pruclash$ and a nontrivial clause-factorisation $F = (F_0,C) \cor G$.  Let $k := \delta(F)$ (and so $k \ge 2$).
  \begin{enumerate}
  \item\label{thm:uppboundNV1} If $F$ is strictly fs-resolvable, then $n(F) \le \maxnhitdef(k-1) + 3$.
  \item\label{thm:uppboundNV2} Otherwise $n(F) \le \maxnhitdef(\delta(F_0)) + \maxnhitdef(\delta(G)) + \abs{\var(F_0) \cap \var(G)} + 1$.
  \item\label{thm:uppboundNV3} Assume that $\fa\, k' \ge 2 : k' < k \Ra \maxnhitdef(k') = 4 k' - 5$.
    \begin{enumerate}
    \item\label{thm:uppboundNV3a} If $F$ is strictly fs-resolvable, then $n(F) \le 4 k - 6$.
  \item\label{thm:uppboundNV3b} Otherwise $n(F) \le 4 k - 5 - 3 \cdot \abs{\var(F_0) \cap \var(G)} \le 4 k - 5$.
    \end{enumerate}
  \end{enumerate}
\end{theorem}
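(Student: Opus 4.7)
The plan is to work with the identity $n(F) = n(F_0) + n(G) - t$ for $t := \abs{\var(F_0) \cap \var(G)}$, combined with $\delta(F) = \delta(F_0) + \delta(G) - 1 + t$ (Lemma \ref{lem:propclfac}, part 2), and to bound $n(F_0), n(G)$ by reduction to singular normal forms via the inequality $n(H) \le \maxnhitdef(\delta(H)) + \singind(H)$ valid for any $H \in \Uclash$.

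For Part 1, the strict fs-resolution applied to $F$ produces $F' \in \Uclash$ with $n(F') = n(F)$ and $\delta(F') = k - 1$; since $F$ is nonsingular, Lemma \ref{lem:si2sr} gives $\singind(F') \le 3$, so $n(F) \le \maxnhitdef(k-1) + 3$. For Part 2, the failure of strict fs-resolvability combined with Lemma \ref{lem:propclfacuhit} (part 3(a)) forces $\varosing(F_0) = \varosing(G) = \es$; Corollary \ref{cor:nsvsi} applied with $\nosv = 0$ then reduces to $\singind(F_0) \le \nsv(F_0)$ and $\singind(G) \le \nsv(G)$, and parts (b), (c) of Lemma \ref{lem:propclfacuhit} (part 3) bound these by $t + 1$ and $t$ respectively. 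Substituting into $n(F) = n(F_0) + n(G) - t$ produces exactly the claimed $\maxnhitdef(\delta(F_0)) + \maxnhitdef(\delta(G)) + t + 1$.

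For Part 3(a), substituting $\maxnhitdef(k-1) = 4k - 9$ (the inductive hypothesis, valid for $k \ge 3$) into Part 1 yields $n(F) \le 4k - 6$; the degenerate case $k = 2$ is covered by Example \ref{exp:uhit12}, where the only clause-reducible member of $\Uclashnsi{\delta=2}$ up to isomorphism is $\Dt{2}$, having $n(\Dt{2}) = 2 = 4k - 6$. For Part 3(b), the essential prerequisite is $\delta(F_0), \delta(G) \ge 2$ so the inductive hypothesis applies on both sides: both deficiencies are $\ge 1$ by nontriviality of the factorisation, and $\delta = 1$ is excluded by the structural fact that every $H \in \Uclashi{\delta=1} \setminus \set{\set{\bot}}$ carries a 1-singular variable. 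This follows from the tree-CNF characterisation of $\Uclashi{\delta=1}$ obtained by induction on singular extensions from $\set{\bot}$: a binary tree with at least two leaves contains a pair of sibling leaves, and the splitting variable at their common parent is 1-singular. Given this, $\delta(F_0) + \delta(G) = k + 1 - t$ combined with the inductive hypothesis yields
\[ n(F) \le (4\delta(F_0) - 5) + (4\delta(G) - 5) + t + 1 = 4k - 3t - 5 \le 4k - 5. \]

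The main obstacle is precisely this structural claim about $\Uclashi{\delta=1}$: without excluding $\delta(F_0) = 1$ or $\delta(G) = 1$, the direct computation produces only $4k - 3t - 4$, one unit shy of the claimed bound. Everything else is routine bookkeeping of singular variables using the tools assembled in Sections \ref{sec:singvar}--\ref{sec:redsclscl}.
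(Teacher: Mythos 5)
Your proof follows the paper's own route in every part: fs-resolution plus Lemma \ref{lem:si2sr} for Part 1, the identity $n(H) = n(\sNF(H)) + \singind(H)$ together with Corollary \ref{cor:nsvsi} and Lemma \ref{lem:propclfacuhit} for Part 2, and the inductive substitution with the key fact $\delta(F_0), \delta(G) \ge 2$ for Part 3. The only difference is cosmetic: you spell out why $\varosing(F_0) = \varosing(G) = \es$ together with nontriviality forces $\delta(F_0), \delta(G) \ge 2$ (via the 1-singular variable in any $H \in \Uclashi{\delta=1}\sm\set{\set{\bot}}$), a step the paper compresses into a single ``thus''.
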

\begin{proof}
Part \ref{thm:uppboundNV1}: Perform a strict fs-resolution for $F$, obtaining $F'$ (with $\delta(F') = \delta(F) - 1$); by Lemma \ref{lem:si2sr} we get $n(F) = n(F') \le n(\sNF(F')) + 3 \le \maxnhitdef(\delta(F)-1) + 3$. Part \ref{thm:uppboundNV2}: Assume that $F$ is not strictly fs-resolvable. So by Lemma \ref{lem:propclfacuhit}, Part \ref{lem:propclfacuhit2}, we get $\varosing(F_0) = \varosing(G) = \es$. Let $s := \abs{\var(F_0) \cap \var(G)}$. We have
\begin{multline*}
  n(F) = n(F_0) + n(G) - s = (n(\sNF(F_0)) + \singind(F_0)) + (n(\sNF(G)) + \singind(G)) - s \le\\
  (\maxnhitdef(\delta(F_0)) + \singind(F_0)) + (\maxnhitdef(\delta(G)) + \singind(G)) - s.
\end{multline*}
By Corollary \ref{cor:nsvsi} holds $\singind(F_0) \le \nsv(F_0)$ and $\singind(G) \le \nsv(G)$, where by Lemma \ref{lem:propclfacuhit}, Parts \ref{lem:decompoUHIT2}, \ref{lem:decompoUHIT3}, we have $\nsv(F_0) \le s + 1$ and $\nsv(G) \le s$, which completes the proof.

Part \ref{thm:uppboundNV3a} follows from Part \ref{thm:uppboundNV1} for $k \ge 3$: $n(F) \le \maxnhitdef(k-1) + 3 \le 4 (k-1) - 5 + 3 = 4 k - 6$. And for $k=2$ we get $F \cong \Dt{2}$, since $\Dt{3}$ is not fs-resolvable, and thus $2 = n(F) = 4 k - 6$. For Part \ref{thm:uppboundNV3b} we notice that now $k \ge 3$ holds, since $\Dt{3}$ is irreducible by Lemma \ref{def:pruclash2}. Lemma \ref{lem:propclfac}, Part \ref{lem:propclfac2} yields $k = \delta(F_0) + \delta(G) + s - 1$, where by Lemma \ref{lem:propclfac}, Part \ref{lem:propclfac3diii}: $1 \le \delta(F_0) \le k-1$ and $1 \le \delta(G) \le k-1$. By Part \ref{thm:uppboundNV2} we know $n(F) \le \maxnhitdef(\delta(F_0)) + \maxnhitdef(\delta(G)) + s + 1$. And by Lemma \ref{lem:propclfacuhit}, Part \ref{lem:propclfacuhit2}, we get $\varosing(F_0) = \varosing(G) = \es$, and thus $\delta(F_0), \delta(G) \ge 2$. Now $\maxnhitdef(\delta(F_0)) + \maxnhitdef(\delta(G)) + s + 1 = 4 (k-s+1) - 2 \cdot 5 + s + 1 = 4 k - 5 - 3 s$. \qed
\end{proof}

\begin{corollary}\label{cor:uppboundNV}
  Conjecture \ref{con:basicn} is equivalent to the statement, that for all $k \ge 3$ we have $\sup \set{n(F) : F \in \Pruclashi{\delta=k}} < +\infty$. And Conjecture \ref{con:concfinhitstr} is equivalent to the statement, that for all $k \ge 3$ we have $\sup \set{n(F) : F \in \Pruclashi{\delta=k}} \le 4 k - 5$.
\end{corollary}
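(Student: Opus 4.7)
\begin{prf}
The plan is to derive both equivalences by induction on the deficiency $k$, using Theorem \ref{thm:uppboundNV} as the key inductive tool and the known base-case values $\maxnhitdef(1) = 0$ and $\maxnhitdef(2) = 3$ from Example \ref{exp:uhit12}.

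The forward directions are immediate. By Lemma \ref{lem:trivirred}, every clause-irreducible $F \in \Uclash$ with $c(F) \ge 2$ is nonsingular, so for $k \ge 3$ we have $\Pruclashi{\delta=k} \sse \Uclashnsi{\delta=k}$; hence $\sup\set{n(F) : F \in \Pruclashi{\delta=k}} \le \maxnhitdef(k)$, giving both $``\Rightarrow''$ statements.

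For the reverse direction of the first equivalence, assume $M(k) := \sup\set{n(F) : F \in \Pruclashi{\delta=k}} < +\infty$ for all $k \ge 3$. I would prove $\maxnhitdef(k) < +\infty$ for every $k \in \NN$ by induction on $k$. The cases $k \in \set{1,2}$ hold by Example \ref{exp:uhit12}. For $k \ge 3$, take any $F \in \Uclashnsi{\delta=k}$. If $F \in \Pruclash$, then $n(F) \le M(k) < +\infty$. Otherwise $F$ admits a nontrivial clause-factorisation $F = (F_0, C) \cor G$, and Theorem \ref{thm:uppboundNV} gives either $n(F) \le \maxnhitdef(k-1) + 3$ (strictly fs-resolvable case) or $n(F) \le \maxnhitdef(\delta(F_0)) + \maxnhitdef(\delta(G)) + \abs{\var(F_0) \cap \var(G)} + 1$. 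In the latter case, Lemma \ref{lem:propclfac} Part \ref{lem:propclfac3diii} ensures $\delta(F_0), \delta(G) < k$, and Lemma \ref{lem:propclfac} Part \ref{lem:propclfac2} bounds $\abs{\var(F_0) \cap \var(G)} = k + 1 - \delta(F_0) - \delta(G) \le k - 1$. All ingredients on the right-hand side are finite by the induction hypothesis, so the supremum over all $F \in \Uclashnsi{\delta=k}$ is finite, i.e.\ $\maxnhitdef(k) < +\infty$.

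For the reverse direction of the second (strengthened) equivalence, assume $\sup\set{n(F) : F \in \Pruclashi{\delta=k}} \le 4k-5$ for all $k \ge 3$. I would prove $\maxnhitdef(k) \le 4k-5$ for all $k \ge 2$ by induction. The case $k=2$ is exactly Example \ref{exp:uhit12}. For $k \ge 3$ and $F \in \Uclashnsi{\delta=k}$, if $F \in \Pruclash$, the hypothesis gives $n(F) \le 4k-5$ directly. If $F \notin \Pruclash$, apply Theorem \ref{thm:uppboundNV} Part \ref{thm:uppboundNV3}, whose hypothesis $\fa\, k' \in \set{2,\dots,k-1} : \maxnhitdef(k') = 4k'-5$ is supplied by the induction: the strictly fs-resolvable case yields $n(F) \le 4k-6$, and otherwise $n(F) \le 4k-5-3\cdot\abs{\var(F_0)\cap\var(G)} \le 4k-5$. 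Taking the supremum completes the induction.

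There is no real obstacle here: both directions reduce, essentially mechanically, to Theorem \ref{thm:uppboundNV} combined with the standard inductive bookkeeping on $\delta$. The only subtlety worth double-checking is that the reduction in the non-fs-resolvable case goes to strictly smaller deficiencies, which is exactly Lemma \ref{lem:propclfac} Part \ref{lem:propclfac3diii} and the nonsingularity preserved via Lemma \ref{lem:propclfacuhit} Part \ref{lem:propclfacuhit2}, so that $\delta(F_0), \delta(G) \ge 2$ appear in the sharper second equivalence. \qed
\end{prf}
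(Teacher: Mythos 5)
Your overall strategy is correct and is precisely what the paper intends: both equivalences fall out of Theorem~\ref{thm:uppboundNV} by induction on the deficiency, with the irreducible case handled by hypothesis and the reducible case reduced to strictly smaller deficiencies. The paper gives no written proof of the corollary, so what you wrote is essentially the argument that the authors compressed away.

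Two small points deserve tightening. First, the claim ``every clause-irreducible $F \in \Uclash$ with $c(F) \ge 2$ is nonsingular'' is false as stated: $A_1 = \set{\set{1},\set{-1}}$ has $c(A_1)=2$, is clause-irreducible (its only candidate factor is itself, which is trivial), and is singular. Lemma~\ref{lem:trivirred} actually says that a clause-irreducible $F \in \Uclash$ is singular iff $F \cong A_1$; since $\delta(A_1)=1$, the conclusion you need, namely $\Pruclashi{\delta=k} \sse \Uclashnsi{\delta=k}$ for $k \ge 3$, is still correct, but the auxiliary sentence should be phrased with $\delta(F) \ge 2$ (or $n(F) \ge 2$) rather than $c(F) \ge 2$. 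Second, in the reverse direction of the strengthened equivalence your induction only establishes $\maxnhitdef(k') \le 4k'-5$, whereas the hypothesis of Theorem~\ref{thm:uppboundNV}, Part~\ref{thm:uppboundNV3}, is stated as the equality $\maxnhitdef(k') = 4k'-5$, and Conjecture~\ref{con:concfinhitstr} itself also asserts equality. To close this, invoke Lemma~\ref{lem:Nkinc}, which already gives $\maxnhitdef(k') \ge 4k'-5$ for $k' \ge 2$ unconditionally; together with your upper bound this yields the needed equality. (Alternatively one can observe that the proof of Theorem~\ref{thm:uppboundNV}, Part~\ref{thm:uppboundNV3}, only ever uses the upper bound, but appealing to Lemma~\ref{lem:Nkinc} is cleaner and matches the theorem as stated.) With these two adjustments the proof is complete and matches the paper's intent.
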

In Corollary \ref{cor:uppboundNVnfs} we will further restrict the critical cases.

\section{Subsumption-flips}
\label{sec:Subsumptionflips}

Recall that $C, D \in \Cl$ are \emph{full-subsumption resolvable} (``fs-resolvable''; Definition \ref{def:fullsubres}) iff $C, D$ are resolvable and $\abs{C \symdif D} = 2$. In the following we write at places $A \addcup B := A \cup B$ in case $A \cap B = \es$.

\begin{definition}\label{def:nearlyfull}
  Clauses $C, D$ are \textbf{nearly-full-subsumption resolvable} (nfs-resolvable), and $\set{C,D}$ is an \textbf{nfs-pair}, if $C, D$ are resolvable and $\abs{C \symdif D} = 3$.
\end{definition}
$C, D$ are nfs-resolvable iff there is $E \in \Cl$ and $x, y \in \Lit$, $\var(x) \ne \var(y)$, with $\set{C,D} = \set{E \addcup \set{x}, E \addcup \set{\ol{x},y}}$; we call $x$ the \emph{resolution literal}, $y$ the \emph{side literal}, and $E$ the \emph{common part}.

\begin{definition}\label{def:nfsflip}
  For an nfs-pair $\set{C,D} = \set{E \addcup \set{x}, E \addcup \set{\ol{x},y}}$, the \textbf{nfs-flip} is the unordered pair $\set{E \addcup \set{x,\ol{y}}, E \addcup \set{y}}$ (in the clause with the side literal remove the resolution literal, and for the other clause add the complemented side literal). An $F \in \Cls$ is called \textbf{nfs-resolvable}, if there is an nfs-pair $\set{C,D} \sse F$, while none of the two clauses of the nfs-flip is in $F$. For nfs-resolvable $F$ \textbf{on} $\set{C,D} \sse F$, the nfs-flip replaces these two clauses by the result of the nfs-flip (so the number of clauses and the set of variables stays unaltered).
\end{definition}
If $\set{C,D} = \set{E \addcup \set{x}, E \addcup \set{\ol{x},y}}$ are nfs-resolvable, then the result $\set{C',D'}$ of the nfs-flip is again nfs-resolvable, and the nfs-flip yields back $\set{C,D}$. We can simulate the nfs-flip as follows. Performing one strict fs-extension, we obtain $\set{E \addcup \set{x,y}, E \addcup \set{x,\ol{y}}, E \addcup \set{\ol{x},y}}$. Now precisely two strict fs-resolutions are possible, yielding either the original $\set{C,D}$ or the nfs-flip $\set{E \addcup \set{x,\ol{y}}, E \addcup \set{y}}$. So, if $F \in \Uclash$ contains an nfs-pair $\set{C, D} \sse F$ and we replace the pair by its nfs-flip, then we obtain $F' \in \Uclash$, which we say is obtained by \emph{one nfs-flip} from $F$. An nfs-pair $\set{C,D}$ and its flip $\set{C',D'}$ are logically equivalent. Nfs-flips for $F \in \Cls$ leave the measures $n,c,\ell,\delta$ invariant, also the distribution of clause-sizes, while changing precisely the variable degree of two variables of $F$, one goes up and one goes down by one.

\begin{example}\label{exp:nfsflipDT3}
  $\Dt{3} = \set{\set{1,2,3},\set{-1,-2,-3},\set{-1,2},\set{-2,3},\set{-3,1}} \in \Pruclashi{\delta=2}$ is nfs-resolvable, for example the nfs-flip on the first and the third clause in $\Dt{3}$ is $F := \set{\set{2,3},\set{-1,-2,-3},\set{-1,2,-3},\set{-2,3},\set{-3,1}}$. Now $F$ has several nontrivial clause-factors, namely there are two strict fs-pairs, where fs-resolution yields elements of $\Uclashi{\delta=1}$, and $1$ is a $2$-singular variable of $F$ (with $\sNF(F) \cong \Dt{2}$), yielding one further nontrivial clause-factor.
\end{example}

\begin{definition}\label{def:nfsdecompo}
  $F \in \Pruclash$ is \textbf{nfs-reducible}, if via a series of nfs-flips $F$ can be transformed into a clause-reducible clause-set, otherwise $F$ is \textbf{nfs-irreducible}; the set of all nfs-irreducible elements of $\Pruclash$ is denoted by $\bmm{\Ipruclash} \subset \Pruclash$.
\end{definition}

By Lemma \ref{def:pruclash2} and Example \ref{exp:nfsflipDT3}:
\begin{lemma}\label{thm:nfsdecompo2}
  $\Ipruclashi{\delta=2} = \es$.
\end{lemma}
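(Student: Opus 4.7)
The plan is to observe that the statement follows immediately by combining the two results cited just before it. By Lemma \ref{def:pruclash2}, up to isomorphism the class $\Pruclashi{\delta=2}$ contains only a single element, namely $\Dt{3}$. So to establish $\Ipruclashi{\delta=2}=\es$ it suffices to exhibit one nfs-flip on $\Dt{3}$ whose result is clause-reducible, and this is exactly what Example \ref{exp:nfsflipDT3} does.

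More concretely, I would verify: first, that the pair $\set{\set{1,2,3},\set{-1,2}}$ of clauses in $\Dt{3}$ is an nfs-pair in the sense of Definition \ref{def:nearlyfull}, with common part $E=\set{2}$, resolution literal $x=1$ and side literal $y=3$, so that $\abs{C \symdif D}=3$ and the flip replaces these two clauses by $\set{2,3}$ and $\set{-1,2,-3}$. Second, that neither of these flipped clauses already lies in $\Dt{3}$, so $\Dt{3}$ is nfs-resolvable on this pair (Definition \ref{def:nfsflip}). Third, that the resulting clause-set $F$ is in $\Uclash$ (which is automatic from the simulation by strict fs-extension and fs-resolution described just after Definition \ref{def:nfsflip}) and that $F$ is clause-reducible --- and indeed Example \ref{exp:nfsflipDT3} even points out concrete nontrivial clause-factors of $F$ (the two fs-pairs and the factor arising from the $2$-singular variable $1$).

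By Definition \ref{def:nfsdecompo}, this shows $\Dt{3}$ is nfs-reducible. Since nfs-resolvability, nfs-flips and clause-reducibility are all invariant under the isomorphism relation of Section \ref{sec:Preliminaries}, every isomorphic copy of $\Dt{3}$ is nfs-reducible as well, and hence no element of $\Pruclashi{\delta=2}$ lies in $\Ipruclash$. There is no serious obstacle here: the entire content of the lemma is the transfer of Example \ref{exp:nfsflipDT3} from a single representative to the whole isomorphism class, together with the uniqueness supplied by Lemma \ref{def:pruclash2}.
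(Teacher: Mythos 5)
Your proposal is correct and follows exactly the paper's own one-line argument (``By Lemma~\ref{def:pruclash2} and Example~\ref{exp:nfsflipDT3}''), simply fleshing out the verification and the transfer across the isomorphism class. One small correction in your decomposition of the nfs-pair $\set{\set{1,2,3},\set{-1,2}}$: by Definition~\ref{def:nearlyfull} the resolution literal $x$ must satisfy $E\addcup\set{x}=\set{-1,2}$, so $x=-1$ rather than $x=1$; this does not affect your (correct) computation of the flip.
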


If after an nfs-flip we obtain non-singularity, then it is of the easiest form, and after re-singularisation we have clause-reducibility with additional properties:
\begin{lemma}\label{lem:nfsflspec}
  Consider $F \in \Uclashns$ with an nfs-flip $F'$. Then $\singind(F') \le 1$. Assume that $F$ not fs-resolvable and $\singind(F') = 1$, and let $G := \sNF(F')$. There is a non-trivial clause-factorisation $G = (G_0,C) \cor H$, such that $\var(G_0) \cap \var(H) \ne \es$.
\end{lemma}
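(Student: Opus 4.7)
The plan is to handle the two claims separately. For $\singind(F') \le 1$, write the nfs-pair as $\set{C,D} = \set{E \addcup \set{x}, E \addcup \set{\ol{x},y}}$, so the flip produces $\set{C',D'} = \set{E \addcup \set{x,\ol{y}}, E \addcup \set{y}}$. A direct comparison of literal-occurrences shows that the only literal-degrees changing between $F$ and $F'$ are $\ldeg(\ol{x})$ (drops by one) and $\ldeg(\ol{y})$ (rises by one). Nonsingularity of $F$ gives $\ldeg_F(\ol{x}),\ldeg_F(\ol{y}) \ge 2$, hence only $\var(x)$ can become singular in $F'$, and precisely when $\ldeg_F(\ol{x}) = 2$. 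In that case $\ldeg_{F'}(x) = \ldeg_F(x) \ge 2$ shows $\var(x)$ is non-1-singular, so $\nosv(F') = 0$, $\nnosv(F') \le 1$, and Corollary \ref{cor:nsvsi} delivers $\singind(F') \le 1$.

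Now assume $\singind(F') = 1$. Then $\ldeg_F(\ol{x}) = 2$ and there is a unique $D_0 \in F_0 := F \sm \set{C,D}$ with $\ol{x} \in D_0$. Lemma \ref{lem:characsingDPuhit} applied to $\var(x)$ in $F'$ gives $D_0 \sm \set{\ol{x}} \sse C' \sm \set{x} = E \cup \set{\ol{y}}$ and $D_0 \sm \set{\ol{x}} \sse C_i \sm \set{x}$ for every $C_i \in F_0$ with $x \in C_i$. The hitting of $F$ between $D_0$ and $D$, which both contain $\ol{x}$, must clash through $\var(y)$ and therefore forces $\ol{y} \in D_0$. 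Writing $F_0^{(0)}$ and $F_0^{(1)}$ for the clauses of $F_0 \sm \set{D_0}$ without resp.\ with $\var(x)$, the resolvents are then $D_0 \res C' = E \cup \set{\ol{y}}$ and $D_0 \res C_i = C_i \sm \set{x}$, so computing $G = \dpi{\var(x)}(F')$ yields
\begin{displaymath}
  G = F_0^{(0)} \cup \set{E \cup \set{y},\, E \cup \set{\ol{y}}} \cup \set{C_i \sm \set{x} : C_i \in F_0^{(1)}}.
\end{displaymath}

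The natural candidate factor is the fs-pair $\set{E \cup \set{y}, E \cup \set{\ol{y}}} \sse G$, with resolvent $E$. Here $E \notin G$: $E \notin F$ (else $E$ would not clash with $C$), and any $C_i \sm \set{x} = E$ would force $C_i = C \notin F_0$. So Lemma \ref{lem:2subpfac} turns this fs-pair into a clause-factor of $G$ with intersection $E$ and residual factor $H \cong A_1$ on $\var(y)$, producing the factorisation $G = (G_0, E) \cor H$ with $G_0 = (G \sm \set{E \cup \set{y}, E \cup \set{\ol{y}}}) \cup \set{E}$. Nontriviality requires $G_0 \ne \set{\bot}$; otherwise $E = \bot$ and $F_0 = \set{D_0}$, making $C = \set{x}$ a unit-clause of $F$ and contradicting $F \in \Uclashns$. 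For the required variable-overlap, nonsingularity of $F$ gives $\ldeg_F(y) \ge 2$, so some $K \in F_0$ besides $D$ contains $y$; since $D_0 \sse E \cup \set{\ol{x},\ol{y}}$ rules out $y \in D_0$, we have $K \in F_0^{(0)} \cup F_0^{(1)}$, and the corresponding clause of $G_0$ (either $K$ or $K \sm \set{x}$) still contains $y$, giving $\var(y) \in \var(G_0) \cap \var(H)$. The main obstacle is the bookkeeping needed to pin down the explicit form of $G$; once the inclusion $D_0 \sm \set{\ol{x}} \sse E \cup \set{\ol{y}}$ and the hitting-forced $\ol{y} \in D_0$ are secured, the fs-pair factor and the overlap essentially reveal themselves.
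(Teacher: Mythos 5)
Your proof is correct, and it takes a route that is genuinely different from the paper's for the second assertion. Both proofs open identically: the nfs-flip shifts only $\ldeg(\ol{x})$ down and $\ldeg(\ol{y})$ up, so $\var(x)$ is the only candidate singularity, and it is necessarily non-1-singular, giving $\singind(F')\le 1$ (you cite Corollary~\ref{cor:nsvsi}, which makes the paper's slightly terse ``whence'' precise). Both proofs also apply Lemma~\ref{lem:characsingDPuhit} to $\var(x)$ in $F'$ to pin down the unique $\ol{x}$-clause $D_0$, and both use the hitting clash between $D_0$ and $D$ to derive $\ol{y}\in D_0$. The divergence comes in the choice of factor in $G$. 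The paper keeps the ``large'' factor $F''_x$ (all $x$-clauses of $F'$ with $x$ stripped), whose intersection is $E'\cup\set{\ol{y}}$, and obtains the overlap from $E\sm E'\ne\es$, which it derives from the hypothesis that $F$ is not fs-resolvable (forcing $E'\subsetneq E$). You instead single out the two-clause fs-pair $\set{E\cup\set{y},\,E\cup\set{\ol{y}}}$ that the flip plus DP-reduction necessarily creates in $G$, a factor with intersection $E$ and residual $\cong A_1$, and you obtain the overlap directly from $\ldeg_F(y)\ge 2$ (nonsingularity of $F$). Your factor is smaller and the overlap argument is cleaner, and — noteworthy — your argument for the second part never appears to invoke the ``$F$ not fs-resolvable'' hypothesis at all (that hypothesis is only used in the paper to get $E'\subsetneq E$), so you have in effect shown the statement holds under a slightly weaker assumption. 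The explicit formula for $G$ that you work out, together with the checks that $E\notin G$, that the fs-pair does not collide with anything in $F_0^{(0)}$ or the stripped $F_0^{(1)}$ clauses, and that $E\ne\bot$ (else $C=\set{x}$ would be a unit clause, contradicting nonsingularity), closes all the routine gaps correctly.
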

\begin{proof}
Consider an nfs-pair $\set{E \cup \set{x}, E \cup \set{\ol{x},y}} \sse F$, and thus $E \cup \set{x,\ol{y}}, E \cup \set{y} \in F'$; we have $\ldeg_{F'}(\ol{x}) = \ldeg_F(\ol{x}) - 1$, $\ldeg_{F'}(\ol{y}) = \ldeg_F(\ol{y}) + 1$, while all other literal degrees remain the same. So the only possibility for a singularity in $F'$ is that $\ldeg_{F'}(\ol{x}) = 1$, and then $\varsing(F') = \varnosing(F') = \set{\var(x)}$, whence in general $\singind(F') \le 1$, proving the first assertion. Now consider the remaining assertions.

Consider the (single) $\ol{x}$-occurrence in $F'$ (which has been transferred unchanged from $F$). By Lemma \ref{lem:characsingDPuhit} and the necessity of a clash with the second $\ol{x}$-occurrence in $F$, this clause is $E' \cup \set{\ol{x},\ol{y}} \in F' \cap F$, where due to $F$ not being fs-resolvable we have $E' \subset E$. Consider the nontrivial factor $F'_x$ of $F'$ according to Lemma \ref{lem:singfact}: We have $E \cup \set{x,\ol{y}} \in F'_x$, while by Lemma \ref{lem:characsingDPuhit} the intersection of $F'_x$ is $E' \cup \set{x,\ol{y}}$. Note that $E \cup \set{y} \in F' \sm F'_x$.

Obtain $F''_x$ from $F'_x$ by removing all occurrences of $x$. Now $G = \dpi{\var(x)}(F')$ is obtained from $F'$ by removing the clause $E' \cup \set{\ol{x},\ol{y}}$, and replacing $F'_x$ by $F''_x$. $F''_x$ is a nontrivial factor of $G$, and via Lemma \ref{lem:corrfactoruhitext} we obtain the sought nontrivial clause-factorisation: $C := E' \cup \set{\ol{y}}$, $H := \set{D \sm C : D \in F''_x}$ and $G_0 := G \sm F''_x$. Due to $E \sm E' \in H$ and $E \cup \set{y} \in G_0$ there is a common variable. \qed
\end{proof}

\begin{theorem}\label{thm:equivconjnfs}
  Consider $k \ge 3$, and assume that $\fa\, k' \in \NN_{\ge 3} : k' < k \Ra \maxnhitdef(k') = 4 k - 5$. Then $\maxnhitdef(k) = 4 k - 5$ is equivalent to the statement, that for all $F \in \Ipruclashi{\delta=k}$ holds $n(F) \le 4 k - 5$.
\end{theorem}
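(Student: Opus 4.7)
The plan is to prove the two directions of the equivalence separately, with the forward direction being essentially automatic and the backward direction requiring the full machinery of Sections \ref{sec:redsclscl} and \ref{sec:Subsumptionflips}. For the forward direction: if $\maxnhitdef(k) = 4k-5$, then for any $F \in \Ipruclashi{\delta=k}$ we have $F \in \Uclashnsi{\delta=k}$ (for $k \ge 3$ every clause-irreducible UHIT is nonsingular by Lemma \ref{lem:trivirred}, being not isomorphic to $A_1$), and hence $n(F) \le \maxnhitdef(k) = 4k-5$.

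For the backward direction I would prove $\maxnhitdef(k) \le 4k-5$ (the matching lower bound being Lemma \ref{lem:Nkinc}) by taking arbitrary $F \in \Uclashnsi{\delta=k}$ and splitting along the irreducibility hierarchy. If $F \notin \Pruclash$, then $F$ is clause-reducible and Theorem \ref{thm:uppboundNV} Part 3 immediately yields $n(F) \le 4k-5$, using the standing inductive hypothesis for $k' \in \NN_{\ge 3}$ with $k' < k$ together with the base value $\maxnhitdef(2) = 3$ from Example \ref{exp:uhit12}. If $F \in \Ipruclash$, the hypothesis of the theorem directly gives $n(F) \le 4k-5$.

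The substantive case is $F \in \Pruclash \sm \Ipruclash$, so $F$ is nfs-reducible. Pick a series $F = F_0, F_1, \ldots, F_t$ of nfs-flips of minimal length with $F_t$ clause-reducible; note that nfs-flips preserve $\Uclash$, the deficiency, and $n$. By minimality $F_i \in \Pruclashi{\delta=k}$ for $0 \le i < t$, and since $k \ge 3$ Lemma \ref{lem:trivirred} shows each such $F_i$ is nonsingular and not fs-resolvable. Applying Lemma \ref{lem:nfsflspec} to the final flip $F_{t-1} \leadsto F_t$ gives $\singind(F_t) \le 1$. If $\singind(F_t) = 0$, then $F_t \in \Uclashnsi{\delta=k} \sm \Pruclash$, and the clause-reducible case above applied to $F_t$ yields $n(F) = n(F_t) \le 4k-5$. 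If $\singind(F_t) = 1$, set $G := \sNF(F_t)$; Lemma \ref{lem:nfsflspec} provides a nontrivial clause-factorisation $G = (G_0, C) \cor H$ with $s := \abs{\var(G_0) \cap \var(H)} \ge 1$, and Theorem \ref{thm:uppboundNV} Part 3 applied to $G \in \Uclashnsi{\delta=k} \sm \Pruclash$ delivers $n(G) \le 4k-6$ (either via Part 3a directly, or via Part 3b where the estimate $4k-5-3s \le 4k-8$ is even stronger). Consequently $n(F) = n(F_t) = n(G) + 1 \le 4k-5$, completing the case analysis.

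The main obstacle is precisely the sub-case $\singind(F_t) = 1$ in the nfs-reducible case: the terminal nfs-flip may introduce a singular variable, and subsequent $\sNF$-reduction costs exactly one variable relative to $n(F)$; the budget still closes only because Lemma \ref{lem:nfsflspec} guarantees that the resulting factorisation of $G$ has a nontrivial variable overlap $s \ge 1$, which via Theorem \ref{thm:uppboundNV} Part 3b converts into a saving of three variables. This trade-off between the one variable lost through singular DP-reduction and the overlap $s \ge 1$ enforced by the structure of an nfs-flip is exactly what nfs-resolution was designed to exploit.
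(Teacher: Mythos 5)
Your proof is correct and follows essentially the same route as the paper: reduce the upper bound $\maxnhitdef(k) \le 4k-5$ to the clause-irreducible case via Theorem~\ref{thm:uppboundNV} Part~\ref{thm:uppboundNV3}, then for an nfs-reducible $F \in \Pruclashi{\delta=k}$ take a sequence of nfs-flips to the first clause-reducible $F_t$, apply Lemma~\ref{lem:nfsflspec} to the terminal flip, and close the one-variable loss from $\sNF$ using the guaranteed overlap $s \ge 1$ in the resulting factorisation. The only (welcome) differences from the paper are that you spell out that the penultimate clause-set is nonsingular and not fs-resolvable via Lemma~\ref{lem:trivirred} with $\delta = k \ge 3$, and that you explicitly supply the base value $\maxnhitdef(2)=3$ needed to invoke Part~\ref{thm:uppboundNV3}, both of which the paper leaves implicit.
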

\begin{proof}
Clearly the condition is necessary, and it remains to show that it is sufficient. By Theorem \ref{thm:uppboundNV}, Part \ref{thm:uppboundNV3}, it remains to consider $F_0 \in \Pruclashi{\delta=k}$ and to show $n(F_0) \le 4 k - 5$, and by the condition we can assume that $F_0$ is nfs-reducible. So there exists a series $F_0, \dots, F_m$, $m \ge 1$, such that $F_{i+1}$ is an nfs-flip of $F_i$, and where $F_m$ is clause-reducible, while $F_{m-1}$ is clause-irreducible. If $F_m$ is nonsingular, then we are done (as before), and so assume that $F_m$ is singular. We apply Lemma \ref{lem:nfsflspec}, with $F := F_{m-1}$ and $F' := F_m$, while $G = \sNF(F_m)$ with $n(G) = n(F_0) - 1$. If $G$ is strictly fs-resolvable, then by Theorem \ref{thm:uppboundNV}, Part \ref{thm:uppboundNV3a} we get $n(G) \le 4 k - 6$, so assume that $G$ is not strictly fs-resolvable. Now by Theorem \ref{thm:uppboundNV}, Part \ref{thm:uppboundNV3b} we get $n(G) \le 4 k - 5 - 3 \cdot 1 = 4k - 8$. \qed
\end{proof}

\begin{corollary}\label{cor:uppboundNVnfs}
  Conjecture \ref{con:basicn} is equivalent to the statement, that for all $k \ge 3$ we have $\sup \set{n(F) : F \in \Ipruclashi{\delta=k}} < +\infty$. And Conjecture \ref{con:concfinhitstr} is equivalent to the statement, that for all $k \ge 3$ we have $\sup \set{n(F) : F \in \Ipruclashi{\delta=k}} \le 4 k - 5$.
\end{corollary}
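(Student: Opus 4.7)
\emph{Plan.} I will establish both equivalences by induction on $k \ge 3$; the ``only if'' directions are trivial since $\Ipruclashi{\delta=k} \sse \Uclashnsi{\delta=k}$ makes the relevant supremum a lower bound for $\maxnhitdef(k)$. For Conjecture \ref{con:concfinhitstr}, the induction step is exactly Theorem \ref{thm:equivconjnfs}, primed by the base case $\maxnhitdef(2) = 3 = 4\cdot 2 - 5$ from Example \ref{exp:uhit12} together with the matching lower bound $\maxnhitdef(k) \ge 4k-5$ from Lemma \ref{lem:Nkinc}. So this half is immediate.

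For Conjecture \ref{con:basicn} I would prove the following qualitative analogue of Theorem \ref{thm:equivconjnfs}: \emph{for $k \ge 3$, if $\maxnhitdef(k') < +\infty$ for all $k' < k$ and $\sup \set{n(F) : F \in \Ipruclashi{\delta=k}} < +\infty$, then $\maxnhitdef(k) < +\infty$.} By Corollary \ref{cor:uppboundNV} it suffices to bound $n(F)$ uniformly on $\Pruclashi{\delta=k}$. If $F \in \Ipruclashi{\delta=k}$ the assumed supremum gives the bound, so suppose $F$ is nfs-reducible. Take a minimal nfs-flip series $F = F_0, \dots, F_m$ with $F_m$ clause-reducible; then $F_0, \dots, F_{m-1} \in \Pruclash$ all have $\delta = k \ge 3$, hence are nonsingular by Lemma \ref{lem:trivirred} (none is isomorphic to $A_1$), so Lemma \ref{lem:nfsflspec} is applicable at the last step. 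I then mimic the case analysis of Theorem \ref{thm:equivconjnfs}, invoking Theorem \ref{thm:uppboundNV} Parts \ref{thm:uppboundNV1} and \ref{thm:uppboundNV2} in place of Part \ref{thm:uppboundNV3}: either $F_m$ is nonsingular, and those Parts apply to $F_m$ directly, or $F_m$ is singular, in which case Lemma \ref{lem:nfsflspec} yields $\singind(F_m) = 1$ and a clause-reducible $G := \sNF(F_m)$ with $n(G) = n(F) - 1$, and the same Parts apply to $G$.

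In every subcase the resulting upper bound is a sum of $\maxnhitdef$-values at deficiencies strictly smaller than $k$ together with a term $\abs{\var(F_0) \cap \var(G)} + 1 \le k$ coming from Lemma \ref{lem:propclfac} Part \ref{lem:propclfac2}, so each bound is finite by the induction hypothesis. The main obstacle is essentially already resolved inside Theorem \ref{thm:equivconjnfs}: one just needs to observe that the right-hand sides in Theorem \ref{thm:uppboundNV} Parts \ref{thm:uppboundNV1} and \ref{thm:uppboundNV2} depend only on $\maxnhitdef$ at strictly smaller deficiencies and on $k$ itself, so the quantitative argument of Theorem \ref{thm:equivconjnfs} specialises to a qualitative one when the target bound is weakened from $4k-5$ to $< +\infty$. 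Once this is noted, Corollary \ref{cor:uppboundNVnfs} follows.
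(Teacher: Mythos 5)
Your proof is correct and matches what the paper intends: the second equivalence is an immediate induction from Theorem~\ref{thm:equivconjnfs} with base case $\maxnhitdef(2)=3$, and the first requires exactly the qualitative analogue you formulate, which you correctly extract from Theorem~\ref{thm:uppboundNV} Parts~\ref{thm:uppboundNV1}--\ref{thm:uppboundNV2} and Lemma~\ref{lem:nfsflspec} by observing that all the right-hand sides involve only $\maxnhitdef$ at deficiencies $<k$ plus a term bounded by $k$ via Lemma~\ref{lem:propclfac}, Part~\ref{lem:propclfac2}. The paper leaves this corollary without an explicit argument, and the steps you supply (nonsingularity of the $F_i$ for $i<m$ via Lemma~\ref{lem:trivirred}, hence applicability of Lemma~\ref{lem:nfsflspec}, and the case split on whether $F_m$ is singular) are precisely the ones used in the proof of Theorem~\ref{thm:equivconjnfs}.
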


Before we can finally prove the main result of this \Schrift, we need two lemmas on nfs-reducibility. First we show that clause-irreducible clause-sets with a variable occurring positively and negatively exactly twice are nfs-reducible:
\begin{lemma}\label{lem:22fns}
  Consider $F \in \Uclash$ with $v \in \var(F)$ such that $\ldeg_F(v) = \ldeg_F(\ol{v}) = 2$. Then $F$ is fs-resolvable or allows an nfs-flip enabling an fs-resolution.
\end{lemma}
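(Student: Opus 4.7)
The strategy is, assuming $F$ is not fs-resolvable, to produce an nfs-flip whose outcome $F' \in \Uclash$ has a variable of variable-degree at most $3$; Corollary~\ref{cor:uhit3fs} then supplies the needed fs-pair in $F'$. I write the four clauses at $v$ as $A_i = C_i \cup \{v\}$ and $B_j = D_j \cup \{\ol{v}\}$ for $i, j \in \{1, 2\}$, with $v \notin \var(C_i) \cup \var(D_j)$. The absence of any fs-pair in $F$, combined with the hitting condition forbidding subsumption, yields three elementary constraints: $C_i \ne D_j$ for all $i, j$ (else $\{A_i, B_j\}$ is an fs-pair), $|C_1 \symdif C_2| \ge 3$, and $|D_1 \symdif D_2| \ge 3$, since the values $1$ and $2$ are excluded by no-subsumption and no-fs-pair respectively.

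The main case is that some pair $(i, j)$ satisfies $|C_i \symdif D_j| = 1$, so WLOG $D_j = C_i \cup \{y\}$ for a literal $y$. Then $\{A_i, B_j\}$ is an nfs-pair whose only clash is on $v$ (the inclusion $C_i \subsetneq D_j$ precludes any further clash), and the nfs-flip replaces $\{A_i, B_j\}$ by $\{A_i \cup \{\ol{y}\}, D_j\}$. The clause $D_j$ contains neither $v$ nor $\ol{v}$, so $\ldeg_{F'}(\ol{v}) = 1$ and thus $\vdeg_{F'}(v) = 3$; Corollary~\ref{cor:uhit3fs} then applies. Flip-validity (both new clauses lie outside $F$) is immediate from the hitting condition: $D_j \in F$ would subsume $B_j$, and $A_i \cup \{\ol{y}\} \in F$ would force coincidence with $A_{3-i}$ and thus subsume $A_i$. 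The symmetric subcase $C_i = D_j \cup \{y\}$ is handled identically (with the flip reducing $\ldeg_{F'}(v)$ instead).

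The complementary case $|C_i \symdif D_j| \ge 2$ for all $i, j$ is the main obstacle. Here I plan to turn to the internal pair $\{A_1, A_2\}$ (symmetrically $\{B_1, B_2\}$): a direct size count shows $\{A_1, A_2\}$ is an nfs-pair exactly when $|C_1 \symdif C_2| = 3$, and in that subcase the single clash literal $u_1$ is automatic. Parametrising $A_1 = E \cup \{v, u_1, u_2\}$ and $A_2 = E \cup \{v, \ol{u_1}\}$, the flip decreases $\vdeg(\var(u_1))$ by $1$, so if $\vdeg_F(\var(u_1)) = 4$ the result is delivered at once by Corollary~\ref{cor:uhit3fs}; otherwise I plan to exhibit an fs-pair $\{A_2', B_k\}$ in $F'$ (with $A_2' = E \cup \{v, u_2\}$), which is valid exactly when $D_k = E \cup \{u_2\}$, forcing this identity from the hitting condition for the resolvent $A_2 \res B_k$ in $\dpi{v}(F) \in \Uclash$. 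The most delicate step is the residual subcase where in addition $|C_1 \symdif C_2|, |D_1 \symdif D_2| \ge 4$: then neither internal pair is nfs and the previous strategies fail, and I expect to rule out this configuration by contradiction from the combined hitting, unsatisfiability and no-fs-pair constraints on $F$ and $\dpi{v}(F)$, thereby reducing back to the main case.
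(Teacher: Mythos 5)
Your main case ($|C_i \symdif D_j| = 1$ for some $i,j$) is correct and complete: the pair $\{A_i, B_j\}$ is an nfs-pair clashing only on $v$, the flip drops $\vdeg(v)$ to $3$, and Corollary~\ref{cor:uhit3fs} finishes. This is in fact precisely the nfs-flip the paper itself produces, just differently packaged. The genuine gap is everything after that. In the complementary case ($|C_i \symdif D_j|\ge 2$ for all $i,j$) you give only intentions, not arguments: for the subcase $|C_1\symdif C_2|=3$ you ``plan to exhibit'' an fs-pair $\{A_2',B_k\}$ and ``force this identity from the hitting condition,'' but no chain of deductions is given (and it is not clear that such a $k$ exists); and the residual subcase $|C_1\symdif C_2|,|D_1\symdif D_2|\ge4$ is left entirely open with ``I expect to rule out this configuration.'' So as written this is a sketch of the easy third of the proof and a to-do list for the rest.

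The interesting thing is that the complementary case is in fact vacuous once $F$ is assumed not fs-resolvable, and that is exactly what the paper's proof establishes. The paper fixes a clash literal $x\in C_1$, $\ol{x}\in C_2$ (in your notation: $x\in A_1$, $\ol x\in A_2$, $\var(x)\ne v$), and tests whether $\var(x)$ meets $\var(D_1)\cup\var(D_2)$ in neither, both, or exactly one of them; the first two cases each yield an fs-pair in $F$ via Corollaries~\ref{cor:charac1sUHIT} and \ref{cor:2suhit} after assigning $x$, and the third case pins down the shapes of $C_2,D_1,D_2$ to the point where some $|C_i\symdif D_j|=1$ drops out. So rather than constructing further flips inside the complementary case, you should prove it cannot arise: pick a clash literal between $A_1,A_2$, and run the three-way case analysis by where it lands in $D_1,D_2$, using partial assignments to reduce $v$ to a singular variable and read off fs-pairs via the two corollaries in Section~\ref{sec:singvar}. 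That is the missing idea, and without it the proof does not go through.
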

\begin{proof}
Assume that $F$ has no fs-pairs. Let $C_1, C_2$ be the two $v$-occurrences and let $D_1, D_2$ be the two $\ol{v}$-occurrences. There is a literal $x \in C_1$ with $\ol{x} \in C_2$. If $\var(x) \notin \var(D_1) \cup \var(D_2)$, then setting $x$ to true (or false, it doesn't matter) we create an UHIT where $v$ becomes singular, and via Corollary \ref{cor:2suhit} then $\set{D_1,D_2}$ is an fs-pair; thus $\var(x) \in \var(D_1) \cup \var(D_2)$. If $\var(x) \in \var(D_1) \cap \var(D_2)$, then w.l.o.g.\ $x \in D_1$, $\ol{x} \in D_2$ (if there wouldn't be a clash, then via setting $x$ to true resp.\ false one could create an UHIT with $v$ occurring only once); now setting $x$ to false (or true, again it doesn't matter) $v$ becomes 1-singular, and via Corollary \ref{cor:charac1sUHIT} $\set{C_1 \sm \set{x}, D_1 \sm \set{x}}$ is an fs-pair, whence $\set{C_1, D_1}$ would be an fs-pair; thus $\var(x) \notin \var(D_1) \cap \var(D_2)$. So finally w.l.o.g.\ $x \in D_1$, $\var(x) \notin \var(D_2)$. So after setting $x$ to true/false we can apply Corollary \ref{cor:charac1sUHIT} resp.\ \ref{cor:2suhit}, and we obtain that there is a clause $A$ and a new literal $z$ such that $C_2 = \set{v,\ol{x}} \addcup A \addcup \set{z}$, $D_2 = \set{\ol{v}} \addcup A \addcup \set{z}$, and $D_1 = \set{\ol{v},x} \addcup A \addcup \set{\ol{z}}$. Applying the nfs-flip to $C_2, D_2$, from $D_2$ we obtain $D_2' := \set{\ol{v}, x} \addcup A \addcup \set{z}$, and now $\set{D_1,D_2'}$ is an fs-pair. \qed
\end{proof}

Furthermore, if we can reach deficiency $1$ by assigning one variable, then via a series of nfs-flips we can create an fs-pair:
\begin{lemma}\label{lem:createfs}
  Consider $F \in \Uclash$ and $x \in \lit(F)$ such that assigning $x$ to true in $F$ yields a clause-set with deficiency $1$. Then via a series of nfs-flips on $F$ we can reach an element of $\Uclash$ with an fs-pair.
\end{lemma}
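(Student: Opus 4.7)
The plan is to induct on $c(F\langle x\rangle) = n(F\langle x\rangle) + 1$, using the observation that every $G \in \Uclashi{\delta=1}$ with $n(G) \geq 1$ possesses an fs-pair — namely the two clauses created by the last non-strict fs-extension in a construction of $G$ from $\set{\bot}$. In the step $c(F\langle x\rangle) \geq 2$ I would pick such an fs-pair $\set{C'_1, C'_2} \sse F\langle x\rangle$ and trace each $C'_i$ back to a clause $C_i \in F$: either $C_i \in F^0$ (clauses disjoint from $\var(x)$) with $C'_i = C_i$, or $C_i \in F^-$ with $C'_i = C_i \sm \set{\ol x}$. This yields four sub-cases. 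If $C_1, C_2$ both lie in $F^0$ or both in $F^-$, then $\set{C_1, C_2}$ is already an fs-pair in $F$ and we are done; if one lies in each, a short calculation gives $|C_1 \symdif C_2| = 3$ and $|C_1 \cap \ol{C_2}| = 1$, so $\set{C_1, C_2}$ is an nfs-pair in $F$ with side literal $\ol x$. Performing the nfs-flip yields $F'$ for which $F'\langle x\rangle$ is exactly the fs-resolvent of $\set{C'_1, C'_2}$ inside $F\langle x\rangle$, which must be a \emph{non-strict} fs-resolution — otherwise the resulting UHIT would have deficiency $0$, impossible. Hence $c(F'\langle x\rangle) = c(F\langle x\rangle) - 1$ and $\delta(F'\langle x\rangle) = 1$, so the inductive hypothesis applies to $F'$ with the same literal $x$.

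The base case is $c(F\langle x\rangle) = 1$, i.e.\ $F\langle x\rangle = \set{\bot}$, which forces $F^0 = \es$ and $F^- = \set{\set{\ol x}}$, so $F = \set{\set{\ol x}} \cup F^+$ and $\var(x)$ is $m$-singular in $F$ for $m = \ldeg_F(x)$. For $m = 1$ we have $F \cong A_1$ with fs-pair $\set{\set{x}, \set{\ol x}}$, and for $m = 2$ Corollary \ref{cor:2suhit} yields an fs-pair directly as the two side clauses of the $2$-singular variable $\var(x)$. For $m \geq 3$ I would run a secondary induction on $c(F)$: set $G' := F\langle \ol x\rangle = \set{C \sm \set{x} : C \in F^+} \in \Uclash$, with $c(G') = c(F) - 1$. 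Any fs-pair in $G'$ lifts (via $\set{x} \cor \cdot$) to an fs-pair in $F^+ \sse F$, and any nfs-flip in $G'$ lifts clause-wise to an nfs-flip inside $F^+$, so it suffices to reach an fs-pair in $G'$ via nfs-flips. The secondary induction then invokes the lemma again whenever $G'$ admits a literal $y$ with $\delta(G'\langle y\rangle) = 1$, and falls back on Lemma \ref{lem:22fns} whenever $G'$ contains a variable with $\ldeg_{G'}(v) = \ldeg_{G'}(\ol v) = 2$.

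The main obstacle I expect is closing the secondary induction in the $m \geq 3$ sub-case: one must ensure that a fs-irreducible $G' \in \Uclash$ (with $n(G') \geq 2$) always admits at least one of these tools. The structural constraints here are tight — by Corollary \ref{cor:uhit3fs} every variable of $G'$ has variable-degree $\geq 4$, and by Corollaries \ref{cor:charac1sUHIT} and \ref{cor:2suhit} there are no $1$- or $2$-singular variables — which sharply restricts the remaining degree patterns to variables of $\vdeg = 4$ with a $1$-$3$ split, or variables of $\vdeg \geq 5$. I anticipate the bulk of the technical work to be a careful case-analysis on these patterns, possibly strengthening the inductive statement so as to simultaneously track the nfs-reducibility of the residual clause-set $G'$ and the availability of a literal $y$ with $\delta(G'\langle y\rangle) = 1$.
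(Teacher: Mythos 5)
Your inductive step is exactly the paper's argument: pick an fs-pair in $F\langle x\rangle$ (citing the $\delta=1$ fact from Aharoni--Linial), lift it to $F$, observe that ``both in $F^0$'' or ``both in $F^-$'' already gives an fs-pair in $F$, and in the ``one each'' case flip the resulting nfs-pair; the flip acts on $F\langle x\rangle$ as a non-strict fs-resolution (else $\delta$ would drop to $0$, impossible for $\Uclash$), so $c(F\langle x\rangle)$ decreases by one. You are also right that the base case is delicate. The paper's own base case (``$c(F\langle x\rangle)=1$ forces $F = \set{\set{x},\set{\ol{x}}}$'') is in fact \emph{not} correct as written: taking $F := \set{\set{\ol{x}}} \cup (\set{\set{x}} \cor G)$ for any $G \in \Uclash$ without an fs-pair (e.g.\ $G = \Dt{3}$) gives $F \in \Uclash$ with $F\langle x\rangle = \set{\bot}$, $F$ not fs-resolvable, and $F \ne A_1$. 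So your suspicion is well founded.

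That said, your repair via a secondary induction for the $m \ge 3$ sub-case is not closed (you say so yourself), and I believe it is also more machinery than is needed. The clean fix is to carry the invariant \emph{``$F$ has no unit clause''}, equivalently $c(F\langle x\rangle) \ge 2$, through the induction. This holds initially in the only place the lemma is used (Theorem~\ref{thm:nfsdecompo234}, where $F$ is nonsingular). It is preserved by the flip: since $D_2 := \set{\ol{y}} \addcup C$ is itself a clause of $F$, ``no unit clause'' forces $C \ne \es$, hence both flip products $C \addcup \set{\ol{x}}$ and $\set{\ol{y}} \addcup C \addcup \set{x}$ have at least two literals and the untouched clauses are unchanged. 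Under this invariant $c(F\langle x\rangle)=1$ never occurs, and when $c(F\langle x\rangle)=2$ (so $F\langle x\rangle = \set{\set{y},\set{\ol{y}}}$) the two pre-images in $F$ must both carry $\ol{x}$, giving the fs-pair $\set{\set{y,\ol{x}},\set{\ol{y},\ol{x}}}$ directly — so the induction terminates. Your $m=1,2$ sub-cases only arise when $F$ has a unit clause and become unnecessary. If the lemma is meant to hold for arbitrary $F \in \Uclash$ (including those with unit clauses), then an extra reduction is needed — e.g.\ first peel off all unit clauses by noting that such an $F$ is a full singular unit-extension, and that nfs-flips and fs-pairs transfer along Lemma~\ref{lem:fullsingfs} to the smaller cofactor — but as written neither your proof nor the paper's handles that case.
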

\begin{proof}
If $F$ has an fs-pair, we are done, and so assume that $F$ is not fs-resolvable. Let $F'$ be the clause-set obtained by assigning $x$ to true (so $F' \in \Uclashi{\delta=1}$); we do induction on $c(F')$. If $c(F') = 1$, then $F = \set{\set{x}, \set{\ol{x}}}$, and we are done, so assume $c(F') > 1$. Now $F'$ contains an fs-pair $\set{\set{y} \addcup C, \set{\ol{y}} \addcup C}$ (as was already shown in \cite{AhLi86}), and thus $F$ contains w.l.o.g.\ the nfs-pair $\set{\set{y} \addcup C \addcup \set{\ol x}, \set{\ol{y}} \addcup C}$. Performing the nfs-flip replaces these two clauses by $C \addcup \set{\ol x}, \set{\ol{y}} \addcup C \addcup \set{x}$, and so the new $F'$ has been reduced by one clause (while still in $\Uclashi{\delta=1}$). \qed
\end{proof}

\begin{theorem}\label{thm:nfsdecompo234}
  $\Ipruclashi{\delta=3} = \es$.
\end{theorem}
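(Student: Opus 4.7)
The plan is to show that every $F \in \Pruclashi{\delta=3}$ is nfs-reducible via a short case analysis. Since $F$ is clause-irreducible with $\delta(F) = 3$, Lemma \ref{lem:trivirred} forces $F$ to be nonsingular, with no unit-clause, no full variable, and no fs-pair; in particular $c(F) = n(F) + 3 \ge 4$. In each case I produce a finite sequence of nfs-flips ending in a clause-set that contains an fs-pair; by Lemma \ref{lem:2subpfac} that fs-pair is a clause-factor of size $2$, and since $c(F) \ge 4$ it is nontrivial, certifying nfs-reducibility of $F$.

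\textbf{Case A}, a $(2,2)$-variable exists: if some $v \in \var(F)$ satisfies $\ldeg_F(v) = \ldeg_F(\ol v) = 2$, Lemma \ref{lem:22fns} yields, via at most one nfs-flip, a clause-set containing an fs-pair, which settles this case.

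\textbf{Case B}, no $(2,2)$-variable: every variable then has $\min(\ldeg_F(v),\ldeg_F(\ol v)) \ge 2$ and $\max(\ldeg_F(v),\ldeg_F(\ol v)) \ge 3$, so $\vdeg_F(v) \ge 5$ for all $v \in \var(F)$. The strategy is to apply Lemma \ref{lem:createfs}: it suffices to produce a literal $x \in \lit(F)$ whose assignment to true drops the deficiency to exactly $1$. A direct count shows this is equivalent to $c(F_x) = s_x + 3$, where $s_x$ is the number of variables $\ne \var(x)$ occurring only in $F_x$. I would combine the Kraft identity $\sum_{C \in F} 2^{-\abs C} = 1$ with the counting relation $\sum_{C \in F}\abs C = \sum_{v \in \var(F)}\vdeg_F(v) \ge 5 n(F)$ to constrain the clause-size and literal-degree distribution, and then use Corollary \ref{cor:Smuonly11sgen} together with the prohibition of clause-factors of size $\le 4$ from Lemma \ref{lem:smallfactors} to locate a low-degree literal $x$ whose set of $x$-clauses does not ``hide'' exclusive extra variables; this pins down $c(F_x) = s_x + 3$. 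Lemma \ref{lem:createfs} then delivers, after a sequence of nfs-flips, a clause-set containing an fs-pair.

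The \textbf{main obstacle} is the combinatorial work in Case B: establishing that every clause-irreducible, nonsingular UHIT of deficiency $3$ without a $(2,2)$-variable contains a literal whose single assignment drops the deficiency to exactly $1$. This requires balancing the Kraft equation against the variable-degree lower bound $\vdeg_F(v) \ge 5$ and exploiting the absence of small clause-factors to prevent an assignment from accidentally deleting too many or too few variables; some residual small configurations may need a bespoke short sequence of nfs-flips before Lemma \ref{lem:createfs} can be applied directly.
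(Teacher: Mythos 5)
Your two-case split (existence of a $(2,2)$-variable vs.\ not) is exactly the structure of the paper's proof, and Case A handled via Lemma~\ref{lem:22fns} is correct and identical. The genuine gap is in Case B, which you correctly flag as the main obstacle. You observe that all variables now have $\vdeg_F(v) \ge 5$ and then sketch a plan to locate a literal $x$ with $\ldeg_F(x) = s_x + 3$ using Kraft/counting arguments, but that plan is not carried out and, as stated, it is not clear it closes the case: the Kraft identity plus $\sum_v \vdeg_F(v) \ge 5n(F)$ alone do not obviously isolate a literal with the right degree, and ``some residual small configurations'' is exactly the hole.

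The paper instead closes Case B in one step by importing the upper bound on the minimum variable degree for minimally unsatisfiable clause-sets of deficiency~$3$ from \cite[Theorem 15]{KullmannZhao2011Bounds}: together with $\vdeg_F(v) \ge 5$ this forces a variable with $\vdeg_F(v) = 5$, i.e.\ with literal degrees $(2,3)$. Take $x := v$ with $\ldeg_F(v) = 3$. Then (and this is the piece your $s_x$-analysis is circling around, but which becomes automatic here) no variable other than $v$ can occur only in $F_v$: that variable would have var-degree at most $3 < 4$, contradicting nonsingularity. Hence $s_x = 0$, assigning $v$ to true drops the deficiency to exactly $1$, and Lemma~\ref{lem:createfs} applies directly. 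So your route is the paper's route, but you are missing the single key external ingredient (the minimum-var-degree bound from the cited theorem) that makes Case B immediate; without it, the counting argument you gesture at would have to be developed essentially from scratch, which is a nontrivial undertaking and not just ``bespoke small configurations.''

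One further small correction: you invoke Lemma~\ref{lem:trivirred} to conclude $F$ has no fs-pair and hence $c(F) \ge 4$; that is fine, but you do not actually need $c(F) \ge 4$ to certify nontriviality of the resulting fs-pair as a clause-factor, since by Lemma~\ref{lem:trivirred} the only clause-irreducible UHIT with an fs-pair is $A_1$, which has deficiency $1 \ne 3$, so $F \ne A_1$ and any fs-pair reached by nfs-flips is automatically a nontrivial factor of a clause-set with more than two clauses.
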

\begin{proof}
Consider $F \in \Pruclashi{\delta=3}$, and assume that $F$ is nfs-irreducible. Consider some $v \in \var(F)$ with minimal $\vdeg_F(v)$. So $\vdeg_F(v) \in \set{4,5}$ (using Corollary \ref{cor:uhit3fs} and \cite[Theorem 15]{KullmannZhao2011Bounds}), and by Lemma \ref{lem:22fns} we have indeed $\vdeg_F(v) = 5$. W.l.o.g.\ $\ldeg_F(v) = 3$, contradicting Lemma \ref{lem:createfs} with $x = v$. \qed
\end{proof}

By Theorem \ref{thm:equivconjnfs}:
\begin{corollary}\label{cor:NV3}
  $\maxnhitdef(3) = 4 \cdot 3 - 5 = 7$.
\end{corollary}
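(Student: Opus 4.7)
The corollary is a direct consequence of the machinery assembled in the excerpt, so my plan is essentially to chain together Theorem~\ref{thm:equivconjnfs} with $k=3$ and Theorem~\ref{thm:nfsdecompo234}, together with the base case for $k=2$ and the lower-bound construction from Lemma~\ref{lem:Nkinc}. There is no new content to produce; the task is simply to verify that the inductive hypothesis of Theorem~\ref{thm:equivconjnfs} is available and that the $\Ipruclash$-reduction kills the problem.

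First I would check the base of the induction underlying Theorem~\ref{thm:equivconjnfs}. Example~\ref{exp:uhit12} gives $\maxnhitdef(2) = 3$, and $3 = 4 \cdot 2 - 5$, so the hypothesis ``$\fa\, k' \in \NN_{\ge 3} : k' < k \Ra \maxnhitdef(k') = 4k'-5$'' for $k=3$ is vacuous (the quantification is empty), and the formula $4k-5$ also matches at $k=2$. Hence Theorem~\ref{thm:equivconjnfs} is applicable at $k=3$ and asserts that $\maxnhitdef(3) = 4 \cdot 3 - 5 = 7$ is equivalent to the statement $n(F) \le 7$ for all $F \in \Ipruclashi{\delta=3}$.

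Next I invoke Theorem~\ref{thm:nfsdecompo234}, which says $\Ipruclashi{\delta=3} = \es$. Thus the universal statement $\fa\, F \in \Ipruclashi{\delta=3} : n(F) \le 7$ is vacuously true, giving $\maxnhitdef(3) \le 7$. For the matching lower bound, I apply Lemma~\ref{lem:Nkinc} with $m = 2$, which yields $K_2 \in \Uclashnsi{\delta=3}$ with $n(K_2) = 3 + 1 \cdot 4 = 7$, so $\maxnhitdef(3) \ge 7$. Combining the two inequalities gives the claimed equality.

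There is essentially no obstacle at this stage of the argument, because all the substantive work has already been done earlier: Theorem~\ref{thm:equivconjnfs} collapsed the problem to $\Ipruclash$, and Theorem~\ref{thm:nfsdecompo234} showed that the relevant slice is empty. The only thing to be mildly careful about is making sure the inductive hypothesis of Theorem~\ref{thm:equivconjnfs} is correctly handled at its boundary ($k=3$ with no strictly smaller $k' \ge 3$), which is why the verification of $\maxnhitdef(2) = 3$ via Example~\ref{exp:uhit12} is worth spelling out.
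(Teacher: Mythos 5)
Your proposal is correct and follows essentially the same route the paper takes: the paper's proof is literally ``By Theorem~\ref{thm:equivconjnfs}'' (together with Theorem~\ref{thm:nfsdecompo234}), and you invoke precisely those two results. The small extra steps you add are sensible bookkeeping rather than a different argument: you note that the stated hypothesis of Theorem~\ref{thm:equivconjnfs} is vacuous at $k=3$, but also check $\maxnhitdef(2)=3$, which is in fact what the proof of Theorem~\ref{thm:equivconjnfs} relies on via Theorem~\ref{thm:uppboundNV}, Part~\ref{thm:uppboundNV3} (whose hypothesis ranges over $k'\ge 2$); and you supply the matching lower bound $\maxnhitdef(3)\ge 7$ from Lemma~\ref{lem:Nkinc}. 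Strictly speaking Theorem~\ref{thm:equivconjnfs} already asserts the equality (not just the upper bound), so your phrase ``giving $\maxnhitdef(3)\le 7$'' slightly under-reads its conclusion, but since you then derive the lower bound explicitly, the argument is complete and, if anything, more self-contained than the paper's one-line proof.
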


\section{Conclusion and outlook}
\label{sec:concl}

We proved the strong form of the Finiteness Conjecture (FC, Conjecture \ref{con:concfinhitstr}) for deficiency $k=3$ (Corollary \ref{cor:NV3}), and developed on the way new tools for understanding (hitting) clause-sets:
\begin{itemize}
\item Full subsumption (fs) resolution and full subsumption (fs) extension (Definition \ref{def:fullsubres}): new aspects of one of the oldest methods in propositional logic (at least since \cite{Boole1854Gesetzeb}).
\item Singular variables and the singularity index (Sections \ref{sec:singvar}, \ref{sec:uhitnsingvar}): simple variables and their elimination and introduction.
\item \emph{Clause-factors, clause-factorisations, irreducible clause-sets} (Section \ref{sec:redsclscl}): generalising fs-resolution and singular variables through a structural approach.
\item \emph{Nearly full subsumption (nfs) resolution, nfs-irreducible clause-sets} (Section \ref{sec:Subsumptionflips}): extending the reach of clause-factorisations.
\end{itemize}
The proof of Corollary \ref{cor:NV3} works by the general reduction to the nfs-irreducible case (Theorem \ref{thm:equivconjnfs}), where there are no such cases for deficiencies up to $3$ (Theorem \ref{thm:nfsdecompo234}). Future steps are the determination of $\Uclashnsi{\delta=3}$ and the proof of FC for $k=4$. We believe clause-irreducible clause-sets are a valuable tool, and a fundamental question here is about a kind of prime-factorisation of UHITs into clause-irreducible clause-sets.

\bibliographystyle{plainurl}

\newcommand{\noopsort}[1]{}

\end{document}